\documentclass[12pt]{article}
\setlength{\evensidemargin}{0.0cm}
\setlength{\oddsidemargin}{0.0cm}
\setlength{\topmargin}{-1.5cm}
\setlength{\textheight}{8.7in}
\setlength{\textwidth}{16.3cm}
\setlength{\parindent}{0mm}
\setlength{\parskip}{\medskipamount}

\usepackage{amsmath}
\usepackage{amssymb}
\usepackage{amsthm}
\usepackage{graphicx}
\usepackage{lineno}
\usepackage{float}
\usepackage[all]{xy}
\usepackage{mathrsfs}

\usepackage{xcolor}
\usepackage{tikz}
\usetikzlibrary{tqft}

\newtheorem{theorem}{Theorem}
\newtheorem{lemma}{Lemma}

\newtheorem{definition}{Definition}
\newtheorem{proposition}{Proposition}


\newcommand{\ad}{{\rm ad}}

\newcommand{\Det}{{\rm Det}}

\newcommand{\Cl}{{\rm Cl}}
\newcommand{\ch}{{\rm ch}}

\newcommand{\End}{{\rm End}}

\newcommand{\Hom}{{\rm Hom}}

\newcommand{\Index}{{\rm Ind}}

\newcommand{\GL}{{\rm GL}}

\newcommand{\SL}{{\rm SL}}
\newcommand{\SO}{{\rm SO}}
\newcommand{\SU}{{\rm SU}}

\newcommand{\Spin}{{\rm Spin}}

\newcommand{\A}{\boldsymbol{\mathcal{A}}}

\newcommand{\mfF}{\mathfrak{F}}

\newcommand{\mfM}{\mathfrak{M}}
\newcommand{\mfO}{\mathfrak{O}}
\newcommand{\mfP}{\mathfrak{P}}

\newcommand{\mfp}{\mathfrak{p}}

\newcommand{\mbA}{\mathbf{A}}

\newcommand{\mbC}{\mathbf{C}}

\newcommand{\lra}{\longrightarrow}

\begin{document}

\title{\bf Sequential measurements, TQFTs, \\
and TQNNs}

\author{{Chris Fields$^a$, James F. Glazebrook$^{b,c}$ and Antonino Marcian\`{o}$^{d,e}$}\\ \\
{\it$^a$ 23 Rue des Lavandi\`{e}res, 11160 Caunes Minervois, FRANCE}\\
{fieldsres@gmail.com}\\
{ORCID: 0000-0002-4812-0744}\\
{\it$^b$ Department of Mathematics and Computer Science,} \\
{\it Eastern Illinois University, Charleston, IL 61920 USA} \\
{\it$^c$ Adjunct Faculty, Department of Mathematics,}\\
{\it University of Illinois at Urbana-Champaign, Urbana, IL 61801 USA}\\
{jfglazebrook@eiu.edu}\\
{ORCID: 0000-0001-8335-221X}\\
{\it$^d$ Center for Field Theory and Particle Physics \& Department of Physics} \\
{\it Fudan University, Shanghai, CHINA} \\
{marciano@fudan.edu.cn} \\
{\it$^e$ Laboratori Nazionali di Frascati INFN, Frascati (Rome), ITALY} \\
{ORCID: 0000-0003-4719-110X}
}

\maketitle

{\bf Abstract} \\
We introduce novel methods for implementing generic quantum information within a scale-free architecture. For a given observable system, we show how observational outcomes are taken to be finite bit strings induced by measurement operators derived from a holographic screen bounding the system. In this framework, measurements of identified systems with respect to defined reference frames are represented by semantically-regulated information flows through distributed systems of finite sets of binary-valued Barwise-Seligman classifiers. Specifically, we construct a functor from the category of cone-cocone diagrams (CCCDs) over finite sets of classifiers, to the category of finite cobordisms of Hilbert spaces. We show that finite CCCDs provide a generic representation of finite quantum reference frames (QRFs). Hence the constructed functor shows how sequential finite measurements can induce TQFTs. The only requirement is that each measurement in a sequence, by itself, satisfies Bayesian coherence, hence that the probabilities it assigns satisfy the Kolmogorov axioms.  We extend the analysis so develop topological quantum neural networks (TQNNs), which enable machine learning with functorial evolution of quantum neural 2-complexes (TQN2Cs) governed by TQFTs amplitudes, and resort to the Atiyah-Singer theorems in order to classify topological data processed by TQN2Cs. We then comment about the quiver representation of CCCDs and generalized spin-networks, a basis of the Hilbert spaces of both TQNNs and TQFTs. We finally review potential implementations of this framework in solid state physics and suggest applications to quantum simulation and biological information processing.

\tableofcontents

\section{Introduction}

Let $U$ be an isolated, finite-dimensional quantum system and consider a fixed bipartite decomposition $U = AB$ that induces a fixed interaction $H_{AB} = H_U - (H_A + H_B)$, and assume that $H_{AB}$ is weak enough that $|AB(t) \rangle$ can be considered separable, i.e. $|AB(t) \rangle = |A(t) \rangle |B(t) \rangle$ over the entire time period of interest.  In this case, we can choose basis vectors $|i^k \rangle$ so that:

\begin{equation} \label{ham}
H_{AB} = \beta_k K_B\, T_k \sum_i^N \alpha^k_i M^k_i,
\end{equation}
\noindent
where $K_B$ stands for the Boltzmann constant, $T$ for the absolute temperature of the environment, $k =~A$ or $B$, the $M^k_i$ are $N$ Hermitian operators with eigenvalues in $\{ -1,1 \}$, the $\alpha^k_i \in [0,1]$ are such that $\sum^N_i \alpha^k_i = 1$, and $\beta_k \geq \ln 2$ is an inverse measure of $k$'s thermodynamic efficiency that depends on the internal dynamics $H_k$ --- see e.g. Refs.~\cite{fm:19, fm:20, fg:20, fgm:21, addazi:21}.  For fixed $k$, the operators $M^k_i$ clearly must commute, i.e. $[M^k_i, M^k_j] = M^k_i M^k_j - M^k_j M^k_i = 0$ for all $i, j$; hence $H_{AB}$ is swap-symmetric under the permutation group $S_N$ for each $k$.  We can, therefore, write $N = \dim(H_{AB})$, i.e. the eigenvalues $H_{AB}$ can be encoded by $2^N$ distinct $N$-bit strings.  Via the Holographic Principle \cite{hooft:83, susskind:95, bousso:02, almheiri:21}, we can consider $H_{AB}$ to be defined at a finite boundary $\mathscr{B}$ that encodes, at each instant, one of these $2^N$ bit strings \cite{fm:20, fgm:21, addazi:21}.  We assume a discrete topology for $\mathscr{B}$, with each point encoding one bit; in a widely accepted geometric picture, a ``point'' on $\mathscr{B}$ corresponds to a 4$L_P^2$ pixel, with $L_P$ denoting the Planck length.

As discussed in \cite{fm:20, fgm:21, addazi:21}, we can regard $\mathscr{B}$ as implemented by an array $q_i$ of $N$ mutually-noninteracting qubits, and regard each of the operators $M^k_i$ as implemented by a $z$-spin operator $s_{z(k,i)}$, where the notation $z(k,i)$ indicates that $A$ and $B$ each have ``free choice'' of $z$ direction for each qubit $q_i$.  In this case, we can associate a $2N$-dimensional space $\mathcal{H} = \mathcal{H}_A \bigotimes\mathcal{H}_B$ with $\mathscr{B}$; the components $\mathcal{H}_A$ and $\mathcal{H}_B$ of $\mathcal{H}$ are $N$-dimensional Hilbert spaces spanned by basis vectors $\{ z(A,i) \}$ and $\{ z(B,i) \}$, respectively.  In this representation, the operators $M^k_i$ act on the Hilbert space $\mathcal{H}_k$, with $k =~A$ or $B$ as in Eq. \eqref{ham}.  Permutations $q_i \rightarrow q_j$ of qubits on $\mathscr{B}$ correspond to permutations of pairs $z(A,i), z(B,i) \rightarrow z(A,j), z(B,j)$ of basis vectors in $\mathcal{H}$.  These correspond in turn to permutations $M^A_i, M^B_i \rightarrow M^A_j, M^B_j$ of pairs of measurement operators and hence to permutations of pairs of bit strings encoding the current eigenvalue of $H_{AB}$; such permutations effectively reset the zero point of the energy scale and so have no physical consequence, consistent with the $S_N$ symmetry of $H_{AB}$.

We now adopt the perspective of system $A$ as an ``observer'' and consider observational outcomes, i.e. bit strings in $\{ 0,1 \}^N$, obtained by deploying the $M^A_i$ on $\mathcal{H}_A$, i.e. on $\mathscr{B}$ viewed as the boundary of $A$.  Given the above, we can represent, from $A$'s perspective, the action of the evolution operator $\mathcal{P}_U(t) = \exp(-(\imath/\hbar)H_U(t))$ from $t = i \rightarrow t = k$ by requiring the following diagram to commute:

\begin{equation} \label{prop}
\begin{gathered}
\begin{tikzpicture}
\node at (0,2) {$\mathscr{B}$};
\node at (-0.6,1) {$f_{\mathscr{B}}(i)$};
\draw [thick, ->] (0,0.3) -- (0,1.7);
\node at (0.1,0) {$\{ 0,1 \}^N$};
\draw [thick, ->] (0,-0.3) -- (0,-1.7);
\node at (-0.7,-1) {$f_{\mathcal{H}_A}(i)$};
\node at (0,-2) {$\mathcal{H}_A$};
\node at (2,2.3) {$\mathrm{Id}$};
\draw [thick, ->] (0.4,2) -- (3.7,2);
\node at (2,1.3) {$\mathcal{P}_U(t)$};
\draw [thick, ->] (0.2,1) -- (3.8,1);
\node at (2,0.3) {$\mathrm{Id}$};
\draw [thick, ->] (0.8,0) -- (3.3,0);
\node at (2,-0.7) {$\mathcal{P}_U(t)$};
\draw [thick, ->] (0.2,-1) -- (3.8,-1);
\node at (2,-1.7) {$\mathrm{Id}$};
\draw [thick, ->] (0.4,-2) -- (3.7,-2);
\node at (4,2) {$\mathscr{B}$};
\node at (4.6,1) {$f_{\mathscr{B}}(k)$};
\draw [thick, ->] (4,0.3) -- (4,1.7);
\node at (4.1,0) {$\{ 0,1 \}^N$};
\draw [thick, ->] (4,-0.3) -- (4,-1.7);
\node at (4.7,-1) {$f_{\mathcal{H}_A}(k)$};
\node at (4,-2) {$\mathcal{H}_A$};
\end{tikzpicture}
\end{gathered}
\end{equation}
\noindent
where the maps $f_{\mathscr{B}}$ and $f_{\mathcal{H}_A}$ encode $N$-bit strings as eigenvalues of $H_{AB}$ onto $\mathscr{B}$ and as normalized vectors into $\mathcal{H}_A$, respectively, and Id indicates the (time- and encoding-independent) Identity maps on $\mathscr{B}$, $\{ 0,1 \}^N$, and $\mathcal{H}$.  We can, in other words, consider all the time dependence of $\mathcal{P}_U(t)$ to be in the encoding functions $f_{\mathscr{B}}$ and $f_{\mathcal{H}}$.  Hence we can consider $\mathcal{P}_U(t)$ as implementing a (quantum) computation on (encoded) bit strings in $\{ 0,1 \}^N$.

We will be interested in $A$'s sequential observations of a sector $S$ of $\mathscr{B}$ with dimension $\dim(S) = n \ll N$.  We will write the subset of operators acting on $S$ as $\{ M^S_j \} \subsetneq \{ M^A_i \}$.  From $A$'s perspective, $S$ is an (apparent, $A$-relative) ``system'' with an (effective) Hilbert space $\mathcal{H}_S$ with $\dim(\mathcal{H}_S) = n$ on which the operators $M^S_j$ act.

For $A$'s sequential observations of $S$ to be physically meaningful, they must be comparable.  If the intervals between observations are short, comparability requires that two conditions are met: 1) $S$ must have some ``reference'' component $R$ with a state $|R \rangle$ or sampled density $\rho_R$ that remains invariant for the entire duration of observations, and 2) measurements of the time varying state $|P(t) \rangle$ or sampled density $\rho_P(t)$ of the ``pointer'' component $P$ of $S$ must be assigned well-defined units as part of the measurement process.  In this case, we have an effective Hilbert space decomposition $\mathcal{H}_S = \mathcal{H}_{R} \bigotimes \mathcal{H}_{P}$.  As discussed in \cite{fm:20, fg:20, fgm:21}, these requirements for comparability of sequential measurements are met if $A$ deploys a quantum reference frame (QRF) \cite{aharonov:84, bartlett:07} $\mathbf{S} = \mathbf{RP}$, where the component $\mathbf{R}$ identifies $S$ by measuring the invariant state $|R \rangle$ or sampled density $\rho_R$ and the component $\mathbf{P}$ measures the time-varying state $|P(t) \rangle$ or sampled density $\rho_P(t)$ and assigns the relevant units.  Under these conditions the QRFs $\mathbf{R}$ and $\mathbf{P}$ clearly must commute and the joint state $|S(t) \rangle = |RP(t) \rangle$ or sampled density $\rho_S(t) = \rho_{RP}(t)$ must be separable, i.e. $|S(t) \rangle = |R \rangle |P(t) \rangle$ or $\rho_S(t) = \rho_R \rho_P(t)$, respectively.  Note that ``systems'' $S$, $R$, and $P$ are defined, and hence this separability requirement is stated, strictly relative to $A$.  We can in this case simply define $S =_{def} \text{dom}(\mathbf{S})$, $R =_{def} \text{dom}(\mathbf{R})$, $P =_{def} \text{dom}(\mathbf{P})$ and require $S = RP$.

We show in what follows that any such sequential observations induce, with no further assumptions, a topological quantum field theory (TQFT) of the observed sector $S$.  We do this by establishing three constructive results. First, we construct in \S \ref{QRF2CCCD} an explicit representation of QRFs as category-theoretic structures, specifically cone-cocone diagrams (CCCDs) with Barwise-Seligman \cite{barwise:97} classifiers as nodes and morphisms (``infomorphisms'') between them as directed links \cite{fg:19a}.  This representation generalizes hierarchical Bayesian networks, standard multi-layer artificial neural networks (ANNs), and variational auto-encoders (VAEs).  It has the advantage of making quantum noncontextuality manifest as a diagram commutativity requirement \cite{fg:22}.  Second, we construct in \S \ref{CCCD-cat} a category $\mathbf{CCCD}$ in which the objects are CCCDs representing QRFs, and show how morphisms in this category represent sequential measurements.  Third, we construct in \S \ref{CCCD2cobord} a functor $\mathfrak{F}: \mathbf{CCCD} \rightarrow \mathbf{Cobord}$, the category of cobordisms.  As a TQFT can be defined as a functor $\mathbf{Cobord} \rightarrow \mathbf{Hilb}$, the category of (finite-dimensional) Hilbert spaces \cite{atiyah:88} (see also e.g. \cite{quinn:95}), we have that sequential observations induce TQFTs, in particular, sequential observations of $S$ induce TQFTs with copies of the Hilbert space $\mathcal{H}_S$ as boundaries.
In \S \ref{CCCD-TQNN} we establish a link between sequential measurements and topological quantum neural networks (TQNNs), a framework that encode deep neural networks (DNNs) in their semi-classical limit, moving from the relation between TQFT and TQNN constructed in \cite{marciano:20}. We then show how topological data processed by the functorial evolution of TQNNs, encoded by topological quantum neural 2-complexes (TQN2Cs), can be classified resorting to the Atiyah-Singer theorems, when spin manifolds are taken into account. In \S \ref{app} we discuss several applications of this framework, from gauge and effective field theories to quiver representations of TQFTs. We draw analogies with some quantum gravity models, including the spin-foam representation of functors over the Hilbert space of discretised geometries and quiver representations of generalized spin-network states in noncommutative geometry. We review potential instantiations of these theoretical structures in notable systems of state solid physics, including topological insulators, phases of matter described by string-nets and trapped ions and atoms. Finally, in \S \ref{con} we spell out our conclusions, prospect future developments and illustrate possible implementations in topological quantum computation and biological information processing.
Accordingly, much of what is accounted for here is in line with components of the compelling program as documented in \cite{chen:17}.

Before proceeding, we mention two illustrative examples.  First, suppose that $A$ observes $S$ at $A$-relative time $t_A = i$, and then later, at some $t_A = k$ observes distinct, non-overlapping (topologically disconnected) systems/sectors $S_1$ and $S_2$, and infers that $S$ underwent a ``fission'' process, denoted by $S \rightarrow S_1 S_2$, corresponding to a Hilbert-space decomposition $\mathcal{H}_S = \mathcal{H}_{S_1} \bigotimes \mathcal{H}_{S_2}$.  This fission/decomposition process may leave the joint system in an entangled state, i.e. $|S_1 S_2 \rangle \neq |S_1 \rangle |S_2 \rangle$.  Parametric down-conversion $\gamma \mapsto \gamma_1, \gamma_2$ of a photon to an entangled photon pair provides an example of this type.  The measurements made in such an experiment --- particularly the measurements of the reference sector $R$ --- clearly must support the inference of a continuous process, and hence a TQFT, mapping $S \rightarrow S_1 \times S_2$.

As a second example, consider an apparatus $S$ with ``settings'' that allow measuring alternative pointer components $P$ and $Q$, e.g. spin orientations $s_z$ and $s_x$, for which the relevant QRFs $\mathbf{P}$ and $\mathbf{Q}$ do not commute.  In this case, the effective Hilbert space decomposition is $\mathcal{H}_S = \mathcal{H}_{R} \otimes \mathcal{H}_{P} = \mathcal{H}_{R} \otimes \mathcal{H}_{Q}$, i.e. $\mathcal{H}_P = \mathcal{H}_Q$.  The measurements made must support the inference of a continuous process, and hence a TQFT, mapping $RP \rightarrow RQ$, i.e. implementing the basis rotation that distinguishes $P$ from $Q$.

\section{Representing QRFs as CCCDs} \label{QRF2CCCD}

\subsection{QRFs as computations} \label{QRF-comp}

John Wheeler famously characterized experiments as ``questions to Nature'' with yes/no answers \cite{wheeler:83}.  Taking this remark literally allows a completely general characterization of QRFs as physically implemented computations.  To begin, note that any QRF plays complementary roles in measuring and preparing quantum states.  The 10 cm mark on a meter stick, for example, can both report that a rod is 10 cm long and guide a cut that renders it 10 cm long; the $\uparrow$ direction on a polarizing filter can, similarly, both measure and prepare polarized light.  As the input to any finite QRF can be represented as a finite bit string per the discussion above, we can represent the action of any QRF as:

\begin{equation} \label{QRF1}
<\mbox{bit-string}> ~ \xrightarrow{Measure} ~\mathrm{outcome ~value}~ \xrightarrow{Prepare} ~ <\mbox{bit-string}>^\prime.
\end{equation}
\noindent
No generality is lost, moreover, by regarding any finite QRF as a collection of $m$ ``elementary'' or ``single pointer position'' QRFs, each of which reports or prepares a single value, i.e. each of which implements one yes/no question.  For simplicity, we assume these questions are orthogonal, i.e. that elementary QRFs implement projective measurements. A meter stick, for example, can be regarded as a collection of pointer positions, each indicated by a tick mark.  Each tick mark acts as an elementary QRF that reports the outcome value ``1'' for exactly one length (up to some finite resolution that determines $m$, e.g. $\pm 1$ mm for $m = 1,001$) and reports the outcome value ``0'' for all other lengths.  The tick mark at 10 cm, for example, reports ``1'' for lengths of $10.0 \pm 0.1$ cm and ``0'' for all other lengths, and can be used to either measure or prepare 10 cm lengths .  We can represent the action of such an elementary QRF as:

\begin{equation} \label{QRF2}
\{0, 1 \}^m \xrightarrow{\{\overrightarrow{g_i} \}} ~\{0, 1\}~ \xrightarrow{\{\overleftarrow{g_i} \}} \{0, 1 \}^m,
\end{equation}
\noindent
where the $\overrightarrow{g_i}$ implement measurement and the $\overleftarrow{g_i}$ implement preparation, and such that the following conditions hold:

\begin{equation} \label{QRF3a}
\exists ! e \in \{0, 1 \}^m, \overrightarrow{g} \in \{\overrightarrow{g_i} \}, ~\mathrm{and}~ \overleftarrow{g} \in \{\overleftarrow{g_i} \}
 ~\mathrm{such ~that}~ \overrightarrow{g} : e \mapsto 1 ~\mathrm{and}~ \overleftarrow{g}: 1 \mapsto e,
\end{equation}
\noindent
and

\begin{equation} \label{QRF3b}
\begin{aligned}
&\forall e^{\prime} \neq e \in \{0, 1 \}^m, ~ \exists ! \overrightarrow{g^{\prime}} \neq \overrightarrow{g} \in \{\overrightarrow{g_i} \}, ~\mathrm{and}~ \overleftarrow{g^{\prime}} \neq \overleftarrow{g} \in \{\overleftarrow{g_i} \} \\
 &~\mathrm{such ~that}~ \overrightarrow{g^{\prime}} : e^{\prime} \mapsto 0 ~\mathrm{and}~ \overleftarrow{g^{\prime}}: 0 \mapsto e^{\prime}.
 \end{aligned}
\end{equation}
\noindent
The unique value $e$ is the ``pointer value'' for the elementary QRF, represented as an $m$-bit string.  A nonelementary QRF that acts on $m$-bit strings is simply an ordered sequence of $m$ elementary QRFs acting on $m$-bit strings, subject to the uniqueness restriction that:

\begin{equation} \label{QRF4}
\forall j, ~1 \leq ~j \leq ~m,  ~ \exists ! ~e_j ~\mathrm{satisfying ~Eq. ~\eqref{QRF3a} ~and ~\eqref{QRF3b}}.
\end{equation}
\noindent
We can, therefore, define:

\begin{definition} \label{QRF-def}
A (deterministic) {\em quantum reference frame (QRF)} is a finite physical system $\mathbf{X}$ that implements some number $n \leq m$ of finite sets $\{ \overrightarrow{g_i} \}_j$ and $\{ \overleftarrow{g_i} \}_j$ of functions satisfying Eq. \eqref{QRF2} - \eqref{QRF4}.
\end{definition}
\noindent
A QRF is {\em elementary} if $n = 1$, i.e. the QRF measures and prepares one single value $e$.  A QRF is {\em fine-grained} if $n = m$; otherwise it is {\em coarse-grained}.  Elementary QRFs can be viewed as maximally coarse-grained.  We will refer, independently of $n$, to the number $m$ of bits in the input (or output) bit strings of a QRF $\mathbf{X}$ as its ``dimension'' (or ``resolution'') and write $m = \dim(\mathbf{X})$ for reasons that will become clear.  An $m$-dimensional QRF can report, given Eq. \eqref{QRF2} - \eqref{QRF4}, at most $m$ pointer values; this limit is achieved if the QRF is fine-grained.  We will in what follows assume all QRFs are either elementary or fine-grained, and that they are deterministic.  Relaxing Definition \ref{QRF-def} to nondeterministic QRFs requires mapping each bit string in $\{0, 1 \}^m$ to and from a probability value; details of this relaxation are provided in Appendix 1.
\footnote{As reported in \cite{fg:20,fgl-neurons} an example of a biological QRF is the Chey-Y system employed by bacterial chemotaxis as exhibited in e.g. {\em E Coli} where concentrations of phosphorylated Che-Y can influence ``swimming'' or ``tumbling'' analogous to a spin-up/spin-down mechanism \cite{micali:16}.
Similarly, the QRF formulism may be applicable to e.g. the hypothesis of the Posner molecule $\rm{Ca}_9(\rm{PO}_4)_9$ as an agent of enzyme catalyzed qubit computation involving nuclear spin entanglement \cite{fisher:15}. An explicit representation of a QRF in relationship to Hawking radiation is presented in \cite{fgm:21}.}

\subsection{CCCDs as specifications of computations} \label{CCCD-comp}

Having defined QRFs as computations, we now introduce CCCDs of Barwise-Seligman classifiers as generic specifications of computations.  A Barwise-Seligman \cite{barwise:97} classifier is a relation between tokens and types in some language:

\begin{definition}\label{class-def-1}
A {\em classifier} $\mathcal{A}$ is a triple $\langle Tok(\mathcal{A}), Typ(\mathcal{A}), \models_{\mathcal{A}} \rangle$ where $Tok(\mathcal{A})$ is a set of ``tokens'', $Typ(\mathcal{A})$ is a set of ``types'', and $\models_{\mathcal{A}}$ is a ``classification'' relating tokens to types.
\end{definition}
\noindent
The classification $\models_{\mathcal{A}}$ can, in general, be valued in any set $K$ without assumed structure; for simplicity, we will consider only binary classifications.  Morphisms between classifiers are called ``infomorphisms'' and are defined as:

\begin{definition}\label{class-def-2}
Given two classifiers $\mathcal{A} = \langle Tok(\mathcal{A}), Typ(\mathcal{A}), \models_{\mathcal{A}} \rangle$ and $\mathcal{B} = \langle Tok(\mathcal{B}), Typ(\mathcal{B}), \models_{\mathcal{B}} \rangle$, an {\em infomorphism} $f: \mathcal{A} \rightarrow \mathcal{B}$ is a pair of maps $\overrightarrow{f}: Tok(\mathcal{B}) \rightarrow Tok(\mathcal{A})$ and $\overleftarrow{f}: Typ(\mathcal{A}) \rightarrow Typ(\mathcal{B})$ such that $\forall b \in Tok(\mathcal{B})$ and $\forall a \in Typ(\mathcal{A})$, $\overrightarrow{f}(b) \models_{\mathcal{A}} a$ if and only if $b \models_{\mathcal{B}} \overleftarrow{f}(a)$.
\end{definition}
\noindent
This last definition can be represented schematically as the requirement that the following diagram commutes:

\begin{equation}\label{info-diagram-1}
\begin{gathered}
\xymatrix@!C=3pc{\rm{Typ}(\mathcal{A}) \ar[r]^{\overleftarrow{f}}   & \rm{Typ}(\mathcal{B}) \ar@{-}[d]^{\models_{\mathcal{B}}} \\
\rm{Tok}(\mathcal{A}) \ar@{-}[u]^{\models_{\mathcal{A}}}  & \rm{Tok}(\mathcal{B}) \ar[l]_{\overrightarrow{f}}}
\end{gathered}
\end{equation}
\noindent
An infomorphism $f: \mathcal{A} \rightarrow \mathcal{B}$ is, effectively, a map relating the semantic constraints imposed by the classification $\models_{\mathcal{A}}$ to those imposed by $\models_{\mathcal{B}}$.

Taking classifiers to be objects and infomorphisms to be arrows, we can define a category $\mathbf{Chan}$ (for ``Channel Theory'' \cite{barwise:97}) that is clearly isomorphic to the category $\mathbf{Chu}$ of Chu spaces \cite{barr:79, pratt:99a, pratt:99b} --- see \cite{fg:19a} for examples and discussion.  We will for simplicity restrict ourselves to binary classifiers as representations of yes/no questions and denote this restricted category by $\mathbf{Chan}$ in contexts without ambiguity.  It is straightforward to extend the theory to accommodate probability distributions (again see \cite{fg:19a}) --- details are provided in Appendix~1.

Consider now a collection $\mathcal{A}_i$ of $m$ binary classifiers, which we can take to operate, as an ordered array, on elements of $\{0, 1 \}^m$.  We can represent any Boolean operation on the outputs of these classifiers as a binary ``core'' classifier $\mathbf{C^\prime}$ that is the category-theoretic colimit of the underlying classifiers $\mathcal{A}_i$; the infomorphisms from the $\mathcal{A}_i$ to $\mathbf{C^\prime}$ then form a finite, commuting {\em cocone diagram} (CCD):

\begin{equation}\label{ccd-1}
\begin{gathered}
\xymatrix@C=4pc{&\mathbf{C^\prime} &  \\
\mathcal{A}_1 \ar[ur]^{f_1} \ar[r]_{g_{12}} & \mathcal{A}_2 \ar[u]_{f_2} \ar[r]_{g_{23}} & \ldots ~\mathcal{A}_k \ar[ul]_{f_k}
}
\end{gathered}
\end{equation}
\noindent
The dual of this construction is a commuting finite \emph{cone diagram} (CD) of infomorphisms on the same classifiers, where all arrows are reversed; here the ``core'' $\mathbf{D^\prime}$ is the category-theoretic limit of all possible downward-going structure-preserving maps to the classifiers $\mathcal{A}_i$.  Combining Diagram \eqref{ccd-1} with its CD dual yields a finite, commuting {\em cone-cocone diagram} (CCCD) on the single finite set of classifiers $\mathcal{A}_i$ linked by infomorphisms as depicted below:

\begin{equation}\label{cccd-1}
\begin{gathered}
\xymatrix@C=6pc{&\mathbf{C^\prime} &  \\
\mathcal{A}_1 \ar[ur]^{f_1} \ar[r]_{g_{12}}^{g_{21}} & \ar[l] \mathcal{A}_2 \ar[u]_{f_2} \ar[r]_{g_{23}}^{g_{32}} & \ar[l] \ldots ~\mathcal{A}_k \ar[ul]_{f_k} \\
&\mathbf{D^\prime} \ar[ul]^{h_1} \ar[u]^{h_2} \ar[ur]_{h_k}&
}
\end{gathered}
\end{equation}
\noindent
If the cores $\mathbf{C^\prime} = \mathbf{D^\prime}$, we can also represent the CCCD as:

\begin{equation}\label{cccd-2}
\begin{gathered}
\xymatrix@C=6pc{\mathcal{A}_1 \ar[r]_{g_{12}}^{g_{21}} & \ar[l] \mathcal{A}_2 \ar[r]_{g_{23}}^{g_{32}} & \ar[l] \ldots ~\mathcal{A}_k \\
&\mathbf{C^\prime} \ar[ul]^{h_1} \ar[u]^{h_2} \ar[ur]_{h_k}& \\
\mathcal{A}_1 \ar[ur]^{f_1} \ar[r]_{g_{12}}^{g_{21}} & \ar[l] \mathcal{A}_2 \ar[u]_{f_2} \ar[r]_{g_{23}}^{g_{32}} & \ar[l] \ldots ~\mathcal{A}_k \ar[ul]_{f_k}
}
\end{gathered}
\end{equation}
\noindent
This diagram is naturally interpreted as an automorphism $\{0, 1 \}^m \rightarrow \{0, 1 \}^m$ implemented by passage through the constraint network having $\mathbf{C^\prime}$ as its apex.  When Boolean functions are extended to support probabilities as in Appendix~1, such systems become VAEs.  They can implement arbitrary Bayesian networks as discussed in \cite{fg:22, ffgl:22}.

\subsection{Specifying QRFs by CCCDs} \label{QRF-by-CCCD}

Comparing Diagram \eqref{cccd-2} with Eq. \eqref{QRF2}, it is clear that CCCDs and QRFs ``do the same thing.''  Indeed we can now state:

\begin{theorem} \label{thm1}
Every QRF can be specified by a CCCD.
\end{theorem}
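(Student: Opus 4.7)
The plan is to construct, for an arbitrary QRF $\mathbf{X}$ of dimension $m$ with $n \le m$ pointer values $\{e_j\}_{j=1}^{n}$, an explicit CCCD of the form \eqref{cccd-2} whose input--output action on $\{0,1\}^m$ reproduces that of $\mathbf{X}$ as specified by Eqs.~\eqref{QRF2}--\eqref{QRF4}. The guiding observation is that an infomorphism (Definition~\ref{class-def-2}) already carries a two-way structure, with $\overrightarrow{f}$ on tokens and $\overleftarrow{f}$ on types going in opposite directions, paralleling the measurement/preparation pair $(\overrightarrow{g_i},\overleftarrow{g_i})$ of a QRF. This parallelism between the biconditional of Definition~\ref{class-def-2} and the uniqueness axioms \eqref{QRF3a}--\eqref{QRF4} is what makes the theorem essentially a ``packaging'' result: one need only identify the appropriate classifiers and infomorphisms and verify that the assembled diagram commutes.

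Concretely, I would first invoke the decomposition of $\mathbf{X}$ as an ordered sequence of elementary QRFs $\mathbf{X}_j$, one per pointer value $e_j$. For each $\mathbf{X}_j$ I define a binary classifier $\mathcal{A}_j = \langle\{0,1\}^m,\{0,1\},\models_{\mathcal{A}_j}\rangle$ by setting $b \models_{\mathcal{A}_j} 1$ iff $b=e_j$ and $b \models_{\mathcal{A}_j} 0$ otherwise. Each $\mathcal{A}_j$ thereby captures exactly the yes/no question that $\mathbf{X}_j$ asks of a candidate input string, and these classifiers serve as the nodes of both the top and bottom rows of diagram \eqref{cccd-2}. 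Horizontal infomorphisms $g_{jk}$ between adjacent $\mathcal{A}_j, \mathcal{A}_k$ are supplied by the identity on bit strings together with the obvious logical map on types.

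Next I would construct the apex. Take $\mathbf{C}'$ to be a binary classifier whose tokens are the pointer values $\{e_j\}_{j=1}^{n}$ (embedded in $\{0,1\}^m$) and whose types are $\{0,1\}$, with $e_j \models_{\mathbf{C}'} 1$ for every $j$. The infomorphism $f_j:\mathcal{A}_j \to \mathbf{C}'$ is defined by $\overrightarrow{f_j}: \mathrm{Tok}(\mathbf{C}') \to \mathrm{Tok}(\mathcal{A}_j),\ e_k \mapsto e_k$ and $\overleftarrow{f_j}$ the identity on $\{0,1\}$; the required biconditional $\overrightarrow{f_j}(b)\models_{\mathcal{A}_j}a$ iff $b\models_{\mathbf{C}'}\overleftarrow{f_j}(a)$ then follows directly from \eqref{QRF3a}--\eqref{QRF3b}. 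For the cone part I set $\mathbf{D}'=\mathbf{C}'$ and define $h_j:\mathbf{D}' \to \mathcal{A}_j$ using the preparation maps $\overleftarrow{g_j}$: on tokens, send $b\in\{0,1\}^m$ to the unique pointer value it matches when one exists, and on types use the identity. I would then check that (i) each elementary square is an infomorphism, reducing to \eqref{QRF3a}--\eqref{QRF4}; (ii) the assembled diagram \eqref{cccd-2} commutes; and (iii) traversing a bit string down through $\mathcal{A}_j$, across $\mathbf{C}'$, and back up reproduces the composite $\overleftarrow{g_j}\circ\overrightarrow{g_j}$ of \eqref{QRF2}.

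The main obstacle is not in writing down classifiers or maps but in correctly aligning the two directionalities. The QRF measurement map $\overrightarrow{g_j}:\{0,1\}^m\to\{0,1\}$ sends bit strings to single bits, whereas an infomorphism's $\overrightarrow{f}$ goes from tokens of the codomain back to tokens of the domain; so $\overrightarrow{g_j}$ is not literally an $\overrightarrow{f_j}$ but is encoded through the combination of $\overleftarrow{f_j}$ on types with the classification relation $\models_{\mathcal{A}_j}$. The substantive content of the theorem is precisely that the uniqueness axioms \eqref{QRF3a}--\eqref{QRF4} are exactly what is needed to make this encoding consistent across all classifiers simultaneously, so that the cone and cocone portions of the CCCD commute together. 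Extension to nondeterministic QRFs is then handled by the probabilistic generalisation of classifiers provided in Appendix~1, applied verbatim.
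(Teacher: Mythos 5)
Your overall strategy --- decompose $\mathbf{X}$ into elementary QRFs, represent each by classifier data, and glue them under a common apex --- has the same shape as the paper's argument, but the explicit classifiers and maps you write down do not satisfy Definition~\ref{class-def-2}, and this is not a repairable detail but the crux of the theorem. Take your horizontal map $g_{jk}:\mathcal{A}_j\to\mathcal{A}_k$ with $\overrightarrow{g}$ the identity on $\{0,1\}^m$: the biconditional requires $b\models_{\mathcal{A}_j}a$ iff $b\models_{\mathcal{A}_k}\overleftarrow{g}(a)$ for \emph{every} token $b$, and whether you set $\overleftarrow{g}(1)=1$ or $\overleftarrow{g}(1)=0$ this fails (in the first case at $b=e_j$, since $e_j\models_{\mathcal{A}_j}1$ but $e_j\not\models_{\mathcal{A}_k}1$; in the second at any $b\notin\{e_j,e_k\}$). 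The same problem kills your cocone: with $\mathrm{Tok}(\mathbf{C}')=\{e_1,\dots,e_n\}$ all classified to $1$ and $\overleftarrow{f_j}=\mathrm{id}$, the condition $\overrightarrow{f_j}(e_k)\models_{\mathcal{A}_j}1$ iff $e_k\models_{\mathbf{C}'}1$ is violated for every $k\neq j$ (left side false, right side true). So the diagram you assemble is not a diagram in $\mathbf{Chan}$ at all; the assertion that the biconditional ``follows directly from \eqref{QRF3a}--\eqref{QRF3b}'' is exactly where the proof breaks.

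The paper avoids this trap by working at two levels. In Lemma~\ref{lem1} each \emph{elementary} QRF is itself a CCCD over $m$ base-level classifiers, each acting on a \emph{different bit position} of the input string, so the constraints they impose are orthogonal and the colimit core (a conjunction) exists and picks out the unique $e$. The full QRF is then assembled from the $n$ elementary cores $\mathbf{C}'_j$, and the existence of the common core $\mathbf{C}$ (computing a disjunction over $j$) is not constructed by hand but guaranteed by \cite[Thm.~7.1]{fg:22}, whose hypothesis is precisely that the elementary QRFs mutually commute because they measure the same observable. Your proposal collapses each elementary QRF to a single node on token set $\{0,1\}^m$, which is where the orthogonality is lost, and it never invokes (or substitutes for) the commutativity argument that makes the apex exist. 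To repair it you would essentially have to reintroduce both ingredients, i.e.\ reproduce the paper's proof.
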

\noindent
We approach this theorem by first proving it for the special case of elementary QRFs, and then showing how to construct arbitrary QRFs.  Hence we first state:

\begin{lemma} \label{lem1}
Every elementary QRF can be specified by a CCCD.
\end{lemma}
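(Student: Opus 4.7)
The plan is to construct, for a given elementary QRF $\mathbf{X}$ with unique pointer value $e \in \{0,1\}^m$ and associated measurement/preparation maps $\overrightarrow{g}, \overleftarrow{g}$ satisfying \eqref{QRF3a}--\eqref{QRF3b}, an explicit CCCD on $m$ binary classifiers whose induced automorphism of $\{0,1\}^m$ (in the sense of the interpretation given just after Diagram~\eqref{cccd-2}) coincides with $\overleftarrow{g} \circ \overrightarrow{g}$. Since an elementary QRF reports a single yes/no outcome, it is natural to have the CCCD collapse the $m$-bit input onto a single shared core classifier.

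First, fix the underlying array. For $i = 1, \ldots, m$ take $\mathcal{A}_i = \langle \{0,1\}, \{0,1\}, \mathrm{id} \rangle$, the binary classifier whose tokens and types are both $\{0,1\}$ and whose classification is the identity relation. A word $w = (w_1, \ldots, w_m) \in \{0,1\}^m$ is then presented bit-wise to the ordered array $(\mathcal{A}_1, \ldots, \mathcal{A}_m)$, and the horizontal infomorphisms $g_{i,i+1}$ of Diagram~\eqref{cccd-2} are taken to be the identity on each $\mathcal{A}_i$. Introduce the core $\mathbf{C}' = \langle \{0,1\}, \{0,1\}, \mathrm{id} \rangle$, which is to carry the single yes/no outcome of $\mathbf{X}$.

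Second, build the cocone and cone. Define $f_i : \mathcal{A}_i \rightarrow \mathbf{C}'$ by setting, on the type side, $\overleftarrow{f_i}(1) = e_i$ and $\overleftarrow{f_i}(0) = 1 - e_i$; the token map $\overrightarrow{f_i}$ is then forced by the biconditional in Definition~\ref{class-def-2}. Dually, define $h_i : \mathbf{C}' \rightarrow \mathcal{A}_i$ by $\overleftarrow{h_i}(1) = e_i$, $\overleftarrow{h_i}(0) = 1 - e_i$ and the corresponding forced token map. Aggregated bit-wise, the cocone sends an input word $w$ to the type $1$ at $\mathbf{C}'$ iff every $w_i = e_i$, i.e.\ iff $w = e$, which reproduces clause~\eqref{QRF3a} for $\overrightarrow{g}$ and, by orthogonality of the $e'\neq e$ case, clause~\eqref{QRF3b}; the cone reconstructs $e$ from the value $1$ at the core, implementing $\overleftarrow{g}$ with the same uniqueness.

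Third, verify that what results is a CCCD of shape~\eqref{cccd-2} and that reading it top-down and back through $\mathbf{C}'$ recovers $\overleftarrow{g} \circ \overrightarrow{g}$. Commutativity of the upper and lower triangles is automatic because all $\mathcal{A}_i$ are identical copies, all $g_{ij}$ are identities, and the $f_i$, $h_i$ are defined bit-wise; the universality of $\mathbf{C}'$ as a shared limit/colimit in $\mathbf{Chan}$ is then inherited from the corresponding universality of $\{0,1\}$ in $\mathbf{Chu}$ via the isomorphism $\mathbf{Chan} \cong \mathbf{Chu}$ recalled in \S\ref{CCCD-comp}. The main obstacle is the third step: one has to reconcile the contravariance of tokens against the covariance of types in Definition~\ref{class-def-2} with the bit-wise encoding of ``all bits of $w$ match $e$,'' and check that the $f_i$ and $h_i$ really are infomorphisms rather than merely compatible set maps. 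On two-element classifiers this book-keeping is essentially tautological, so the construction succeeds for every pointer value $e$, establishing the lemma and opening the way to Theorem~\ref{thm1} by concatenating $n$ such elementary CCCDs subject to the uniqueness requirement~\eqref{QRF4}.
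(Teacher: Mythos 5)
Your overall strategy---$m$ base-level binary classifiers, one per bit position, feeding a single core that fires exactly on the pointer value $e$---is the same as the paper's, which argues at this level of generality (orthogonality of the per-bit constraints plus uniqueness of the limit/colimit) without writing the classifiers down explicitly. The difficulty is that your attempt to make the construction explicit breaks precisely at the step you flagged as ``essentially tautological.'' With $Tok(\mathbf{C}') = \{0,1\}$ and $\overrightarrow{f_i}$ forced by the biconditional of Definition \ref{class-def-2} to be the inverse of $\overleftarrow{f_i}$, one computes $\overrightarrow{f_i}(1) = e_i$ and $\overrightarrow{f_i}(0) = 1 - e_i$, so the core token $1$ is sent to the token tuple $(\overrightarrow{f_1}(1),\dots,\overrightarrow{f_m}(1)) = e$ and the core token $0$ to the bitwise complement $\bar{e}$. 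An input word $w \notin \{e, \bar{e}\}$ therefore corresponds to no token of $\mathbf{C}'$ at all: the contravariance of the token maps means a two-element core can only ``see'' two of the $2^m$ words, so the claimed behaviour ``type $1$ at $\mathbf{C}'$ iff $w = e$'' is undefined for the remaining $2^m - 2$ inputs and the diagram does not implement the map \eqref{QRF2} on all of $\{0,1\}^m$. To have core tokens ranging over all inputs you need $Tok(\mathbf{C}') \cong \{0,1\}^m$ with $\overrightarrow{f_i}$ the $i$-th projection (this is what the Barwise--Seligman sum/colimit supplies), and then the conjunction ``all bits match $e$'' must live in the local logic on the colimit rather than as the single binary type of a two-element classifier, since the infomorphism biconditional would otherwise force ``$w_i = e_i$'' to be equivalent to the core's type $1$ for every $i$ simultaneously, which fails for general $w$.

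A second, smaller defect: taking every $\mathcal{A}_i$ to be the same classifier $\langle \{0,1\}, \{0,1\}, \mathrm{id}\rangle$ and every horizontal $g_{i,i+1}$ to be the identity forces, for commutativity of the cocone triangles, $f_{i+1} = f_{i+1}\circ g_{i,i+1} = f_i$ for all $i$; but your $f_i$ is the identity pair when $e_i = 1$ and the swap when $e_i = 0$, so the diagram commutes only when $e$ is a constant string. This is fixable---let $g_{i,i+1}$ be the swap whenever $e_i \neq e_{i+1}$, or absorb $e_i$ into the classification $\models_{\mathcal{A}_i}$ so that all the $f_i$ coincide---but as written the object you construct is not a commuting CCCD for general $e$, so neither claimed property (commutativity, nor the conjunction semantics) actually holds of the explicit diagram you propose.
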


\begin{proof}
Let $\mathbf{X}$ be an elementary QRF with dimension $m$ and let $e \in \{0, 1 \}^m$ be the $m$-bit string encoding  its unique pointer value.  Consider now a CCCD in the form of Diagram \eqref{cccd-2}, with $m$ ``base level'' classifiers $\mathcal{A}_i$ mapping via $m$ infomorphisms $f_i$ to a core $\mathbf{C^\prime}$, and $m$ further infomorphisms $h_i$ mapping $\mathbf{C^\prime}$ back to the $\mathcal{A}_i$.  The only question at issue is whether Eq. \eqref{QRF3a} and \eqref{QRF3b} hold, i.e. whether the core $\mathbf{C^\prime}$ picks out a unique element $e \in \{0, 1 \}^m$.  To show this, note that the limit/colimit is unique if it exists.  Hence it picks out a unique $e$ provided it implements a conjunction of orthogonal constraints; i.e. some one of the $\mathcal{A}_i$ reports an outcome `1' and all others report `0'.  Each of the ``base level'' classifiers $\mathcal{A}_i$, however, acts on just one bit position in any input string $x \in \{0, 1 \}^m$; the constraints implemented by the $\mathcal{A}_i$ are, therefore, orthogonal, and Eq. \eqref{QRF3a} and \eqref{QRF3b} guarantee that one of the $\mathcal{A}_i$ reports an outcome `1' and all others report `0'.
\end{proof}
\noindent
It is natural to define the ``dimension'' of a CCCD representing an elementary QRF to be the number $m$ of base-level classifiers, i.e. as identical to the dimension of the QRF that it represents.

We now consider the construction of arbitrary $m$ dimensional QRFs from elementary QRFs specified as CCCDs of the form of Diagram \eqref{cccd-2}.  To simplify the notation, we will represent graphically only the CCD component of the CCCD as in Diagram  \eqref{ccd-1}; the CD component is, however, assumed in all cases to be present even though not depicted explicitly.  The duals of all depicted arrows are similarly assumed to be present though not depicted explicitly. Suppose a collection of $m$ distinct $m$-dimensional elementary QRFs $\mathbf{X}_j$, each specified by a CCCD having a core $\mathbf{C^\prime}_j$ that picks out a unique element $e_j \in \{0, 1 \}^m$ as illustrated below:

\begin{equation} \label{assembly-1}
\begin{gathered}
\begin{tikzpicture}
\node at (0,0) {$\mathbf{C^\prime}_1$};
\node at (-2,-2) {$\A_{11}$};
\node at (-0.7,-2) {$\A_{21}$};
\node at (0.6,-2) {$\dots$};
\node at (2,-2) {$\A_{m1}$};
\draw [thick, ->] (-1.7,-2) -- (-1.1,-2);
\draw [thick, ->] (-0.4,-2) -- (0.2,-2);
\draw [thick, ->] (0.9,-2) -- (1.5,-2);
\node at (-1.6,-1) {$f_{11}$};
\node at (0,-1) {$f_{21}$};
\node at (1.5,-1) {$f_{m1}$};
\draw [thick, ->] (-1.9,-1.7) -- (-0.3,-0.2);
\draw [thick, ->] (-0.7,-1.7) -- (-0.1,-0.2);
\draw [thick, ->] (1.6,-1.7) -- (0,-0.2);
\node at (5.5,0) {$\mathbf{C^\prime}_2$};
\node at (3.5,-2) {$\A_{12}$};
\node at (4.8,-2) {$\A_{22}$};
\node at (6.1,-2) {$\dots$};
\node at (7.5,-2) {$\A_{m2}$};
\draw [thick, ->] (3.8,-2) -- (4.4,-2);
\draw [thick, ->] (5.1,-2) -- (5.7,-2);
\draw [thick, ->] (6.4,-2) -- (7,-2);
\node at (3.9,-1) {$f_{12}$};
\node at (5.5,-1) {$f_{22}$};
\node at (7,-1) {$f_{m2}$};
\draw [thick, ->] (3.6,-1.7) -- (5.2,-0.2);
\draw [thick, ->] (4.8,-1.7) -- (5.4,-0.2);
\draw [thick, ->] (7.1,-1.7) -- (5.5,-0.2);
\node at (9,-1) {$\dots$};
\node at (12,0) {$\mathbf{C^\prime}_m$};
\node at (10,-2) {$\A_{1m}$};
\node at (11.3,-2) {$\A_{2m}$};
\node at (12.6,-2) {$\dots$};
\node at (14,-2) {$\A_{mm}$};
\draw [thick, ->] (10.3,-2) -- (10.9,-2);
\draw [thick, ->] (11.6,-2) -- (12.2,-2);
\draw [thick, ->] (12.9,-2) -- (13.5,-2);
\node at (10.4,-1) {$f_{1m}$};
\node at (12,-1) {$f_{2m}$};
\node at (13.5,-1) {$f_{mm}$};
\draw [thick, ->] (10.1,-1.7) -- (11.7,-0.2);
\draw [thick, ->] (11.3,-1.7) -- (11.9,-0.2);
\draw [thick, ->] (13.6,-1.7) -- (12,-0.2);
\end{tikzpicture}
\end{gathered}
\end{equation}
\noindent
Intuitively, the pointer outcomes reported by any nonelementary QRF $\mathbf{X}$ represent the possible values, coarse-grained to the resolution of the QRF, of some physical observable; the tick marks on a meter stick, for example, represent possible values of length.  All measurements of any fixed observable commute; this is simply the fact that the operators $M^k_i$ in Eq. \eqref{ham} all mutually commute for any fixed choice of basis for the system $k$.  If, therefore, the elementary QRFs $\mathbf{X}_j$ all report pointer outcomes for the same physical observable, they must all mutually commute.  The corresponding CCCDs depicted in Diagram \eqref{assembly-1} must, therefore, all mutually commute, i.e. defining additional infomorphisms between any of the $\A_{ij}$ always yields a commutative diagram.  Under these conditions, which indeed correspond to quantum noncontextuality, the existence of a common core $\mathbf{C}$ and maps $\psi_j : \mathbf{C^\prime}_j \mapsto \mathbf{C}$ is guaranteed by \cite[Thm. 7.1]{fg:22}.  We therefore have the CCCD:

\begin{equation} \label{assembly-2}
\begin{gathered}
\begin{tikzpicture}
\node at (0,0) {$\mathbf{C^\prime}_1$};
\node at (-2,-2) {$\A_{11}$};
\node at (-0.7,-2) {$\A_{21}$};
\node at (0.6,-2) {$\dots$};
\node at (2,-2) {$\A_{m1}$};
\draw [thick, ->] (-1.7,-2) -- (-1.1,-2);
\draw [thick, ->] (-0.4,-2) -- (0.2,-2);
\draw [thick, ->] (0.9,-2) -- (1.5,-2);
\node at (-1.6,-1) {$f_{11}$};
\node at (0,-1) {$f_{21}$};
\node at (1.5,-1) {$f_{m1}$};
\draw [thick, ->] (-1.9,-1.7) -- (-0.3,-0.2);
\draw [thick, ->] (-0.7,-1.7) -- (-0.1,-0.2);
\draw [thick, ->] (1.6,-1.7) -- (0,-0.2);
\node at (5.5,0) {$\mathbf{C^\prime}_2$};
\node at (3.5,-2) {$\A_{12}$};
\node at (4.8,-2) {$\A_{22}$};
\node at (6.1,-2) {$\dots$};
\node at (7.5,-2) {$\A_{m2}$};
\draw [thick, ->] (3.8,-2) -- (4.4,-2);
\draw [thick, ->] (5.1,-2) -- (5.7,-2);
\draw [thick, ->] (6.4,-2) -- (7,-2);
\node at (3.9,-1) {$f_{12}$};
\node at (5.5,-1) {$f_{22}$};
\node at (7,-1) {$f_{m2}$};
\draw [thick, ->] (3.6,-1.7) -- (5.2,-0.2);
\draw [thick, ->] (4.8,-1.7) -- (5.4,-0.2);
\draw [thick, ->] (7.1,-1.7) -- (5.5,-0.2);
\node at (9,-1) {$\dots$};
\node at (12,0) {$\mathbf{C^\prime}_m$};
\node at (10,-2) {$\A_{1m}$};
\node at (11.3,-2) {$\A_{2m}$};
\node at (12.6,-2) {$\dots$};
\node at (14,-2) {$\A_{mm}$};
\draw [thick, ->] (10.3,-2) -- (10.9,-2);
\draw [thick, ->] (11.6,-2) -- (12.2,-2);
\draw [thick, ->] (12.9,-2) -- (13.5,-2);
\node at (10.4,-1) {$f_{1m}$};
\node at (12,-1) {$f_{2m}$};
\node at (13.5,-1) {$f_{mm}$};
\draw [thick, ->] (10.1,-1.7) -- (11.7,-0.2);
\draw [thick, ->] (11.3,-1.7) -- (11.9,-0.2);
\draw [thick, ->] (13.6,-1.7) -- (12,-0.2);
\node at (6,2) {$\mathbf{C}$};
\draw [thick, ->] (0.2,0.2) -- (5.8,1.9);
\draw [thick, ->] (5.4,0.3) -- (6,1.8);
\draw [thick, ->] (11.6,0.2) -- (6.2,1.9);
\node at (7.5,1) {$\dots$};
\node at (1.8,1) {$\psi_1$};
\node at (5.2,1) {$\psi_2$};
\node at (10.5,1) {$\psi_m$};
\end{tikzpicture}
\end{gathered}
\end{equation}
\noindent
We can now show that the core $\mathbf{C}$ can be chosen so that Diagram \eqref{assembly-2} represents a fine-grained, $m$-dimensional QRF $\mathbf{X}$ comprising the $\mathbf{X}_j$.

\begin{proof}[Proof {\rm (Theorem \ref{thm1})}]
By Lemma \ref{lem1} we have that each of the component CCCDs shown in Diagram \eqref{assembly-1} specifies an elementary QRF $\mathbf{X}_j$ reporting a pointer value $e_j \in \{0, 1 \}^m$; both the $\mathbf{X}_j$ and the reported $e_j$ are, moreover, distinct by construction.   The maps $\psi_j$ all commute by \cite[Thm. 7.1]{fg:22}; hence Diagram \eqref{assembly-2} is a CCCD.  When presented with any bit string in $\{0, 1 \}^m$, at most one of the cores $\mathbf{C^\prime}_j$ will encode an outcome ``1''; this will occur if but only if the presented bit string is identical to $e_j$.  Hence by specifying that the core $\mathbf{C}$ computes the disjunct of the values encoded by the $\mathbf{C^\prime}_j$, Diagram \eqref{assembly-2} specifies the required fine-grained, $m$-dimensional QRF $\mathbf{X}$.
\end{proof}
Diagram \eqref{assembly-2} can, clearly, be redrawn in the form of Diagram \eqref{cccd-2}, again respecting the convention of leaving the CD components implicit, by composing each infomorphism $f_{ij}$ with the appropriate $\psi_j$ to construct ``direct'' maps from each of the $\A_{ij}$ to the core $\mathbf{C}$.  However, preserving the interpretation of Diagram \eqref{assembly-2} as representing an $m$-dimensional QRF under this redrawing requires redefining $\mathbf{C}$ as a classifier; whereas $\mathbf{C}$ computes a disjunction of its inputs in Diagram \eqref{assembly-2}, it must compute a disjunction, over the index $j$, of conjuncts of the outcomes of the $m$ composed maps $\psi_j f_{ij}$ in the redrawn diagram.  This preserves the function (i.e. conjunction of their $m$ inputs) of the elementary QRF cores $\mathbf{C^\prime}_i$ that are replaced by families of composite infomorphisms in the redrawn diagram.  Hence it preserves the sense in which the dimension $m$ of each elementary QRF, and of the CCCD that represents it, can be considered the ``dimension'' of the composite CCCD; computing a disjunct of $m$ bit strings yields no more than an $m$ bit string.  To generalize, we can add any number of intermediate ``core'' classifiers representing limits/colimits over particular subsets of the $\A_{ij}$ to Diagram \eqref{assembly-2} without violating commutativity constraints, but we must assure that the intermediate cores and $\mathbf{C}$ are (re)defined as classifiers so as to preserve the overall logical operation of the CCCD.  Hence we can say, in general, that intermediate ``layers'' of classifiers can be added to or subtracted from a CCCD provided that the function that it computes and hence the QRF that it represents remains constant.  This allows us to employ the compact notation:

\begin{equation} \label{CCCD-2}
\begin{gathered}
\begin{tikzpicture}
\draw [thick] (0,0) -- (2,1) -- (2,-1) -- (0,0);
\node at (1.3,0) {$\mathbf{X}$};
\end{tikzpicture}
\end{gathered}
\end{equation}
\noindent
for an $m$-dimensional CCCD specifying an $m$-dimensional QRF $\mathbf{X}$.  Here we regard the left-facing apex of the triangle as depicting the core, and the right-facing base of the triangle as depicting the $m$ arrays (or in a coarse-grained QRF, $n < m$ arrays) of $m$ base-level classifiers $\A_{ij}$.  In this notation, a QRF $\mathbf{S} = \mathbf{S_1 S_2}$ becomes:

\begin{equation} \label{CCCD-3}
\begin{gathered}
\begin{tikzpicture}
\node at (0,0) {$\mathbf{S}$};
\draw [thick] (0.2,0) -- (1,1) -- (2,1.5) -- (2,0.5) -- (1,1);
\node at (1.7,1) {$\mathbf{S_1}$};
\draw [thick] (0.2,0) -- (1,-1) -- (2,-0.5) -- (2,-1.5) -- (1,-1);
\node at (1.7,-1) {$\mathbf{S_2}$};
\end{tikzpicture}
\end{gathered}
\end{equation}
\noindent
Note that this diagram implies vertical arrows in both directions between $\mathbf{S_1}$ and $\mathbf{S_2}$ to assure commutativity of the CCCD.

\section{Sequential measurements} \label{CCCD-cat}

To discuss sequential measurements of some sector $S$ of $\mathscr{B}$, we must understand how QRFs can be deployed, alone or in combination, at sequential times, subject to the constraint that all QRFs co-deployed at any particular time all mutually commute.  Representing QRFs as CCCDs, such time evolutions become morphisms of CCCDs, which clearly must compose, and must include a unique Identity morphism for each CCCD. This motivates the following:

\begin{definition} \label{def-cat-CCCD}
The category $\mathbf{CCCD}$ is defined to consist of, as objects, all diagrams of CCCDs, considered as objects of the category $\mathbf{Digraph}$ of (finite) directed graphs, i.e. all diagrams of the form of Diagram \eqref{cccd-2} comprising finite sets of classifiers and their associated limits/colimits, and as arrows, morphisms of $\mathbf{Digraph}$ connecting such objects.
\end{definition}

Morphisms so defined clearly compose, and there is for every CCCD a unique Identity morphism.
Heuristically, this definition ties implicitly to a functor\footnote{Alternatively, the CCD in \eqref{ccd-1} comprises the cone category $\mathbf{Cone}(\mathbf{C'})$ of arrows into $\mathbf{C'}$ --- see e.g. \cite[p.102]{awodey:10}.  The category pertaining to the CD can be defined as the opposite category $\mathbf{Cone}(\mathbf{C'})^{(\rm{opp})}$, i.e. reversing the arrows to make a cone category of arrows from $\mathbf{D'}$ in \eqref{cccd-1}.  One then forms the product category $\mathbf{CCCD} = \mathbf{Cone}(\mathbf{C'}) \times \mathbf{Cone}(\mathbf{C'})^{(\rm{opp})}$ ---  i.e. ordered pairs of objects and arrows in each factor.} $\mathbf{Digraph} \lra \mathbf{CCCD}$.

The above Definition \ref{def-cat-CCCD} places no restriction on the semantics of CCCDs or the functions that they compute.  The CCCDs of interest, however, are those that represent QRFs, and hence have the form of Diagram \eqref{assembly-2}.  As noted above, any such CCCD has a well-defined dimension, which is equal to the dimension of the QRF it represents.  Morphisms of such CCCDs, as defined, can clearly produce arbitrarily large structures computing arbitrarily complex Boolean functions, that represent arbitrarily large, multi-variate QRFs.  For the present purposes, however, we will only be concerned with CCCD morphisms that implement sequential measurements on a single system.  It is sufficient to consider the two types of sequential measurements mentioned above, fission/fusion measurements and switches between noncommuting pointer QRFs, which can also be considered context switches.  The first type is exemplified by:

\begin{equation} \label{flow-1}
\begin{gathered}
\begin{tikzpicture}
\node at (0,0) {$\mathbf{S}$};
\draw [thick] (0.2,0) -- (1,1) -- (2,1.5) -- (2,0.5) -- (1,1);
\node at (1.7,1) {$\mathbf{S_1}$};
\draw [thick] (0.2,0) -- (1,-1) -- (2,-0.5) -- (2,-1.5) -- (1,-1);
\node at (1.7,-1) {$\mathbf{S_2}$};
\draw [thick] (-2.7,0) -- (-1.7,0.5) -- (-1.7,-0.5) -- (-2.7,0);
\node at (-2,0) {$\mathbf{S}$};
\draw [thick, ->] (-1.5,0) -- (-0.3,0);
\draw [thick, ->] (2.2,0) -- (3.4,0);
\draw [thick] (3.6,0) -- (4.6,0.5) -- (4.6,-0.5) -- (3.6,0);
\node at (4.3,0) {$\mathbf{S}$};
\end{tikzpicture}
\end{gathered}
\end{equation}
\noindent
For simplicity, we will take the QRF $\mathbf{S}$ to be elementary and to have dimension $m$.  In this case, Diagram \eqref{flow-1} just represents a relabelling of subsets of the $m$ base-level classifiers:

\begin{equation} \label{class-flow-1}
\underbrace{\mathcal{A}_1, \mathcal{A}_2, \dots \mathcal{A}_m}_{\mathbf{S}} \rightarrow \underbrace{\mathcal{A}_1, \dots \mathcal{A}_i,}_{\mathbf{S_1}} \underbrace{\mathcal{A}_{i+1}, \dots \mathcal{A}_m}_{\mathbf{S_2}} \rightarrow \underbrace{\mathcal{A}_1, \mathcal{A}_2, \dots \mathcal{A}_m}_{\mathbf{S}}
\end{equation}
\noindent
that allows defining limits/colimits $\mathbf{S_1}$ and $\mathbf{S_2}$ over the $\mathcal{A}_1, \dots \mathcal{A}_i$ and the $\mathcal{A}_{i+1}, \dots \mathcal{A}_m$, respectively.  Because the limit/colimit $\mathbf{S}$ exists over the $\mathcal{A}_1, \mathcal{A}_2, \dots \mathcal{A}_m$, it clearly exists over $\mathbf{S_1}$ and $\mathbf{S_2}$.

In the second type of example, the initial and final measurements are those ones of the reference component $R$ only; the pointer component is traced over \cite{fields:19}.  These reference measurements serve to confirm the identity of $S$ and hence assure that the subsequent measurements are indeed sequential measurements of one and the same system.  We can represent the sequence as:

\begin{equation} \label{flow-2}
\begin{gathered}
\begin{tikzpicture}
\node at (0,0) {$\mathbf{S}$};
\draw [thick] (0.2,0) -- (1,1) -- (2,1.5) -- (2,0.5) -- (1,1);
\node at (1.7,1) {$\mathbf{P}$};
\draw [thick] (0.2,0) -- (1,-1) -- (2,-0.5) -- (2,-1.5) -- (1,-1);
\node at (1.7,-1) {$\mathbf{R}$};
\draw [thick] (-2.7,0) -- (-1.7,0.5) -- (-1.7,-0.5) -- (-2.7,0);
\node at (-2,0) {$\mathbf{S}$};
\draw [thick, ->] (-1.5,0) -- (-0.3,0);
\draw [thick, ->] (2.2,0) -- (3.4,0);
\node at (3.7,0) {$\mathbf{S}$};
\draw [thick] (3.9,0) -- (4.7,1) -- (5.7,1.5) -- (5.7,0.5) -- (4.7,1);
\node at (5.4,1) {$\mathbf{Q}$};
\draw [thick] (3.9,0) -- (4.7,-1) -- (5.7,-0.5) -- (5.7,-1.5) -- (4.7,-1);
\node at (5.4,-1) {$\mathbf{R}$};
\draw [thick, ->] (5.9,0) -- (7.1,0);
\draw [thick] (7.3,0) -- (8.3,0.5) -- (8.3,-0.5) -- (7.3,0);
\node at (8,0) {$\mathbf{S}$};
\end{tikzpicture}
\end{gathered}
\end{equation}
\noindent
Again taking $\mathbf{S}$ to be elementary with dimension $m$, and letting $k < m$, we can represent this in terms of classifier labels, leaving traced-over classifiers implicit, as:

\begin{equation} \label{class-flow-2}
\underbrace{\mathcal{A}_1, \mathcal{A}_2, \dots \mathcal{A}_k}_{\mathbf{R}} \rightarrow \underbrace{\mathcal{A}_1, \dots \mathcal{A}_k,}_{\mathbf{R}} \underbrace{\mathcal{A}_{k+1}, \dots \mathcal{A}_m}_{\mathbf{P}} \rightarrow \underbrace{\mathcal{A}_1, \dots \mathcal{A}_k,}_{\mathbf{R}} \underbrace{\tilde{\mathcal{A}}_{k+1}, \dots \tilde{\mathcal{A}}_m}_{\mathbf{Q}} \rightarrow \underbrace{\mathcal{A}_1, \mathcal{A}_2, \dots \mathcal{A}_k}_{\mathbf{R}}
\end{equation}
\noindent
where the notation $\tilde{\mathcal{A}}_{l}$ indicates that $\mathcal{A}_{l}$ has been rewritten in a rotated measurement basis, e.g. $s_z \rightarrow s_x$ or $x \rightarrow p= m\, (\partial x/ \partial t)$.  As with Eq. \eqref{class-flow-1}, Eq. \eqref{class-flow-2} is a relabeling, but here the local basis rotation mapping $\mathbf{P} \rightarrow \mathbf{Q}$ has been added.  As both $\mathbf{P}$ and $\mathbf{Q}$ must commute with $\mathbf{R}$, the commutativity requirements for $\mathbf{S}$ are satisfied.

\section{Mapping CCCDs to cobordisms} \label{CCCD2cobord}

We are now in a position to map sequential measurements as represented by CCCDs to cobordisms of finite Hilbert spaces and hence to TQFTs.  To proceed, consider now an $m$-dimensional CCCD $\mathbf{X}$, and associate with it the set $\{ 0,1 \}^m$ of bit strings on which $\mathbf{X}$ acts as a computation.
Diagram \ref{prop} lets us associate with $\{ 0,1 \}^m$ a $m$-dimensional sector $X$ in some space (with discrete topology) $\mathscr{B}$ of dimension $N \ll m$ and also with an $m$-dimensional Hilbert space $\mathcal{H}_X$.
Hence we can represent Diagrams \eqref{flow-1} and \eqref{flow-2} as, respectively, maps
\begin{equation}\label{sector-map-1}
S \rightarrow S_1 \times S_2 \rightarrow S ~\text{and}~ S \rightarrow P \times R \rightarrow Q \times R \rightarrow S
\end{equation}
on sectors of some appropriate $\mathscr{B}$, respecting the structure of respective objects and arrows in the category $\mathbf{CCCD}$.
We next correspond the maps of sectors in \eqref{sector-map-1} to proper, linear maps
\begin{equation}\label{hilbert-map-1}
\mathcal{H}_S \rightarrow \mathcal{H}_{S_1} \bigotimes \mathcal{H}_{S_2} \rightarrow \mathcal{H}_S~ \text{and}~ \mathcal{H}_S \rightarrow \mathcal{H}_P \bigotimes \mathcal{H}_R \rightarrow \mathcal{H}_Q \bigotimes \mathcal{H}_R \rightarrow \mathcal{H}_S
\end{equation}
on finite-dimensional Hilbert spaces.  It is natural to consider these latter maps as leading to a cobordism of Hilbert
spaces. To see this, we note:
\begin{lemma} \label{Fredholm}
The sequences of maps in \eqref{hilbert-map-1} are sequences of proper Fredholm maps of separable Hilbert manifolds.
\end{lemma}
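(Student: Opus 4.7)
The plan is to unpack each arrow in the two sequences of \eqref{hilbert-map-1}, recognise them as smooth linear isomorphisms of finite-dimensional Hilbert spaces, and then deduce both properness and the Fredholm property from that identification. First I would note that every Hilbert space appearing in \eqref{hilbert-map-1} has finite dimension: by Theorem \ref{thm1} together with the bit-string encoding of Diagram \eqref{prop}, the $m$-dimensional CCCD that represents a QRF $\mathbf{X}$ is associated with a Hilbert space $\mathcal{H}_X$ of finite dimension $m$, spanned by basis vectors indexed by the base-level classifiers $\mathcal{A}_i$. Hence $\mathcal{H}_S,\mathcal{H}_{S_1},\mathcal{H}_{S_2},\mathcal{H}_P,\mathcal{H}_Q,\mathcal{H}_R$ are all finite-dimensional, automatically separable, and carry their canonical $C^\infty$-structure as finite-dimensional Hilbert manifolds.

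Next I would identify each arrow explicitly. In the fission sequence, $\mathcal{H}_S \to \mathcal{H}_{S_1}\otimes\mathcal{H}_{S_2}$ is the canonical isomorphism afforded by the separability of the joint state $|S(t)\rangle=|S_1(t)\rangle|S_2(t)\rangle$ assumed in \S1, and the second arrow is its inverse. In the context-switching sequence the outer arrows are, similarly, the canonical identifications $\mathcal{H}_S \cong \mathcal{H}_R\otimes\mathcal{H}_P$ and $\mathcal{H}_R\otimes\mathcal{H}_Q\cong\mathcal{H}_S$ issuing from the $\mathbf{RP}$- and $\mathbf{RQ}$-decompositions of \S1, while the middle arrow $\mathcal{H}_P\otimes\mathcal{H}_R \to \mathcal{H}_Q\otimes\mathcal{H}_R$ is the unitary $U\otimes\mathrm{Id}_{\mathcal{H}_R}$, with $U$ the basis rotation relating the non-commuting pointer QRFs $\mathbf{P}$ and $\mathbf{Q}$ (e.g.\ $s_z\leftrightarrow s_x$, or $x\leftrightarrow p$ as already invoked in Eq.~\eqref{class-flow-2}).

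Given these identifications, the Fredholm property is immediate: between finite-dimensional Hilbert spaces, every bounded linear map has finite-dimensional kernel and cokernel and automatically closed range, so its differential at every point — and indeed the map itself — is Fredholm, with index equal to the difference of the dimensions of source and target. This difference vanishes in each case, since every arrow in \eqref{hilbert-map-1} is bijective. Properness follows from the same identifications: each arrow is a diffeomorphism of finite-dimensional Hilbert manifolds, hence a homeomorphism, hence proper; properness is preserved under finite composition, so the entire sequences are proper Fredholm in the required sense.

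The main obstacle, or rather the main structural ingredient, is not verifying Fredholmness or properness themselves — these are automatic in the finite-dimensional regime — but rather justifying that the composite maps in \eqref{hilbert-map-1} genuinely arise from the $\mathbf{CCCD}$-morphisms of \S\ref{CCCD-cat} through a well-defined assignment. Here one must use the noncontextuality (commutativity) constraint of Diagram \eqref{assembly-2} to guarantee that the successive tensor factorisations and basis rotations are consistent at each intermediate step and that the common core $\mathbf{C}$ survives the relabellings of Eqs.~\eqref{class-flow-1}--\eqref{class-flow-2}. Once this consistency is secured, the properness and Fredholmness claims reduce to the elementary finite-dimensional facts above.
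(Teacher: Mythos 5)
Your proposal takes essentially the same route as the paper: both arguments reduce to the observation that the Hilbert spaces in \eqref{hilbert-map-1} are finite-dimensional (hence separable Hilbert manifolds with a canonical smooth structure) and that every linear map between finite-dimensional spaces, and every composition of such, automatically has finite-dimensional kernel and cokernel. Your additional care over properness (noting it needs injectivity/bijectivity of the arrows rather than following from Fredholmness alone) is a useful elaboration that the paper's proof leaves implicit, but it does not change the approach.
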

\begin{proof}
Formally, such finite dimensional (hence separable) Hilbert spaces are naturally separable Hilbert manifolds on which a smooth structure is always definable
\cite{lang:72,lang:95}. The finite dimensionality of the Hilbert spaces in question implies that every linear map, and compositions of such, are Fredholm maps --- i.e
having finite dimensional kernel and cokernel --- and likewise for every smooth map of finite dimensional Hilbert manifolds \cite{zeidler:95} --- this has been also noted in e.g. \cite[Props. 1.6, 1.7]{baker:00}.
\end{proof}
The cobordism in our case can then be seen as a special case of the general theory of cobordisms of separable Hilbert manifolds following the details in \cite{baker:00}.

Lemma \ref{Fredholm} demonstrates a compatibility between sequential measurements of sectors, unique in each case, and their associated aforementioned maps of Hilbert spaces respecting composition, which lets us construct diagrams as shown below that implicitly define sequences of maps $\mathfrak{F} = \{\mathfrak{F}(i)\}$ acting at measurement times $i$.  In the case of fission/fusion measurements, e.g. Eq. \eqref{flow-1}, we have diagrams of the form

\begin{equation} \label{CCCD-to-Cob}
\begin{gathered}
\begin{tikzpicture}[every tqft/.append style={transform shape}]
\draw[rotate=90] (0,0) ellipse (2.5cm and 1 cm);
\node[above] at (0,1.7) {$\mathscr{B}$};
\node at (-0.4,0) {$S$};
\begin{scope}[tqft/every boundary component/.style={draw,fill=green,fill opacity=1}]
\begin{scope}[tqft/cobordism/.style={draw}]
\begin{scope}[rotate=90]
\pic[tqft/cylinder, name=a];
\pic[tqft/pair of pants, anchor=incoming boundary 1, name=b, at=(a-outgoing boundary 1)];
\end{scope}
\end{scope}
\end{scope}
\draw[rotate=90] (0,-4) ellipse (2.5cm and 1 cm);
\node[above] at (4,1.7) {$\mathscr{B}$};
\node at (2,0.7) {$\mathscr{S}$};
\node at (4.4,1.1) {$S_1$};
\node at (4.4,-1.1) {$S_2$};
\draw [thick, <-] (0,-2.7) -- (0,-3.8);
\draw [thick, <-] (4,-2.7) -- (4,-3.8);
\draw [thick] (-0.5,-5.3) -- (0.5,-4.8) -- (0.5,-5.8) -- (-0.5,-5.3);
\node at (0.2,-5.3) {$\mathbf{S}$};
\draw [thick] (3.5,-4.5) -- (4.5,-4) -- (4.5,-5) -- (3.5,-4.5);
\node at (4.2,-4.5) {$\mathbf{S_1}$};
\draw [thick] (3.5,-6.1) -- (4.5,-5.6) -- (4.5,-6.6) -- (3.5,-6.1);
\node at (4.2,-6.1) {$\mathbf{S_2}$};
\draw [thick] (3.5,-4.5) -- (2.5,-5.3) -- (3.5,-6.1);
\draw [thick, ->] (0.7,-5.3) -- (2.3,-5.3);
\node at (-0.5,-3.3) {$\mathfrak{F}(i)$};
\node at (4.5,-3.3) {$\mathfrak{F}(k)$};
\node at (1.5,-5.6) {$\mathscr{F}$};
\end{tikzpicture}
\end{gathered}
\end{equation}

while in the case of basis rotations, e.g. Eq. \eqref{flow-2}, we have diagrams of the form below, which make explicit the possibility of entanglement between reference and pointer components during periods of non-observation \cite{fields:19}, corresponding in traditional terms to ``the system of interest becoming entangled with the apparatus'':

\begin{equation} \label{CCCD-to-Cob-2}
\begin{gathered}
\begin{tikzpicture}[every tqft/.append style={transform shape}]
\draw[rotate=90] (0,0) ellipse (2.5cm and 1 cm);
\node[above] at (0,1.7) {$\mathscr{B}$};
\node at (-0.5,1.1) {$P$};
\node at (-0.5,-1.1) {$R$};
\begin{scope}[tqft/every boundary component/.style={draw,fill=green,fill opacity=1}]
\begin{scope}[tqft/cobordism/.style={draw}]
\begin{scope}[rotate=90]
\pic[tqft/reverse pair of pants, at={(-1,0)}, name=a];
\pic[tqft/pair of pants, anchor=incoming boundary 1, name=b, at=(a-outgoing boundary 1)];
\end{scope}
\end{scope}
\end{scope}
\draw[rotate=90] (0,-4) ellipse (2.5cm and 1 cm);
\node[above] at (4,1.7) {$\mathscr{B}$};
\node at (2,0.7) {$\mathscr{S}$};
\node at (4.4,1.1) {$Q$};
\node at (4.4,-1.1) {$R$};
\draw [thick, <-] (0,-2.7) -- (0,-3.8);
\draw [thick, <-] (4,-2.7) -- (4,-3.8);
\draw [thick] (-0.5,-4.5) -- (0.5,-4) -- (0.5,-5) -- (-0.5,-4.5);
\node at (0.2,-4.5) {$\mathbf{P}$};
\draw [thick] (-0.5,-6.1) -- (0.5,-5.6) -- (0.5,-6.6) -- (-0.5,-6.1);
\node at (0.2,-6.1) {$\mathbf{R}$};
\draw [thick] (-0.5,-4.5) -- (-1.6,-5.3) -- (-0.5,-6.1);
\node at (-1.8,-5.3) {$\mathbf{S}$};
\draw [thick] (3.5,-4.5) -- (4.5,-4) -- (4.5,-5) -- (3.5,-4.5);
\node at (4.2,-4.5) {$\mathbf{Q}$};
\draw [thick] (3.5,-6.1) -- (4.5,-5.6) -- (4.5,-6.6) -- (3.5,-6.1);
\node at (4.2,-6.1) {$\mathbf{R}$};
\draw [thick] (3.5,-4.5) -- (2.5,-5.3) -- (3.5,-6.1);
\draw [thick, ->] (0.7,-5.3) -- (2.3,-5.3);
\node at (-0.5,-3.3) {$\mathfrak{F}(i)$};
\node at (4.5,-3.3) {$\mathfrak{F}(k)$};
\node at (1.5,-5.6) {$\mathscr{F}$};
\end{tikzpicture}
\end{gathered}
\end{equation}

Diagrams \eqref{CCCD-to-Cob} and \eqref{CCCD-to-Cob-2} illustrate the compatibility between objects (representing QRFs) and arrows (representing sequential measurements) in $\mathbf{CCCD}$ and the resulting cobordisms on sectors, both respecting composition and identities.  Thus they demonstrate an induced functor $\mathfrak{F}: \mathbf{CCCD} \rightarrow \mathbf{Cob}$, the category of  finite cobordisms, in particular picking out cobordisms on finite Hilbert spaces in relationship to \eqref{hilbert-map-1}.

\begin{theorem} \label{thm2}
For any morphism $\mathscr{F}$ of CCCDs in $\mathbf{CCCD}$, there is a cobordism $\mathscr{S}$ such that a diagram of the form of Diagram \eqref{CCCD-to-Cob} or \eqref{CCCD-to-Cob-2} commutes.
\end{theorem}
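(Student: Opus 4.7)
The plan is to verify that the functor $\mathfrak{F}: \mathbf{CCCD} \rightarrow \mathbf{Cob}$ introduced just above the theorem is well-defined on arrows, by constructing $\mathscr{S}$ case-by-case for the two canonical morphism types identified in \S\ref{CCCD-cat} and then checking that the square in Diagram~\eqref{CCCD-to-Cob} (or \eqref{CCCD-to-Cob-2}) commutes by a naturality argument on the bit-string encodings.

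First I would fix an $m$-dimensional source CCCD $\mathbf{X}$ and invoke Theorem~\ref{thm1} together with Diagram~\eqref{prop} to associate to it, at each measurement time $i$, a sector $X \subset \mathscr{B}$ and a Hilbert space $\mathcal{H}_X$ of matching dimension; this defines the vertical maps $\mathfrak{F}(i)$ on either side of the square via the time-independent encodings $f_{\mathscr{B}}$ and $f_{\mathcal{H}_A}$. The key observation, made explicit around Eqs.~\eqref{class-flow-1} and \eqref{class-flow-2}, is that an arrow $\mathscr{F}$ of $\mathbf{CCCD}$ representing a sequential measurement of a single system decomposes into a finite composition of just two generating moves on the base-level classifier array: (i) a partitional relabeling inducing a factorization $\mathbf{S} = \mathbf{S_1}\mathbf{S_2}$ (fission, and its reverse fusion); and (ii) a local basis rotation $\mathcal{A}_l \mapsto \tilde{\mathcal{A}}_l$ on a pointer substring with the reference substring held fixed. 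It therefore suffices to construct $\mathscr{S}$ for these two generators and then extend by gluing.

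Next I would realize each generator as an explicit cobordism. For (i), the induced sector and Hilbert-space maps in \eqref{sector-map-1}--\eqref{hilbert-map-1} take the form of a pair of pants, and Diagram~\eqref{CCCD-to-Cob} realizes this geometrically on $\mathscr{B}$. For (ii), the intervening period of non-observation may entangle reference and pointer, so the appropriate piece is a reverse pair of pants glued to a pair of pants, with both outer boundaries labelled by the reference sector $R$, exactly as drawn in Diagram~\eqref{CCCD-to-Cob-2}. Lemma~\ref{Fredholm} then guarantees that the underlying maps of finite-dimensional Hilbert spaces are proper Fredholm, so that both pieces are admissible morphisms in $\mathbf{Cob}$; identities in $\mathbf{CCCD}$ map to cylinder cobordisms and composites map to boundary gluings, which is precisely what makes the assignment functorial.

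Commutativity of the square then reduces to naturality: because $f_{\mathscr{B}}$ and $f_{\mathcal{H}_A}$ are time-independent by Diagram~\eqref{prop}, the route ``encode, then apply the cobordism $\mathscr{S}$'' agrees with the route ``apply $\mathscr{F}$, then encode''. The main obstacle is the reduction step in paragraph two: showing that an arbitrary $\mathbf{Digraph}$-morphism of CCCDs that respects both the classifier semantics and the mutual-commutativity constraint on co-deployed QRFs really does factor through the two generators listed. This is where the quantum-noncontextuality input from \cite[Thm.~7.1]{fg:22}, already used in the proof of Theorem~\ref{thm1}, must be invoked once more to rule out exotic morphisms that would fail to correspond to a pants decomposition; once that factorization is in hand, the remainder is bookkeeping about pair-of-pants gluings and their Hilbert-space incarnations.
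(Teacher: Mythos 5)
Your proposal follows essentially the same route as the paper: reduce an arbitrary morphism to iterates of the fission/fusion and basis-rotation generators of Diagrams \eqref{flow-1} and \eqref{flow-2}, realize these as the pants cobordisms of Diagrams \eqref{CCCD-to-Cob} and \eqref{CCCD-to-Cob-2} with admissibility supplied by Lemma \ref{Fredholm}, and obtain commutativity from the time-independence of the encodings in Diagram \eqref{prop}. The only divergence is the final step, where the paper instead embeds every $\mathcal{H}_{\mathbf{X}}$ into the ambient $\mathcal{H}_A$ and observes that each map $\mathcal{H}_{\mathbf{X}} \rightarrow \mathcal{H}_{\mathbf{Y}}$ extends to the identity on $\mathcal{H}_A$ and so itself specifies the cobordism; your explicit pants-gluing, together with your honest flagging of the unproved claim that every $\mathbf{CCCD}$-morphism factors through the two generators (which the paper simply asserts before its proof), is if anything the more careful rendering of the same argument.
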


As any morphism $\mathscr{F}$ of CCCDs can be constructed by iterating the morphisms shown in Diagrams \eqref{flow-1} and/or \eqref{flow-2}, it is sufficient to show that these morphisms map to cobordisms as required by Theorem \ref{thm2}.  Hence it is sufficient to chase around Diagrams \eqref{CCCD-to-Cob} and \eqref{CCCD-to-Cob-2} in both horizontal directions to show that commutativity holds.

\begin{proof}
Let $m$ be the larger of the CCCD dimensions, choose an $N \gg n$, let $\mathcal{H}_A$ be an $N$-dimensional Hilbert space and $\mathscr{B}$ be an $N$-dimensional space with discrete topology as in Diagram \eqref{prop}.  For each CCCD $\mathbf{X}$, let $\mathcal{H}_{\mathbf{X}}$ be the associated Hilbert space.  We can clearly embed any of these $\mathcal{H}_{\mathbf{X}}$ in $\mathcal{H}_A$. For any embedded $\mathcal{H}_{\mathbf{X}}$ and $\mathcal{H}_{\mathbf{Y}}$, the map $\mathcal{H}_{\mathbf{X}} \rightarrow \mathcal{H}_{\mathbf{Y}}$ comprises components of, and can be extended to, the Identity $\mathcal{H}_A \rightarrow \mathcal{H}_A$.  Hence this map itself specifies the required cobordism.
\end{proof}

Note that a structure such as

\begin{equation} \label{not-CCCD}
\begin{gathered}
\begin{tikzpicture}
\node at (0,0) {$\mathbf{!!}$};
\draw [thick] (0.2,0) -- (1,1) -- (2,1.5) -- (2,0.5) -- (1,1);
\node at (1.7,1) {$\mathbf{P}$};
\draw [thick] (0.2,0) -- (1,-1) -- (2,-0.5) -- (2,-1.5) -- (1,-1);
\node at (1.7,-1) {$\mathbf{Q}$};
\end{tikzpicture}
\end{gathered}
\end{equation}
\noindent
with $[\mathbf{P}, \mathbf{Q}] \neq 0$ does not commute, and therefore is not a CCCD by \cite[Thm. 7.1]{fg:22}. Hence it does not represent an operation on the space $\mathcal{H}$.

\section{Mapping sequential measurements to TQFTs}

The Identity map $\mathscr{B} \rightarrow \mathscr{B}$ is implemented by a unitary evolution operator $\mathcal{P}_U$ as in Diagram \eqref{prop}.  Hence we have a TQFT:

\begin{equation} \label{full-TQFT}
\begin{gathered}
\begin{tikzpicture}
\draw[rotate=90] (0,0) ellipse (2.5cm and 1 cm);
\node[above] at (0,1.7) {$\mathscr{B}$};
\draw[rotate=90] (0,-4) ellipse (2.5cm and 1 cm);
\node[above] at (4,1.7) {$\mathscr{B}$};
\node at (2,2.8) {$\mathscr{S}$};
\draw [thick] (0,2.5) -- (4,2.5);
\draw [thick] (0,-2.5) -- (4,-2.5);
\draw [thick, ->] (0,-2.7) -- (0,-3.7);
\draw [thick, ->] (4,-2.7) -- (4,-3.7);
\node at (-0.1,-4) {$\mathcal{H}_A|_i$};
\node at (3.9,-4) {$\mathcal{H}_A|_k$};
\draw [thick, ->] (0.5,-4) -- (3.3,-4);
\node at (1.9,-3.7) {$\mathcal{P}_U$};
\end{tikzpicture}
\end{gathered}
\end{equation}
\noindent
that maps the cobordism $\mathscr{S}$ of $\mathscr{B}$ to $A$'s apparent Hilbert space $\mathcal{H}_A$ describing bit-string encodings on $\mathscr{B}$, as measured at ``times'' $i$ and $k$.  Hence the proof of Theorem \ref{thm2} itself defines a TQFT.  Similarly, the diagrams:

\begin{equation} \label{S-TQFT}
\begin{gathered}
\begin{tikzpicture}[every tqft/.append style={transform shape}]
\draw[rotate=90] (0,0) ellipse (2.5cm and 1 cm);
\node[above] at (0,1.7) {$\mathscr{B}$};
\node at (-0.4,0) {$S$};
\begin{scope}[tqft/every boundary component/.style={draw,fill=green,fill opacity=1}]
\begin{scope}[tqft/cobordism/.style={draw}]
\begin{scope}[rotate=90]
\pic[tqft/cylinder, name=a];
\pic[tqft/pair of pants, anchor=incoming boundary 1, name=b, at=(a-outgoing boundary 1)];
\end{scope}
\end{scope}
\end{scope}
\draw[rotate=90] (0,-4) ellipse (2.5cm and 1 cm);
\node[above] at (4,1.7) {$\mathscr{B}$};
\node at (2,0.7) {$\mathscr{S}$};
\node at (4.4,1.1) {$S_1$};
\node at (4.4,-1.1) {$S_2$};
\draw [thick, ->] (0,-2.7) -- (0,-3.8);
\draw [thick, ->] (4,-2.7) -- (4,-3.8);
\node at (0,-4.2) {$\mathcal{H}_S|_i$};
\node at (4.2,-4.2) {$\mathcal{H}_{S_1}|_k \bigotimes \mathcal{H}_{S_2}|_k$};
\draw [thick, ->] (0.6,-4.2) -- (2.8,-4.2);
\node at (1.9,-3.9) {$\mathcal{P}_U$};
\end{tikzpicture}
\end{gathered}
\end{equation}

and

\begin{equation} \label{PQ-TQFT}
\begin{gathered}
\begin{tikzpicture}[every tqft/.append style={transform shape}]
\draw[rotate=90] (0,0) ellipse (2.5cm and 1 cm);
\node[above] at (0,1.7) {$\mathscr{B}$};
\node at (-0.5,1.1) {$P$};
\node at (-0.5,-1.1) {$R$};
\begin{scope}[tqft/every boundary component/.style={draw,fill=green,fill opacity=1}]
\begin{scope}[tqft/cobordism/.style={draw}]
\begin{scope}[rotate=90]
\pic[tqft/reverse pair of pants, at={(-1,0)}, name=a];
\pic[tqft/pair of pants, anchor=incoming boundary 1, name=b, at=(a-outgoing boundary 1)];
\end{scope}
\end{scope}
\end{scope}
\draw[rotate=90] (0,-4) ellipse (2.5cm and 1 cm);
\node[above] at (4,1.7) {$\mathscr{B}$};
\node at (2,0.7) {$\mathscr{S}$};
\node at (4.4,1.1) {$Q$};
\node at (4.4,-1.1) {$R$};
\draw [thick, ->] (0,-2.7) -- (0,-3.8);
\draw [thick, ->] (4,-2.7) -- (4,-3.8);
\node at (0,-4.2) {$\mathcal{H}_P|_i \bigotimes \mathcal{H}_{R}|_i$};
\node at (4.2,-4.2) {$\mathcal{H}_{Q}|_k \bigotimes \mathcal{H}_{R}|_k$};
\draw [thick, ->] (1.2,-4.2) -- (2.8,-4.2);
\node at (1.9,-3.9) {$\mathcal{P}_U$};
\end{tikzpicture}
\end{gathered}
\end{equation}
\noindent
define TQFTs on an observed system $S$ between ``times'' $i$ and $k$.  Combining Diagrams \eqref{CCCD-to-Cob} and \eqref{S-TQFT}, and also \eqref{CCCD-to-Cob-2} and \eqref{PQ-TQFT}, yield the required map from QRFs to TQFTs.

As any CCCD represents a finite computation, any CCCD can, in principle, be implemented by a finite physical device, e.g. a physical implementation of a finite Turing Machine --- otherwise, by possible attachment to a topological version of the latter \cite{merelli:19}.  Hence a QRF can, in principle, be constructed given any finite cobordism.  Whether such a QRF could be implemented as a practical measurement device remains, in general, an open question.

\section{Mapping sequential measurements to TQNNs}\label{CCCD-TQNN}

We may consider a $m$-dimensional sector $X$ of some (boundary) space $\mathscr{B}$, assumed to have either discrete or a discretized topology and dimension $N \ll m$. We then consider sectors of $\mathscr{B}$, for instance the two sets $S_1$ and $S_2$, and $P$, $Q$ and $R$, and endow these latter ones with linear maps of the type considered in \eqref{hilbert-map-1}. In general, we can consider a generic set of sectors $\{ S_i\}$, with $i=1, \dots p$ such that $p\leq m$, and hence write the maps
\begin{equation}\label{sector-map-p}
S \rightarrow  {\times}_{ \{  i_1 \} } {S}_{ \{  i_1  \} }  \rightarrow  \times_{ \{  i_2 \} } S_{ \{  i_2  \} }  \rightarrow  \dots  \times_{ \{  i_p \} } S_{ \{  i_p  \} }  \rightarrow  S
\end{equation}
for any compatible $\{  i_p  \}$-decomposition in sectors of $\mathscr{B}$, with $p\leq m$.
We further assume that the decomposition fulfils the structure of respective objects and arrows in the category $\mathbf{CCCD}$. Generalizing the linear map \eqref{hilbert-map-1} so to be consistent with \eqref{sector-map-p}, we can write
\begin{equation}\label{hilbert-map-p}
\mathcal{H}_S \rightarrow  \bigotimes_{\{  i_1  \}} \mathcal{H}_{S_{\{  i_1 \}}} \rightarrow  \bigotimes_{\{  i_2  \}} \mathcal{H}_{S_{\{  i_2 \}}} \rightarrow \cdots  \bigotimes_{\{  i_p  \}} \mathcal{H}_{S_{\{  i_p \}}} \rightarrow  \mathcal{H}_S
\end{equation}

The discrete topology of $\mathscr{B}$, and consequently its sectors, supports quantum states belonging to the Hilbert spaces of each sector in the decomposition.

In particular, we consider in this section quantum states that are supported on graphs $\gamma$, the latter ones being the union of path $\gamma_i$, with some $i=1,\dots n$, at nodes $v$. These quantum states are often refereed to in the literature as spin-networks, which were first introduced by Penrose --- see e.g. \cite{penrose:71} --- and the relevance of which for topological quantum field theories and quantum gravity has been pointed out by Rovelli and Smolin \cite{rovelli:95}.

\subsection{$G$-bundle structures and emergent metric} \label{g-bundle-metric}

TQFTs can be taken into account that are locally gauge-invariant under the action of a Lie group $G$, the Lie algebra of which $\mathfrak{g}$ is equipped with an invariant non-degenerate bilinear form $\langle \cdot , \cdot \rangle$.

We can consider, for the sake of simplicity, the formulation of a generic $BF$ theory \cite{baez:99} with local gauge group $G$. Taking into account a smooth $d$-dimensional manifold $\mathcal{M}$, it is possible to construct a principal $G$-bundle $M$ over $\mathcal{M}_d$ provided that a $G$-connection $A$ over the bundle $M$ is introduced, and a ``frame field'' that is an \ad(M)-valued $(d-2)$-form $B$ over $\mathcal{M}_d$, where \ad(M) is the vector bundle associated to $M$ by means of the adjoint action of $G$ on its Lie algebra. Picking a local trivialization on $\mathcal{M}_d$, one can denote the curvature of $A$, represented as a $\mathfrak{g}$-valued 1-form, with a $\mathfrak{g}$-valued 2-form $F$. Hence $B$ must be intended as a $\mathfrak{g}$-valued (d-2)-form in order to write consistently over $\mathcal{M}_d$ the Lagrangian density of the theory:
\begin{equation}\label{BF}
\mathcal{L}= \langle B \wedge F \rangle \,.
\end{equation}

The classical (``kinematical'') phase-space of the theory is the cotangent bundle $T^*\mathcal{A}$. Here $\mathcal{A}$ stands for the space of connections on the restriction $M \vert_{S}$ of the bundle $M$ to an initial (in time) slice $\{0\} \times S$. A point of the phase-space then encodes a connection $A$ on $M \vert_{S}$ and an \ad($M \vert_{S}$)-valued $(d-2)$-form B on $S$. The symplectic structure is provided by the expression:
\begin{equation}
\omega((\delta A, \delta B), (\delta A', \delta B') )= \int _S \langle \delta A \wedge \delta B'- \delta A' \wedge \delta B \rangle\,.
\end{equation}

Variation of \eqref{BF} yields:
\begin{equation}\label{EOM1}
0=\delta \int_{\mathcal{M}_d} \!\!\! \mathcal{L}= \int_{\mathcal{M}_d} \!\!\! \langle \delta B \wedge F  + (-1)^d d_A B \wedge \delta A  \rangle\,,
\end{equation}
from which we can read the equation of motions of the theory:
\begin{equation}\label{EOM2}
 F=0\,,  \qquad \qquad   d_A B=0\,,
 \,
\end{equation}
having denoted with $d_A$ the exterior covariant derivative with respect to the connection $A$.

Equations \eqref{EOM2} tell us that the theory carries no local degrees of freedom, namely that the theory is topological. $F=0$ imposes the flatness of the connection $A$. Flat connections are then equivalent up to gauge transformations. On the other hand, all the solutions to the equation $d_A B=0$, often called the Gau\ss~constraints, are the same up to the a second class of gauge transformations, namely:
\begin{equation}\label{BFG}
 A \rightarrow  A \,,  \qquad \qquad   B \rightarrow B + d_A \eta \,,
\end{equation}
with \ad(P)-valued $(d-3)$-form $\eta$, which leave invariant the action \eqref{BF}.

The field equations \eqref{EOM2} constrain the initial data for $A$ and $B$ and realize a symplectic reduction of $T^*\mathcal{A}$ to the physical phase space. The Gau\ss~constraint  $d_A B=0$ generates indeed gauge transformations on $T^*\mathcal{A}$. If we denote with $\mathcal{G}$ the group of the gauge transformations of the bundle $M \vert_S$, then the ``gauge-invariant phase space'' can be denoted as  $T^*(\mathcal{A}/\mathcal{G})$.
The condition $F=0$, imposing flatness of the connection $A$, generates transformations of the form \eqref{BFG}, which are peculiar of the $BF$ theory. Denoting with $\mathcal{A}_0$ the space of flat connections on $M \vert_S$, the symplectic reduction operated by the constraints \eqref{EOM2} individuates the physical phase space of the theory, namely $T^*(\mathcal{A}_0/\mathcal{G})$.

The theory in \eqref{BF} can be canonically quantized by constructing the Hilbert space of the square integrable functionals over $\mathcal{A}$, and then imposing the constraints \eqref{EOM2} at the quantum level. The Hilbert space of square-integrable functionals on the configuration space $\mathcal{A}$, denoted with $L^2(\mathcal{A})$, is the algebra $Fun(\mathcal{A})$ of functional on $\mathcal{A}$ of the form:
\begin{equation}
\Psi(A)=f(H_{\gamma_1}(A), \dots , H_{\gamma_n}(A))\,,
\end{equation}
with $f$ a continuous complex-valued function of finitely many holonomies:
\begin{equation}
H_{\gamma_i}(A)= P \exp(\int_{\gamma_i} A)
\end{equation}
on a generic path $\gamma_i \in S$, for $i=1, \dots n$ and with $P$ denoting path-ordering in the exponential. Holonomies $H_{\gamma_i}(A)$ along the paths $\gamma_i$ are elements of $G$. Thus $Fun(\mathcal{A})$ is isomorphic to the algebra of all the possible continuous complex-valued functionals on $G^n$.

An inner product on $Fun(\mathcal{A})$ can be defined, in order to complete the definition of the Hilbert space $L^2(\mathcal{A})$, introducing the notion of a graph, i.e. 1-complex, $\Gamma \in A$.

\begin{definition} \label{graph}
A {\rm graph} in $S$ is composed of a finite collection of real-analytic paths $\gamma_i: [0,1] \rightarrow S$ that are embedded and intersect only at their endpoints. We call $\gamma_i$ the {\rm links} of the graphs, and the their endpoints the {\rm nodes}. An orientation on the graph can be provided if, given a node $n$, a link is said to be {\rm outgoing} from $n$ when $\gamma_i(0)=n$, and is said to be {\rm incoming} when $\gamma_i(1)=n$.
\end{definition}

\noindent
Then the inner product is simply the integral:
\begin{equation}\label{inner-product}
\langle \Psi , \Phi \rangle = \int_{G^n} \overline{\Psi}  \Phi\,
\end{equation}
over the Haar measure on $G^n$.

Completing the algebra $Fun(\mathcal{A})$ with the inner product \eqref{inner-product} one obtains the kinematical Hilbert space $L^2(\mathcal{A})$, over which the constraints \eqref{EOM2} act at the quantum level.

The action of the Gau\ss~constraint $d_A B=0$ can be easily implemented. Indeed, taking into account gauge transformations we can first define $Fun(\mathcal{A}/\mathcal{G})$ as the space of all functionals in $Fun(\mathcal{A})$ that are gauge-invariant. Completing the algebra $Fun(\mathcal{A}/\mathcal{G})$ with the norm defined in \eqref{inner-product} provides the gauge-invariant Hilbert space $L^2(\mathcal{A}/\mathcal{G})$.

The curvature constraint $F=0$ of \eqref{EOM2} must be still imposed at the quantum level. In order to proceed with this next step, we recall that the Hilbert space $L^2(\mathcal{A}/\mathcal{G})$ is spanned by the spin-network states.

\begin{definition} \label{spin-network}
A {\rm spin-network} in $S$ is a triple $\Psi=(\gamma, \rho, \iota)$ that consists of: 1) a graph $\gamma\in S$; 2) for each link $\gamma_i\in \gamma$, an irreducible representation $\rho_l$ of $G$; 3) for each node $n$, an intertwining operator, also called intertwiner, $\iota_n$ such that $\iota_n: \rho_{l_1} \otimes \dots \otimes  \rho_{l_n} \rightarrow \rho_{l_1'} \otimes \dots \otimes  \rho_{l_n'}$, with $l_1, \dots l_n$ links incoming to $n$ and $l_1', \dots l_n'$ links outgoing from $n$.
\end{definition}

From a perspective borrowed from gravitational systems described by the Einstein-Hilbert action, which in lower than $4$-dimensions can be cast as $BF$ theories, the imposition of the curvature constraint $F=0$ generates transformations in the form of diffeomorphisms, and thus implements the dynamics of the boundary states, as described by spin-network states on the sections $S$. In \cite{noui:2004} it was shown how to make sense of the formal expression for the physical projector $P$, namely:
\begin{equation} \label{phys-pro}
P=\int \mathcal{D} N \exp (\imath \int_S \langle N \, \widehat{F}\rangle  )
\,,
\end{equation}
with $N$ an {\it ad}(P)-valued Lagrangian multiplier and $\imath=\sqrt{-1}$ imaginary unit, and then implement the curvature constraint $F=0$ in the (physical) inner products of spin-network states, i.e.
\begin{equation} \label{inn-pro-phys}
\langle \Psi , \Phi \rangle_{\rm phys} = \langle P \Psi , \Phi \rangle\,,
\end{equation}

Generalizing the same argument to a topological $BF$ theory in a $d$-dimensional manifold, we can consider a local patch $\Sigma \in S$, with cellular decomposition made of squares of infinitesimal coordinates length $\delta$, and then regularize the curvature constraint as:
\begin{equation}
F[N]= \int_\Sigma \langle N F(A) \rangle = \lim_{\delta \rightarrow 0} \sum_{p^j} \delta^2 \langle N_{p^j} F_{p^j} \rangle\,,
\end{equation}
in which ${p^j}$ labels the $j$-th plaquette and $N_{p^j}$ is the discretization of the \ad(P)-valued Lagrangian multiplier $N$ evaluated at an interior point of the plaquette ${p^j}$. It is then relevant to remind that the discretized version of the holonomy $H_{p^j}[A]$ around the plaquette ${p^j}$ is a $G$ group-element that can be written as:
\begin{equation}
H_{p^j}[A]= 1\!\!1 + \delta^2 F_{p^j}(A) + \mathcal{O}(\delta^2)
\,,
\end{equation}
implying:
\begin{equation} \label{smeared-curve}
F[N]= \int_\Sigma \langle N F(A) \rangle = \lim_{\delta \rightarrow 0} \sum_{p^j} \delta^2 \langle N_{p^j} H_{p^j}[A] \rangle\,.
\end{equation}
This finally defines the regularized expression for the action of the  physical projection operator entering the physical scalar product of spin-network states:
\begin{equation} \label{reg-inn-pro}
\langle \Psi , \Phi \rangle_{\rm phys}= \lim_{\delta \rightarrow 0} \langle \prod_{p^{j}} \int \mathcal{D} N_{p^{j}} \exp(\iota \langle
N_{p^{j}} \widehat{H}_{p^{j}} \rangle)  \Psi, \Phi  \rangle =  \lim_{\delta \rightarrow 0} \langle \prod_{p^{j}} \delta(H_{p^j}) \Psi, \Phi \rangle\,.
\end{equation}
We can recast the previous construction in a more precise way, as follows:

\begin{definition} \label{cellular-decomposition}
For any pair of spin-network states $\Psi$ and $\Phi$ that belong to $L^2(\mathcal{A}/\mathcal{G})$ and are supported on the graphs $\gamma_\Psi$ and $\gamma_\Phi$, one can define the {\rm union graph} $\gamma_{\Psi\Phi}$ that satisfies the properties: 1) $\gamma_\Psi, \gamma_\Phi \in \gamma_{\Psi\Phi}$; 2) $\gamma_{\Psi\Phi}$ is the 1-skeleton of a cellular decomposition $\Sigma_{\Psi\Phi}$ of $S$; 3) the 2-complex $\Sigma_{\Psi\Phi}$ has the minimal number of 2-cells; then the set of irreducible loops $\alpha_{\Psi\Phi}$ can be defined as the set of oriented boundaries of the corresponding 2-cells in $\Sigma_{\Psi\Phi}$.
\end{definition}

With this Definition \ref{cellular-decomposition}, we can state the following theorem:

\begin{theorem} \label{cell-dec-phys-inn-pro}
The physical inner product among two any states spin-network states $\Psi$ and $\Phi$ that belong to $L^2(\mathcal{A}/\mathcal{G})$ is provided by the expression:
\begin{equation} \label{dec-phys-inn-pro}
\langle P \Psi, \Phi \rangle = \lim_{\delta \rightarrow 0} \frac{1}{N_{\delta, p} ! } \sum_{\sigma(\{j\})} \langle  \prod_{p^{\sigma(j)}}
\delta(H_{p^{\sigma(j)}}) \Psi, \Phi \rangle =
\langle \prod_{\alpha\in \alpha_{\Psi\Phi}} \delta(H_\alpha) \Psi, \Phi \rangle \,,
\end{equation}
where $N_{\delta, p}$ is the $\delta$-dependent number of plaquettes, $\sigma(\{j\})$ denotes a permutation of the set of the plaquette labels and $H_{\alpha}$ denotes a holonomy around an irreducible loop $\alpha \in \alpha_{\Psi\Phi}$.
\end{theorem}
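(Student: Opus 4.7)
The plan is to derive \eqref{dec-phys-inn-pro} from the regularized expression \eqref{reg-inn-pro} in two successive reductions: first a symmetrization over plaquette orderings that removes the regulator ambiguity, then a cell-by-cell collapse of the plaquette delta-functions to a single delta-function per irreducible loop of $\alpha_{\Psi\Phi}$.

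For the first equality, I would observe that at the regularized scale $\delta$, the plaquette holonomy distributions $\delta(H_{p^j})$ act on $L^2(\mathcal{A}/\mathcal{G})$ by constraining, for each $p^j$, the $G$-element obtained by path-ordered multiplication of link holonomies around $p^j$. Because distinct plaquettes can share only boundary edges whose link-variables commute as arguments of $\Psi$ and $\Phi$, the product $\prod_{p^j}\delta(H_{p^j})$ is invariant under any permutation $\sigma(\{j\})$ of plaquette labels. The sum $\sum_{\sigma(\{j\})}$ therefore contributes $N_{\delta,p}!$ identical terms, and the normalization $1/N_{\delta,p}!$ precisely cancels this multiplicity, establishing the first equality.

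For the second equality, I would partition the regularizing plaquettes according to the 2-cells of $\Sigma_{\Psi\Phi}$: by Definition \ref{cellular-decomposition}, each plaquette $p^j$ lies in exactly one 2-cell whose oriented boundary is an irreducible loop $\alpha \in \alpha_{\Psi\Phi}$. A Bianchi-type identity relates the (suitably ordered) product of plaquette holonomies inside such a 2-cell to the boundary holonomy $H_\alpha$, since every edge interior to the 2-cell is shared by two adjacent plaquettes and appears with opposite orientations, cancelling as $h\,h^{-1} = 1\!\!1$. Because $\Psi$ and $\Phi$ are spin-network states supported on $\gamma_{\Psi\Phi}$, the integrand $\overline{\Psi}\Phi$ in \eqref{inner-product} does not depend on any interior-edge variable; integrating out those variables against $\prod_{p \subset c_\alpha}\delta(H_p)$ collapses, cell by cell, the $N_{c_\alpha}$-fold product of plaquette constraints into the single factor $\delta(H_\alpha)$. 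Concatenating over all 2-cells yields $\prod_{\alpha \in \alpha_{\Psi\Phi}}\delta(H_\alpha)$, an expression already independent of $\delta$, so the $\delta \to 0$ limit is trivial and delivers the right-hand side of \eqref{dec-phys-inn-pro}.

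The main obstacle will be making sense of the distributional product $\prod_{p^j}\delta(H_{p^j})$ on $G^n$, whose naive definition is ill-posed because many plaquette constraints are redundant. The resolution is that the pairing is taken against gauge-invariant spin-network integrands supported on $\gamma_{\Psi\Phi}$, so the redundant constraints integrate to finite Haar-measure factors that are absorbed by the symmetrization and the $1/N_{\delta,p}!$ normalization of step one. Once this is made precise for a single 2-cell, concatenation across $\Sigma_{\Psi\Phi}$ is routine, and the final expression is then recognized as the projector onto the subspace of flat connections on the 1-skeleton $\gamma_{\Psi\Phi}$, consistent with the interpretation of $P$ from \eqref{phys-pro}.
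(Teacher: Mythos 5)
Your proposal is correct and follows essentially the same route as the paper, which simply appeals to the chain of regularization arguments in Eqs.~\eqref{phys-pro}, \eqref{inn-pro-phys}, \eqref{smeared-curve} and \eqref{reg-inn-pro} without spelling out the two reductions. You have in fact supplied the details the paper leaves implicit --- the permutation-invariance of the product of plaquette delta-functions cancelling the $1/N_{\delta,p}!$ normalization, and the cell-by-cell convolution of plaquette deltas into a single $\delta(H_\alpha)$ per irreducible loop after integrating out interior edges not supporting $\gamma_{\Psi\Phi}$ --- which is exactly the standard Noui--Perez computation the paper is invoking.
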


\begin{proof}
The proof follows from the previous arguments, in equations
\eqref{phys-pro}, \eqref{inn-pro-phys}, \eqref{smeared-curve} and
\eqref{reg-inn-pro}.
\end{proof}

The physical inner product defined in \eqref{reg-inn-pro} provides the amplitudes for the transition among states in the topological theory. Hence, it individuates the dynamics of the $BF$ theory.

A framework has been constructed in \cite{marciano:20} that associates spin-network states that are colored with irreducible representations of some Lie group $G$ to training and test sample states in (quantum) machine learning. A typical case-study is represented by training and test samples that are images composed of pixels. The set of pixels that compose these images can be associated to the discretization of some subset $\Sigma_2$ of a $2$-dimensional manifold $S_2$. Dual 1-skeleton graphs can be then constructed, with four-valent nodes positioned inside the pixels and links interconnecting the nodes. A discrete or continuous Lie group, or eventually a non-trivial Hopf algebra, can be then employed to provide the colouring of the graphs. This choice is dependent on the features of the set of pixels. For instance, if the pixels deploy eleven different shades of grey, including the colours white and black, one can assign to this set of shades of grey the weights $m=\{0,1,\dots 10 \} \in\mathbb{N}$. To each link connecting the nodes of adjacent pixels sharing boundaries, we can then assign the minimum of the weights associated to each adjacent pixel. In other words, given the node $n_1$ and its adjacent node $n_2$, we assign to the link $l_{12}$ interconnecting them the weight $m_{12}={\rm min}\{m_1,m_2\}$. If the samples show a rotational invariance, one can implement the Lie group $\SO(3)$, or its double covering $\SU(2)$, as the local gauge symmetry of the topological theory. If we pick up a $3$-dimensional $BF$ theory with $G=\SU(2)$ local symmetry, we can assign to the link $l_{ij}$ interconnecting the nodes $n_i$ and $n_j$ of the 1-skeleton graph associated to the pixellized image irreducible representations (irreps) of $\SU(2)$ in the form $j_{ij}=m_{ij}/2$. These can be depicted as $m_{ij}$ strands in the fundamental representations that are symmetrized according to the Jones-Wenzl projector \cite{kauffman:94}. At the same time, irreps must be compatible at each node. Therefore, the unpaired strands at the nodes shall be closed in loops and ``removed'' from the irreps assignments to each link. In this way we can obtain irreps that are compatible at each node, and finally consider to saturate the representation indices with those ones of compatible  intertwiner tensors at the nodes. The resulting construction provides, for each element of either the training or the test sample, a state of a topological quantum neural network (TQNN), namely a spin-network state that is an element of $L^2( \mathcal{A}/\mathcal{G})$ and evolves according to the projector of a $\SU(2)$-symmetric $3$-dimensional $BF$ theory, as implemented in the inner product \eqref{dec-phys-inn-pro}. Therefore, the physical inner product provides a cobordism $\mathscr{S}$ of gauge-invariant Hilbert spaces spanned by the spin-networks basis. In particular, the evolution $\mathscr{S}$ instantiates a classifier $\mathscr{I}$, when states of the input gauge-invariant Hilbert space are confronted with archetypes of the output gauge-invariant Hilbert space. States of these two boundary Hilbert spaces are the input and output states of the TQNN, respectively, and their functorial evolution by physical inner product encodes a topological data structure, which we refer to as {\em a Topological Quantum Neural 2-Complex (TQN2C).}

We can generalize this construction to any $d$-dimensional manifold $\mathcal{M}_d$ and any group $G$, introducing  TQNNs as it follows.

\begin{definition} \label{topological-quantum-neural-networks}
{\rm Topological quantum neural networks (TQNNs)} comprise elements of the training and test samples in frameworks of quantum machine learning that are represented by spin-network states of some gauge-invariant Hilbert space $L^2(\mathcal{A}/\mathcal{G})$ and that evolve according to a $BF$ theory over a $G$-bundle and $d$-dimensional base manifold $\mathcal{M}_d$. The cobordism $\mathscr{S}$ instantiated by the physical inner product in the $BF$ theory realizes a classifier $\mathscr{I}$ for these TQNN elements, which implements a topological quantum neural 2-complex (TQN2C).
\end{definition}

Diagrammatically, the Identity map $\mathscr{B}=\mathcal{M}_d \rightarrow \mathscr{B}=\mathcal{M}_d$ is implemented by a unitary propagator $\mathcal{P}_U$ represented by the action of the physical projector $P$.  Hence a classifier between TQNN elements can be represented as it follows:

\begin{equation} \label{full-TQN2C}
\begin{gathered}
\begin{tikzpicture}
\draw[rotate=90] (0,0) ellipse (2.5cm and 1 cm);
\node[above] at (0,1.7) {$\mathscr{B}$};
\draw[rotate=90] (0,-4) ellipse (2.5cm and 1 cm);
\node[above] at (4,1.7) {$\mathscr{B}$};
\node at (2,2.8) {$\mathscr{S}$};
\draw [thick] (0,2.5) -- (4,2.5);
\draw [thick] (0,-2.5) -- (4,-2.5);
\draw [thick, ->] (0,-2.7) -- (0,-3.7);
\draw [thick, ->] (4,-2.7) -- (4,-3.7);
\node at (-0.1,-4) {$\mathcal{H}_A|_i$};
\node at (3.9,-4) {$\mathcal{H}_A|_k$};
\draw [thick, ->] (0.5,-4) -- (3.3,-4);
\node at (1.9,-3.7) {$\mathcal{P}_U$};
\end{tikzpicture}
\end{gathered}
\end{equation}
\noindent
that maps the cobordism $\mathscr{S}=\mathscr{I}$ of $\mathscr{B}=\mathcal{M}_d$ to $A$'s apparent Hilbert space $\mathcal{H}_A$ describing bit-string encodings on $\mathscr{B}=\mathcal{M}_d$, as measured at ``times'' input $i$ and output $k$.  Hence the proof of Theorem \ref{thm2} itself defines a TQN2C.  Similarly, the diagrams:

\begin{equation} \label{S-TQN2C}
\begin{gathered}
\begin{tikzpicture}[every tqft/.append style={transform shape}]
\draw[rotate=90] (0,0) ellipse (2.55cm and 1.35 cm);
\node[above] at (0,1.7) {$\mathscr{B}=\mathcal{M}_d$};
\node at (-0.4,0) {$S$};
\begin{scope}[tqft/every boundary component/.style={draw,fill=green,fill opacity=1}]
\begin{scope}[tqft/cobordism/.style={draw}]
\begin{scope}[rotate=90]
\pic[tqft/cylinder, name=a];
\pic[tqft/pair of pants, anchor=incoming boundary 1, name=b, at=(a-outgoing boundary 1)];
\end{scope}
\end{scope}
\end{scope}
\draw[rotate=90] (0,-4) ellipse (2.55cm and 1.35 cm);
\node[above] at (4,1.7) {$\mathscr{B}=\mathcal{M}_d$};
\node at (2,0.7) {$\mathscr{S}=\mathscr{I}$};
\node at (4.4,1.1) {$S_1$};
\node at (4.4,-1.1) {$S_p$};
\draw [thick, ->] (0,-2.7) -- (0,-3.8);
\draw [thick, ->] (4,-2.7) -- (4,-3.8);
\node at (0,-4.2) {$\mathcal{H}_S|_i$};
\node at (4.2,-4.2) {$ {\bigotimes}_{ \{ i \} } \mathcal{H}_{S_{ \{ i \} } }|_k$};
\draw [thick, ->] (0.6,-4.2) -- (2.8,-4.2);
\node at (1.9,-3.9) {$\mathcal{P}_U$};
\end{tikzpicture}
\end{gathered}
\end{equation}

and

\begin{equation} \label{PQ-TQN2C}
\begin{gathered}
\begin{tikzpicture}[every tqft/.append style={transform shape}]
\draw[rotate=90] (0,0) ellipse (2.55cm and 1.35 cm);
\node[above] at (0,1.7) {$\mathscr{B}=\mathcal{M}_d$};
\node at (-0.5,1.1) {$S_1$};
\node at (-0.5,-1.1) {$S_p$};
\begin{scope}[tqft/every boundary component/.style={draw,fill=green,fill opacity=1}]
\begin{scope}[tqft/cobordism/.style={draw}]
\begin{scope}[rotate=90]
\pic[tqft/reverse pair of pants, at={(-1,0)}, name=a];
\pic[tqft/pair of pants, anchor=incoming boundary 1, name=b, at=(a-outgoing boundary 1)];
\end{scope}
\end{scope}
\end{scope}
\draw[rotate=90] (0,-4) ellipse (2.55cm and 1.35 cm);
\node[above] at (4,1.7) {$\mathscr{B}=\mathcal{M}_d$};
\node at (2,0.7) {$\mathscr{S}=\mathscr{I}$};
\node at (4.4,1.1) {$S_{1'}$};
\node at (4.4,-1.1) {$S_{p'}$};
\draw [thick, ->] (0,-2.7) -- (0,-3.8);
\draw [thick, ->] (4,-2.7) -- (4,-3.8);
\node at (0,-4.2) {$ {\bigotimes}_{ \{ i \} } \mathcal{H}_{S_{ \{ i \} } }|_i$};
\node at (4.2,-4.2) {$ {\bigotimes}_{ \{ i' \} } \mathcal{H}_{S_{ \{ i' \} } }|_k$};
\draw [thick, ->] (1.2,-4.2) -- (2.8,-4.2);
\node at (1.9,-3.9) {$\mathcal{P}_U$};
\end{tikzpicture}
\end{gathered}
\end{equation}
\noindent
define two distinct TQN2Cs on an observed system $S\in \mathcal{M}_d$ between ``times'' input $i$ and output $k$.  Combining Diagrams \eqref{CCCD-to-Cob} and \eqref{S-TQN2C}, and also \eqref{CCCD-to-Cob-2} and \eqref{PQ-TQN2C}, yields the map from QRFs to TQN2Cs.

Through this construction, the sectors $S$ and their TQNNs can be assimilated to the level of the individual qubits on the finite boundary $\mathcal{B}$, which is the screen of the holographic projection of data and information.

\subsection{Spin manifolds and the index theorem}

The topological features of TQNNs enable information transfer and hence pattern recognition among boundary states of their TQN2C functorial evolutions. As we have shown previously, these boundary states are called training and test samples in the jargon of machine learning. Since the support of TQNNs and TQN2C, represented respectively by 1-complexes and 2-complexes, is dual to the discretization of the boundary manifold $\mathscr{B}$ and the bulk manifold $\mathcal{M}$, it is relevant to recall how the topological features of these latter manifolds can be characterized in terms of the operators entering the definition of the theories invoked to establish the classifier amplitudes. This task can be accomplished deploying the powerful tools provided by the Atiyah-Singer theorem \cite{atiyah:63,atiyah:68a,atiyah:68b,atiyah:68c,atiyah:69,atiyah:71a,atiyah:71b,atiyah:84}, which extend to elliptic and skew-adjoint Fredholm operators.

We proceed to recall several distinctive examples of locally symmetric structures for the base manifolds $\mathcal{M}_d$, and consider the cases when these are endowed with a metric. We first focus on metric $d$-dimensional manifolds $\mathcal{M}_d$ that are oriented and Riemannian. The presence of the metric, which for instance can been induced by considering the action \eqref{Pleb}, as we will comment in section \ref{gauge-inv}, enables reduction of the internal symmetry group structure to a spin structure.
Without loss of generality, we may focus on the $d=4$ dimensional case. A point of a Euclidean $4$-dimensional (flat) space manifold $\mathcal{M}_4$, for which the relevant group is $\Spin(4)$, can be thought as the real vector space $V$ of quaternions, represented by $2\times 2$ matrices of the form:
\begin{eqnarray}
Q = \left(\begin{array}{cc} t + \imath z &  -x +\imath y \\ x + \imath y & t - \imath z \end{array}\right)\,,
\end{eqnarray}
with $t$, $x$, $y$ and $z$ real coordinates of the generic point of $\mathcal{M}_4$. $Q$ is generated by the four $2\times 2$ matrices:
\begin{eqnarray}
{\bf 1}= \left(\begin{array}{cc} 1 & 0 \\ 0 & 1 \end{array}\right)\,, \quad
{\bf i}= \left(\begin{array}{cc} 0 & -1 \\ 1 & 0 \end{array}\right)\,,
\quad
{\bf j}= \left(\begin{array}{cc} 0 & \imath \\ \imath & 0 \end{array}\right)\,,
\quad
{\bf k}= \left(\begin{array}{cc} \imath & 0 \\ 0 & -\imath \end{array}\right)\,,
\end{eqnarray}
so that it casts $Q=t {\bf 1} + x {\bf i} + y {\bf j}+ z{\bf k}$, and has determinant that individuates the line element of the Euclidean manifold $\mathcal{M}_4$, namely:
\begin{eqnarray}
\Det~Q = t^2 + x^2 + y^2 + z^2\,.
\end{eqnarray}
(The unit sphere on the Euclidean manifold $\mathcal{M}_4$ can be for instance identified in terms of element of $\SU(2)=\{Q\in V: \Det(Q)=1 \}$.) The group of the isomorphisms from $V$ to the linear group $GL(V)$, is the representation of the compact Lie group $\Spin(4)$, which is the direct product of two copies of $\SU(2)$, namely $\Spin(4)= \SU_+(2) \times \SU_-(2)$. Denoting the elements of the decomposition of $\Spin(4)$ with $g_{\pm}\in \SU_{\pm}(2)$, the representation $\rho$ of the action of $\Spin(4)$ on $Q$ is expressed by:
\begin{eqnarray}
\rho(g_+, g_-)(Q)= g_- Q g_+\,.
\end{eqnarray}
Using the Euclidean metric structure of $\mathcal{M}_4$, provided by the determinant of $Q$, one easily recognizes that:
\begin{eqnarray}
\Det(g_- Q g_+)= \Det ~Q  \,,
\end{eqnarray}
i.e. the representation $\rho$ preserves the Euclidean inner product, thus ensuring the reduction:
\begin{eqnarray}
\rho: \Spin(4) \rightarrow \SO(4) \in \GL(V).
\end{eqnarray}

A similar construction can be provided for a Lorentzian spacetime manifold, with relevant group $\SL(2,\mathbb{R})$. In this case, points of a (flat) Lorentzian manifold $\mathcal{M}_4$ can be thought as a $2 \times 2$ Hermitian matrix:
\begin{eqnarray}
X= \left(\begin{array}{cc} t + z &  x -\imath y \\ x + \imath y & t - z \end{array}\right)= t \, 1\!\!1 + x\,\sigma_x + y\,\sigma_y + z\,\sigma_z \,,
\end{eqnarray}
with $1\!\!1$ identity and $\sigma_x$, $\sigma_y$ and $\sigma_z$ Pauli matrices.  The action by a group element $g\in \SL(2,\mathbb{R})$ on $X$ is defined by:
\begin{eqnarray}
X \rightarrow g X^t \bar{g}\,,
\end{eqnarray}
which leaves invariant the line element on  $\mathcal{M}_4$, namely $\Det(g X^t \bar{g})= \Det~X$. This implies that $\SL(2,\mathbb{R})$ covers the identity in the Lorentz group.

Coming back to a manifold $\mathcal{M}_4$ endowed with Euclidean metric, we pick now the internal symmetry structure of the tangent bundle $T \mathcal{M}_4$ of $\mathcal{M}_4$ to be $\SO(4)$, with reduction from the $\GL(4,\mathbb{R})$ being realized by the Riemannian metric. Choosing for $T \mathcal{M}_4$ a trivializing open cover  $\{ U_\alpha: \alpha \in A \}$, with index $\alpha\in A$, we can write the transition functions that take values to $\SO(4)$ as $g_{\alpha \beta}: U_\alpha \cap U_\beta \rightarrow SO(4)\subset \GL(4,\mathbb{R})$.

\begin{definition} \label{spin-structure}
A {\rm spin structure} on $\mathcal{M}_4$ is provided by a open covering $\{ U_\alpha: \alpha \in A \}$ of  $\mathcal{M}_4$  and by the transition functions $\tilde{g}_{\alpha \beta}: U_\alpha \cap U_\beta \rightarrow \Spin(4)$, which are such that $\rho \circ \tilde{g}_{\alpha \beta} = g_{\alpha \beta}$ and the cocycle condition holds, namely  $\tilde{g}_{\alpha \beta} \tilde{g}_{\beta \gamma}=\tilde{g}_{\alpha \gamma}$ on $U_\alpha \cap U_\beta \cap U_\gamma$.
\end{definition}

A manifold endowed with a spin structure is called a spin manifold.
We can further extend Definition \ref{spin-structure} in order to encode a $\Spin(4)^c$ structure.

\begin{definition} \label{spin-c-structure}
A {\rm spin$\!\!\!\phantom{a}^c$ structure} on $\mathcal{M}_4$ is provided by a open covering $\{ U_\alpha: \alpha \in A \}$ of  $\mathcal{M}_4$  and by the transition functions $\tilde{g}_{\alpha \beta}: U_\alpha \cap U_\beta \rightarrow \Spin(4)\!\!\!\phantom{a}^c$, which are such that $\rho \circ \tilde{g}_{\alpha \beta} = g_{\alpha \beta}$ and the cocycle condition holds.
\end{definition}

The conditions for $\mathcal{M}_4$ to posses a spin$\!\!\!\phantom{a}^c$ structure are that: i) there exist transition functions $\tilde{g}_{\alpha \beta}: U_\alpha \cap U_\beta \rightarrow Spin(4)$; ii) there exists a complex line bundle $L$ over $\mathcal{M}_4$ with hermitian metric and transitions functions $h_{\alpha \beta}: U_\alpha \cap U_\beta \rightarrow U(1)$. Then, one can recover a spin$\!\!\!\phantom{a}^c$ structure on $\mathcal{M}_4$ by considering the transition functions $h_{\alpha \beta} \tilde{g}_{\alpha \beta}: U_\alpha \cap U_\beta \rightarrow Spin(4)\!\!\!\phantom{a}^c$.

Without proving it, we enunciate the following theorem.

\begin{theorem}[\cite{lawson:89}] \label{compact-oriented-manifold}
Every compact oriented $4$-dimensional manifold has a spin$\!\!\!\phantom{a}^c$ structure.
\end{theorem}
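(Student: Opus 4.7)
The plan is to verify that the standard cohomological obstruction to a $\Spin(4)^c$ reduction vanishes for every compact oriented $4$-manifold $\mathcal{M}_4$. By general obstruction theory, $\mathcal{M}_4$ admits a $\Spin(4)^c$ structure if and only if the integral class $W_3(\mathcal{M}_4) := \beta(w_2(\mathcal{M}_4)) \in H^3(\mathcal{M}_4; \bZ)$ vanishes, where $\beta$ is the Bockstein associated to the short exact sequence $0 \to \bZ \xrightarrow{\cdot 2} \bZ \to \bZ/2 \to 0$. Equivalently, a $\Spin(4)^c$ structure exists exactly when the second Stiefel--Whitney class $w_2(\mathcal{M}_4)$ lies in the image of the mod-$2$ reduction map $\rho : H^2(\mathcal{M}_4; \bZ) \to H^2(\mathcal{M}_4; \bZ/2)$. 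So the goal reduces to producing an integral lift of $w_2(\mathcal{M}_4)$.

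To find such a lift, I would invoke Wu's formula, which for a compact oriented $4$-manifold characterises $w_2(\mathcal{M}_4)$ uniquely by the identity $\langle w_2(\mathcal{M}_4) \cup x, [\mathcal{M}_4] \rangle = \langle x \cup x, [\mathcal{M}_4] \rangle$ for every $x \in H^2(\mathcal{M}_4; \bZ/2)$, uniqueness following from the non-degeneracy of the mod-$2$ intersection pairing under Poincar\'e duality. The central geometric input is then the unimodularity of the integral intersection form on $H^2(\mathcal{M}_4; \bZ)/\mathrm{tor}$, also supplied by Poincar\'e duality on a compact oriented $4$-manifold. Since the assignment $x \mapsto x \cdot x \pmod 2$ is a $\bZ/2$-linear functional on this free quotient, unimodularity produces an integral class $c \in H^2(\mathcal{M}_4; \bZ)$ with $c \cdot x \equiv x \cdot x \pmod 2$ for all $x$. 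Comparing with Wu's formula forces $\rho(c) = w_2(\mathcal{M}_4)$, establishing the integral lift and hence $W_3(\mathcal{M}_4) = 0$.

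The last step is to translate this cohomological vanishing into the cocycle data demanded by Definition \ref{spin-c-structure}. Given the integral class $c$ above, I would take a complex line bundle $L$ on $\mathcal{M}_4$ with $c_1(L) = c$ and transition functions $h_{\alpha\beta}: U_\alpha \cap U_\beta \to U(1)$, and use the local triviality of the frame bundle to produce local $\Spin(4)$-valued lifts $\tilde{g}_{\alpha\beta}$ of the $\SO(4)$-cocycle $g_{\alpha\beta}$. Failure of the cocycle condition for the $\tilde{g}_{\alpha\beta}$ alone on triple overlaps is measured, on the nose, by a $\bZ/2$-cocycle representing $w_2(\mathcal{M}_4)$; since this cocycle equals the mod-$2$ reduction of the $U(1)$-cocycle associated to $L$, the product $h_{\alpha\beta}\tilde{g}_{\alpha\beta}$ lands in $\Spin(4)^c$ and satisfies the cocycle condition, which is precisely the data required.

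The main obstacle I anticipate is the careful treatment of torsion in $H^2(\mathcal{M}_4; \bZ)$: the unimodularity argument produces the lift only modulo torsion, and one must rule out the possibility that $\rho(c)$ differs from $w_2(\mathcal{M}_4)$ by a $\bZ/2$-class coming from the torsion subgroup. This is handled by a diagram chase through the universal coefficient sequence relating $H^2(\mathcal{M}_4; \bZ)$, $H^2(\mathcal{M}_4; \bZ/2)$, and $\mathrm{Tor}(H^3(\mathcal{M}_4; \bZ), \bZ/2)$, in parallel with the Bockstein sequence, using the fact that Wu's formula pins down $w_2(\mathcal{M}_4)$ exactly; once that bookkeeping is done the conclusion $W_3(\mathcal{M}_4) = 0$ is immediate.
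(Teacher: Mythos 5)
The paper states Theorem \ref{compact-oriented-manifold} explicitly \emph{without} proof (``Without proving it, we enunciate the following theorem''), deferring entirely to \cite{lawson:89}, so there is no in-paper argument to compare against. Your proposal is the classical Hirzebruch--Hopf obstruction-theoretic proof that the cited reference itself relies on: identify the obstruction as $W_3(\mathcal{M}_4)=\beta(w_2(\mathcal{M}_4))$, reduce to finding an integral lift of $w_2$, and produce that lift as the characteristic element of the unimodular intersection form via Wu's formula. The reduction of the cocycle-level Definition \ref{spin-c-structure} to the cohomological statement, and the final step of twisting the local $\Spin(4)$ lifts by the $U(1)$-cocycle of a line bundle $L$ with $c_1(L)=c$, are both correct and are exactly how the spin$^c$ structure is assembled in the literature. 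So the approach is sound and is essentially ``the same as the paper's'' in the only sense available, namely that it reconstructs the standard proof behind the citation.

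The one place where your plan is thinner than you acknowledge is the torsion step. The characteristic-element argument gives $\langle (\rho(c)-w_2)\cup x,[\mathcal{M}_4]\rangle=0$ only for $x$ in the image $I$ of $\rho: H^2(\mathcal{M}_4;\bZ)\to H^2(\mathcal{M}_4;\bZ/2)$, hence only that $\rho(c)-w_2$ lies in the annihilator $I^{\perp}$ under the mod-$2$ intersection pairing. A dimension count through the universal coefficient theorem shows $\dim I^{\perp}$ equals the number of $\bZ/2^k$ summands in $H_1(\mathcal{M}_4;\bZ)$, so when $H_1$ has $2$-torsion the ambiguity you flag is genuine and is \emph{not} removed by a purely formal diagram chase between the universal-coefficient and Bockstein sequences: one needs an additional input (e.g.\ the vanishing of the top Wu class $v_3$, or the linking-form argument used in careful treatments of the non-simply-connected case) to show $w_2$ actually lies in $I$. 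Since you correctly isolate this as the obstacle and the statement is true with a known resolution, I would call this a correct proof outline with one under-justified step rather than a gap in strategy; but as written, ``once that bookkeeping is done the conclusion is immediate'' overstates how routine that bookkeeping is.
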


Once a spin$\!\!\!\phantom{a}^c$ structure is provided through $\tilde{g}_{\alpha \beta}: U_\alpha \cap U_\beta \rightarrow \Spin(4)\!\!\!\phantom{a}^c$, one can introduce the transition functions $\pi \circ \tilde{g}_{\alpha \beta}: U_\alpha \cap U_\beta \rightarrow  {\rm U}(1)$, which determine a complex line bundle denoted with $L^2$.

Furthermore, since the transition functions satisfy the cocycle condition, one can naturally introduce vector bundles. Assuming indeed that $\mathcal{M}_4$ has a spin structure induced by $\tilde{g}_{\alpha \beta}: U_\alpha \cap U_\beta \rightarrow \Spin(4)$, then two vector bundles of rank 2 over $\mathcal{M}_4$, respectively $W_+$ and $W_-$, can be introduced by the transition functions:
\begin{equation}
\rho^c_+ \circ \tilde{g}_{\alpha \beta}: U_\alpha \cap U_\beta \rightarrow \SU_+(2)\,, \qquad \rho^c_- \circ \tilde{g}_{\alpha \beta}: U_\alpha \cap U_\beta \rightarrow \SU_-(2)\,.
\end{equation}
Thanks to the spin structure on $\mathcal{M}_4$, one can represent the complexified tangent bundle on  $\mathcal{M}_4$ in terms of two complex vector bundles that are more basic, namely
$T \mathcal{M}_4 \otimes \mathbb{C} \cong \Hom_{\mathbb{C}}(W_+,W_-)$. The bundles $W_+$ and $W_-$ can also be regarded as quaternionic line bundles over $\mathcal{M}_4$. Once the complex line bundle $L$ over $\mathcal{M}_4$ is also introduced, one obtains:
\begin{equation} \label{Hom}
T \mathcal{M}_4 \otimes \mathbb{C} \cong \Hom_{\mathbb{C}}(W_+ \otimes  L,W_-  \otimes  L)\,.
\end{equation}

The representations $\rho^c_+$ and $\rho^c_-$ induce the ${\rm U}(2)$-bundles $W_+\otimes L$ and $W_-\otimes L$. Similar to the case of spin structures, spin$\!\!\!\phantom{a}^c$ structures enable to represent a complexified tangent bundle in terms of the more basic bundles $W_+\otimes L$ and $W_-\otimes L$, according to \eqref{Hom}. The sections of $W_+\otimes L$ and $W_-\otimes L$ are respectively spinor fields of positive and negative chirality.

A spin connection can be introduced for a Riemannian manifold $\mathcal{M}_4$. Given the fact that the tangent bundle $T \mathcal{M}_4$ of $\mathcal{M}_4$ possesses as a canonical connection, the Levi-Civita connection, one can extend to $W_+$ and $W_-$ canonical connections inherited from the Levi-Civita connection. At this purpose, one can first construct a connection on the bundle $\End(W)$, which is easily recovered from the Levi-Civita connection on $T \mathcal{M}_4$, and hence introduce a connection on $W$ itself, a $\Spin(4)$-connection.

Finally, it is possible to prove that there exists a one-to-one correspondence between $\Spin(4)$-connections on $W$ and $\SO(4)$-connections on $TM$.

\begin{theorem}[\cite{lawson:89}] \label{connections}
If $\mathcal{M}_4$ is a $4$-dimensional oriented Riemannian manifold endowed with a spin structure, then there exists a unique $\Spin(4)$-connection on $W$, with $W=W_+ \oplus W_-$, which induces the Levi-Civita connection on $\End(W)$.
\end{theorem}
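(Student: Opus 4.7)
The plan is to work in the principal-bundle formalism. A spin structure, as in Definition \ref{spin-structure}, equips $\mathcal{M}_4$ with a principal $\Spin(4)$-bundle $P_{\Spin}$ together with a fibrewise double cover $\pi \colon P_{\Spin} \to P_{\SO}$ of the oriented orthonormal frame bundle, and the transition functions $\tilde g_{\alpha\beta}$ satisfy $\rho \circ \tilde g_{\alpha\beta} = g_{\alpha\beta}$. The Levi-Civita connection is a principal $\SO(4)$-connection on $P_{\SO}$, i.e.\ an $\mathfrak{so}(4)$-valued connection 1-form $\omega_{\mathrm{LC}}$ satisfying the usual equivariance and vertical-reproduction axioms. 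The goal is to produce a unique principal $\Spin(4)$-connection $\tilde\omega$ on $P_{\Spin}$ whose pushforward under $\rho$ yields $\omega_{\mathrm{LC}}$, and then to verify that the induced covariant derivative on $W = W_+ \oplus W_-$ reproduces the Levi-Civita connection on $\End(W)$.

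First I would exploit the fact that the covering map $\rho \colon \Spin(4) \to \SO(4)$ is a local diffeomorphism, so its derivative at the identity $d\rho_e \colon \mathfrak{spin}(4) \to \mathfrak{so}(4)$ is a Lie algebra isomorphism. I then define
\begin{equation}
\tilde\omega \;:=\; (d\rho_e)^{-1} \circ \pi^{*}\omega_{\mathrm{LC}}
\end{equation}
on $P_{\Spin}$. To confirm $\tilde\omega$ is a principal connection, I would check (i) vertical reproduction $\tilde\omega(\xi^{*}) = \xi$ for every $\xi \in \mathfrak{spin}(4)$, which follows from the fact that $\pi$ sends the fundamental vector field of $\xi$ to the fundamental vector field of $d\rho_e(\xi)$; and (ii) $\Spin(4)$-equivariance $R_{g}^{*}\tilde\omega = \Ad_{g^{-1}}\tilde\omega$, which is inherited from the $\SO(4)$-equivariance of $\omega_{\mathrm{LC}}$ together with the identity $\rho \circ \Ad_{g} = \Ad_{\rho(g)} \circ \rho$. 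The cocycle condition of Definition \ref{spin-structure} ensures the local definitions glue into a global 1-form.

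Next, using the associated-bundle construction, $W_{\pm}$ arises from the representations $\rho^{c}_{\pm}$, and $\tilde\omega$ induces a covariant derivative $\nabla^{W}$ on $W$ and hence, by the Leibniz rule, $\nabla^{\End(W)}$ on $\End(W)$. The essential compatibility step is to verify that under the canonical $\Spin(4)$-equivariant embedding of $T\mathcal{M}_4$ into $\End(W)$ (via Clifford multiplication, consistent with the identification in \eqref{Hom}), the restriction of $\nabla^{\End(W)}$ agrees with the Levi-Civita connection; this reduces to the statement that the composition of $\rho^{c}_{+} \oplus \rho^{c}_{-}$ with the adjoint action on Clifford endomorphisms equals $\rho$ composed with the standard action of $\SO(4)$ on the complexified tangent space, which is the content of the definition of the spin representation. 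Uniqueness then follows immediately: any other $\Spin(4)$-connection inducing the Levi-Civita connection differs from $\tilde\omega$ by a $\mathfrak{spin}(4)$-valued horizontal 1-form whose image under $d\rho_e$ vanishes, and injectivity of $d\rho_e$ forces this difference to be zero.

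The main obstacle I anticipate is the compatibility check on $\End(W)$, namely tracking the identification \eqref{Hom} carefully enough to see that the induced connection on the Clifford image of $T\mathcal{M}_4$ really is the Levi-Civita connection rather than a connection differing by torsion or a scalar twist. The bundle-theoretic steps (existence of $\tilde\omega$ as a principal connection, equivariance, gluing) are routine consequences of the fact that $\rho$ is a covering homomorphism, and the uniqueness is a one-line linear-algebra argument once $d\rho_e$ is recognised as an isomorphism; it is the representation-theoretic intertwining that carries the geometric content of the theorem.
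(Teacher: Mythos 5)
The paper does not actually prove Theorem \ref{connections}: it states it as a cited result from Lawson--Michelsohn \cite{lawson:89} (cf.\ also \cite{moore:01}), with only the informal remark preceding it that one ``first construct[s] a connection on the bundle $\End(W)$ \dots and hence introduce[s] a connection on $W$ itself.'' So there is no in-paper argument to compare against; I can only assess your proposal on its own terms.

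Your proposal is correct and is the standard argument. Lifting $\omega_{\mathrm{LC}}$ through the covering $\rho\colon \Spin(4)\to\SO(4)$ via $\tilde\omega=(d\rho_e)^{-1}\circ\pi^*\omega_{\mathrm{LC}}$ is exactly how existence is obtained in \cite{lawson:89}, and the verification of vertical reproduction and equivariance is routine as you say. The compatibility step you flag as the ``main obstacle'' is indeed the only place where content lives, but it is discharged by definition: $\rho$ is defined as the (twisted) adjoint action of $\Spin(4)$ on $V\subset\Cl(V)\cong\End(W)$, so the equivariance of the embedding $T\mathcal{M}_4\otimes\mathbb{C}\hookrightarrow\End(W)$ in \eqref{Hom} is built in, and no torsion or scalar twist can appear. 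One remark on uniqueness: your one-line argument tacitly uses that ``induces the Levi-Civita connection on $\End(W)$'' pins down the pushforward $\SO(4)$-connection, which it does via the restriction to $T\mathcal{M}_4\subset\End(W)$. The route taken in \cite{moore:01} makes this cleaner and purely local: a $\Spin(4)$-connection on $W$ has local connection forms valued in $\mathfrak{spin}(4)\cong\mathfrak{su}(2)\oplus\mathfrak{su}(2)$ acting block-diagonally on $W_+\oplus W_-$; the induced connection on $\End(W)$ determines these forms up to central elements of the image of $\mathfrak{spin}(4)$ in $\End(W)$, and semisimplicity (trivial center) of $\mathfrak{spin}(4)$ forces the ambiguity to vanish. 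Either phrasing is fine; your principal-bundle version buys a cleaner existence statement, while the local version makes uniqueness a genuine one-liner without routing through the frame bundle.
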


This theorem can be extended to a complex line bundle $L$ over a spin manifold $\mathcal{M}_4$, hence allowing us to define a connection on the bundle $W\otimes L$ by simply taking the tensor product of the connection on $W$ and of the connection on $L$.

Why we consider, as a foundation for the tensor algebra on a Riemannian manifold, the two vector bundles $W_+$ and $W_-$, alternatively $W_+\otimes L$ and $W_-\otimes L$, instead of the tangent bundle $T \mathcal{M}_4$, is that these structure are in fact more basic \cite{moore:01}. Furthermore, these structures provide the basis for introducing the Clifford algebra. To this extent we remind the reader that an element of the Euclidean space $V$ can be considered as a complex linear homomorphism from $W_+$ to $W_-$ and then represented by the quaternion $Q$ previously introduced. The construction can then be recast in terms of a skew-hermitian endomorphism of $W=W_+ \oplus W_-$, this in turn represented by a $2 \times 2$ quaternion matrix of the form:
\begin{equation}
\theta(Q)=
\left(\begin{array}{cc}
0 & -\bar(Q)^t \\
Q & 0
\end{array}
\right)\,,
\end{equation}
defining a complex linear map $\theta: V \otimes \mathbb{C} \rightarrow \End(W)$. $\End(W)$ is a $16$-dimensional algebra over $\mathbb{C}$, with algebra multiplication induced by the matrix multiplication and such that, for $Q\in V$,
\begin{equation} \label{com-clifford}
(\theta(Q))^2=
\left(\begin{array}{cc}
-\bar(Q)^t Q  & 0\\
0 & - Q^t \bar(Q)
\end{array}
\right)
=(- \Det~Q) 1\!\!1 \,.
\end{equation}
The composition in \eqref{com-clifford}, referred to as Clifford multiplication, enables to interpret $\End(W)$ as the complexification of the Clifford algebra of $V$, denoted $\Cl(V)$. One can also directly obtain the Clifford algebra $\Cl(V \otimes \mathbb{C})$ by constructing complex $4 \times 4$ matrices $\gamma_1$, $\gamma_2$, $\gamma_3$ and $\gamma_4$ fulfilling the product rule:
\begin{equation} \label{gamma-clifford}
\gamma_i \cdot \gamma_j + \gamma_j \cdot \gamma_i = - 2 \delta_{ij}\,,
\end{equation}
for $i,j=1,\dots 4$. The basis of the complex vector space $\End(W)$ is then seen to consist of matrices $1\!\!1_{4\times 4}$, $\gamma_i$, $\gamma_i \gamma_j$ for $i<j$, $\gamma_i \gamma_j \gamma_k$ for $i<j<k$ and $\gamma_1 \gamma_2 \gamma_3 \gamma_4$. One may then identify the complexified second exterior power $\Lambda^2 V \otimes \mathbb{C}$ with the complex subspace of $\End(W)$ generated by the $\gamma_i \gamma_j$ elements ($i<j$).

At the same time, one may denote by $\Lambda V \otimes \mathbb{C}$ the complexified first exterior power, and with $\Lambda^0 V \otimes \mathbb{C}$ the complex numbers field. The complexified third and fourth exterior powers $\Lambda^3 V \otimes \mathbb{C}$ and $\Lambda^4 V \otimes \mathbb{C}$ are respectively the complex subspaces of $\End(W)$ generated by $\gamma_i \gamma_j \gamma_k$  (for $i<j<k$) and $\gamma_1 \gamma_2 \gamma_3 \gamma_4$. This enables to write the decomposition of $\End(W)$ in its subspaces:
\begin{equation}
\End(W)= \Lambda^0 V \otimes \mathbb{C} \oplus \Lambda V \otimes \mathbb{C} \oplus \Lambda^2 V \otimes \mathbb{C}  \oplus \Lambda^3 V \otimes \mathbb{C}  \oplus \Lambda^4 V \otimes \mathbb{C}\,.
\end{equation}

Once a spin$\!\!\!\phantom{a}^c$ structure over a $4$-dimensional manifold $\mathcal{M}_4$ is introduced, a $\Spin(4)\!\!\!\phantom{a}^c$-connection can be also introduced and defined on the spin bundle $W\otimes L$, which is inherited from the canonical Levi-Civita connection on $T \mathcal{M}_4$.

This finally allows, while considering a $4$-dimensional Riemannian manifold $\mathcal{M}_4$ with spin$\!\!\!\phantom{a}^c$ structure and $\Spin(4)\!\!\!\phantom{a}^c$-connection $d_A$ on the spin bundle $W\otimes L$, to introduce the Dirac operator on the smooths section $\Gamma$ of the bundle.

\begin{definition} \label{dirac-operator}
The {\rm Dirac operator} $D_A: \Gamma(W\times L) \rightarrow \Gamma(W\times L)$ with coefficients in the line bundle $L$ is defined by the formula
$$
D_A(\psi)=\sum_{i=1}^4 \gamma_i \cdot d_A \psi(\gamma_i) = \sum_{i=1}^4 \gamma_i \cdot \nabla_{\gamma_i}^A \psi
\,.$$
\end{definition}

For a Euclidean $\mathcal{M}_4$ space, with global Euclidean coordinates $x_1$, $x_2$, $x_3$ and $x_4$, the bundle $W$ is trivial, and casts $W= \mathcal{M}_4 \times \mathbb{C}$, and $L$ is the trivial line bundle. Then the Dirac operator just recasts $D_A\psi = \sum_{i=1}^4 \gamma_i \frac{\partial \psi}{\partial x_i}$.

In its simplest form, the Dirac operator can be regarded as the operator $d+\delta$ of the Hodge theory, with square denoted $D^2_A$. The square of the Dirac operator, the Hodge Laplacian $\Delta=d \delta + \delta d$, is such that, given a form $\omega$, $(d+\delta)\omega =0$ if and only if $d\omega =0 $ and $\delta \omega =0 $, which in turn is true if and only if $\Delta \omega =0$. The kernel of $d+\delta$ is the finite-dimensional vector space of the harmonic forms. Denoting with $\Omega^p(\mathcal{M}_4)$ the space of $p$-forms over $\mathcal{M}_4$, once on $\mathcal{M}_4$ one has introduced $\Omega^+=\Omega^0(\mathcal{M}_4) \oplus \Omega^2(\mathcal{M}_4)  \oplus \Omega^4(\mathcal{M}_4) $ and $\Omega^-=\Omega^1(\mathcal{M}_4) \oplus \Omega^3(\mathcal{M}_4)$, the Hodge operator can be decomposed in two parts: $(d+\delta)^+: \Omega^+ \rightarrow \Omega^-$ and $(d+\delta)^-: \Omega^- \rightarrow \Omega^+$. One can finally define the index of $(d+\delta)^+$ as:
\begin{equation} \label{index}
{\rm Ind}(d+\delta)={\rm dim}({\rm Ker}(d+\delta)^+) - {\rm dim}({\rm Ker}(d+\delta)^-) \,.
\end{equation}

We denote with $\Xi^p(\mathcal{M}_4)$ the unique harmonic representative of the de Rham cohomology class of $p$-forms over $\mathcal{M}_4$, which is finite-dimensional and such that $\Omega^p(\mathcal{M}_4)=\Xi^p(\mathcal{M}_4) \oplus \Delta (\Omega^p(\mathcal{M}_4))=\Xi^p(\mathcal{M}_4)\oplus d(\Omega^{p-1}(\mathcal{M}_4))+ \delta(\Omega^{p+1}(\mathcal{M}_4))$, and with $\Xi^0(\mathcal{M}_4)$ the space of harmonic 0-forms, i.e. the space of constant functions.

We then remind that by definition the Hodge star $\star$ acting on $p$-forms over $\mathcal{M}_4$ fulfils $\star \Lambda^p V \rightarrow \Lambda^{4-p} V$, from which we derive the definition of co-differential $\delta= - \star d \, \star: \Omega^p(\mathcal{M}_4) \rightarrow \Omega^{p-1}(\mathcal{M}_4)$ and finally the aforementioned Hodge Laplacian $\Delta= d\delta +\delta d: \Omega^p(\mathcal{M}_4) \rightarrow \Omega^p(\mathcal{M}_4)$.

For $\omega$ a smooth $2$-form, one can define the actions of the operators $P_+$ and $P_-$, which split $\omega$ into:
\begin{equation}
\omega^+=P_+(\omega)=\frac{1}{2} (\omega+ \star \omega)\in \Omega_+^2(\mathcal{M}_4)\,,
\quad
\omega^-=P_-(\omega)=\frac{1}{2} (\omega- \star \omega)\in \Omega_-^2(\mathcal{M}_4)\,.
\end{equation}
These are, respectively, self-dual and anti-self-dual $2$-forms, sections on the bundles the fibers of which are $\Lambda_+^2 V$ and $\Lambda_-^2 V$. The Hodge $\star$ operator interchanges the kernels of the operators $d$ and $\delta$, ensuring that self-dual and anti-self-dual components of a harmonic $2$-form is still harmonic. The space of harmonic $2$-form on $\mathcal{M}_4$ hence decomposes into the direct sum decomposition:
\begin{equation}
\Xi^2(\mathcal{M}_4) \cong \Xi^2_+(\mathcal{M}_4)  \oplus \Xi^2_-(\mathcal{M}_4) \,,
\end{equation}
with $ \Xi^2_\pm(\mathcal{M}_4)$ space of self-dual and anti-self-dual harmonic $2$-forms respectively. We can denote the dimensions of these spaces with $b_+={\rm dim} (\Xi^2_+(\mathcal{M}_4))$ and $b_-={\rm dim} (\Xi^2_-(\mathcal{M}_4))$. The second Betti number $b_2$ of $\mathcal{M}_4$ is provided by the sum $b_2=b_+ + b_-$, while their difference,
\begin{equation}
\tau(\mathcal{M}_4)=b_+ \textcolor{blue}{-} b_-\,,
\end{equation}
provides the signature of $\mathcal{M}_4$.

Moving back to the consideration of the simplest Dirac operator in terms of $d+\delta$, and then to the definition of index of $\mathcal{M}_4$ in \eqref{index}, given the definitions we just summarized we can write:
\begin{equation}
{\rm Ker}((d+\delta)^+) = \Xi^0(\mathcal{M}_4) \oplus \Xi^2(\mathcal{M}_4) \oplus \Xi^4(\mathcal{M}_4) \,,
\quad
{\rm Ker}((d+\delta)^-) = \Xi^1(\mathcal{M}_4) \oplus \Xi^3(\mathcal{M}_4) \,,
\end{equation}
and finally:
\begin{equation}
{\rm Ind}((d+\delta)^+)= b_0+b_2+b_4 - (b_1+b_3)\,,
\end{equation}
which is the Euler characteristic of $\mathcal{M}_4$. Thus, for the simplest Dirac operator on $\mathcal{M}_4$, the index ${\rm Ind}$ of $\mathcal{M}_4$ equals the Euler characteristic of $\mathcal{M}_4$.

The Atiyah-Singer Index Theorem generalizes this result to the index of any first order elliptic linear operator, expressing it in terms of topological data. The Dirac operator $D_A$ with coefficients in a line bundle is of great importance for our discussion. For this latter one it is possible to write the decomposition:
\begin{equation}
D_A^+: \Gamma (W_+ \otimes L) \rightarrow \Gamma (W_- \otimes L)\,,
\quad
D_A^-: \Gamma (W_- \otimes L) \rightarrow \Gamma (W_+ \otimes L)\,,
\end{equation}
which are formal adjoint operators to one another. The kernels of $D_A^+$ and $D_A^-$ are finite-dimensional complex vector spaces. The index of $D_A^+$ is then defined to be:
\begin{equation} \label{index-Dirac}
{\rm Ind}(D_A^+)={\rm dim}({\rm Ker}(D_A^+)) - {\rm dim}({\rm Ker}(D_A^-))\,.
\end{equation}

The following theorems can be finally enunciated:

\begin{theorem} \label{atiyah-singer-1}
{\bf Atiyah-Singer Index Theorem (for the Dirac operator with coefficients in a line bundle).} Denoting with $D_A$ the Dirac operator with coefficients in a line bundle $L$ on a compact $4$-dimensional manifold $\mathcal{M}_4$, it holds that
\begin{equation}
{\rm Ind}(D_A^+)= -\frac{1}{8} \tau(\mathcal{M}_4) +\frac{1}{2} \int_{\mathcal{M}_4} c_1(L)^2\,,
\end{equation}
with $\tau(\mathcal{M}_4)=b_+- b_-$ signature of $\mathcal{M}_4$ and $c_1(L)$ the first Chern class of the component of the connection $A$ over $L$.
\end{theorem}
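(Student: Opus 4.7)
The plan is to deduce this formula from the general Atiyah-Singer index theorem applied to the twisted $\Spin^c$ Dirac operator, combined with the Hirzebruch signature theorem specialized to dimension four. Concretely, by the Atiyah-Singer theorem for the twisted Dirac complex on the compact $\Spin^c$ manifold $\mathcal{M}_4$, one has
\begin{equation}
\Ind(D_A^+) \;=\; \int_{\mathcal{M}_4} \hat{A}(T\mathcal{M}_4)\,\ch(L),
\end{equation}
where $\hat{A}(T\mathcal{M}_4)$ is the A-hat genus of the tangent bundle and $\ch(L)$ is the Chern character of the coefficient line bundle $L$, with the integrand understood to be projected onto its degree-four part. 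Note that, in the conventions used earlier in this section, $L^{2}$ is the determinant line bundle of the $\Spin^c$ structure, so that $\ch(L) = \exp(c_1(L))$ is precisely the twist that corresponds to this normalization.

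Next I would expand the characteristic classes through degree four,
\begin{equation}
\hat{A}(T\mathcal{M}_4) = 1 - \frac{1}{24}\,p_1(T\mathcal{M}_4) + \cdots, \qquad \ch(L) = 1 + c_1(L) + \frac{1}{2}\,c_1(L)^2 + \cdots,
\end{equation}
and isolate the four-form component of the product,
\begin{equation}
\bigl[\hat{A}(T\mathcal{M}_4)\,\ch(L)\bigr]_{(4)} \;=\; \frac{1}{2}\,c_1(L)^2 \;-\; \frac{1}{24}\,p_1(T\mathcal{M}_4).
\end{equation}
Integration over $\mathcal{M}_4$ then reduces the task to evaluating $\int_{\mathcal{M}_4} p_1(T\mathcal{M}_4)$ and $\int_{\mathcal{M}_4} c_1(L)^2$. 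The Hirzebruch signature theorem in dimension four supplies the required identity $\int_{\mathcal{M}_4} p_1(T\mathcal{M}_4) = 3\,\tau(\mathcal{M}_4)$, and substituting it yields
\begin{equation}
\Ind(D_A^+) \;=\; -\frac{1}{8}\,\tau(\mathcal{M}_4) \;+\; \frac{1}{2}\int_{\mathcal{M}_4} c_1(L)^2,
\end{equation}
as claimed.

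The principal obstacle is the invocation of the Atiyah-Singer theorem itself, which in the present paper is cited as a black box from the references \cite{atiyah:63,atiyah:68a,atiyah:68b,atiyah:68c}. Were one to give a self-contained derivation, the most economical route is the heat-kernel proof: first establish the McKean-Singer formula $\Ind(D_A^+) = \Tr(e^{-tD_A^-D_A^+}) - \Tr(e^{-tD_A^+D_A^-})$ and observe its $t$-independence, then read off the index from the short-time asymptotic expansion of the supertrace density. The technical heart of the argument is Getzler's rescaling of the Clifford generators, which identifies the leading coefficient of that expansion with precisely the four-form $[\hat{A}(T\mathcal{M}_4)\,\ch(L)]_{(4)}$ displayed above, thereby bypassing any explicit spectral analysis of $D_A$ and reducing the proof to the characteristic-class manipulations already carried out.
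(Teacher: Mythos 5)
Your derivation is mathematically correct, but note that the paper itself offers no proof of this statement at all: Theorem \ref{atiyah-singer-1} is simply ``enunciated'' as a classical result, cited as a black box from the Atiyah--Singer references (just as Theorem \ref{compact-oriented-manifold} is stated ``without proving it''). What you have written is the standard specialization of the general index formula $\Ind(D_A^+)=\int_{\mathcal{M}_4}\hat{A}(T\mathcal{M}_4)\,\ch(L)$ --- which is in fact exactly the paper's own Theorem \ref{atiyah-singer-2} with $E=L$, $\dim E=1$, $c_2(L)=0$ --- combined with the Hirzebruch signature identity $\int_{\mathcal{M}_4}p_1(T\mathcal{M}_4)=3\,\tau(\mathcal{M}_4)$, and the arithmetic $\tfrac{1}{24}\cdot 3\tau=\tfrac{1}{8}\tau$ checks out. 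You also correctly handle the $\Spin^c$ normalization: in the paper's conventions the determinant line of the $\Spin^c$ structure is $L^2$ and the spinor bundles are $W_\pm\otimes L$, so the twist contributes $\ch(L)=e^{c_1(L)}$ rather than $e^{c_1(L)/2}$, which is what makes the coefficient of $\int c_1(L)^2$ equal to $\tfrac12$. The only caveat is circularity of presentation rather than of mathematics: deriving Theorem \ref{atiyah-singer-1} from the general index theorem still leaves the latter (or the heat-kernel/Getzler argument you sketch) as the unproved input, which is precisely the status the paper assigns to both statements; your proposal simply makes the reduction explicit where the paper does not.
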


Here we have introduced the characteristic real valued $2k$-forms $\tau_k(A)$. These are globally defined and cast in terms of the field-strength $F$ of the connection $A$ over the bundle $W\otimes L$ as:
\begin{equation} \label{char}
\tau_k(A)= {\rm Tr}[(\frac{\imath}{2\pi} F )^k]\,,
\end{equation}
which are gauge-invariant expressions independent on the choice of the transition functions. In other words, for any $F_{\alpha}$ locally defined on $U_\alpha$ and transition functions $g_{\alpha \beta}$ over $U_\alpha \cap U_\beta$, given \eqref{char} it is easy to convince ourselves that:
\begin{equation} \label{char-g-i}
\tau_k(A)= {\rm Tr}[(\frac{\imath}{2\pi} F_\alpha )^k]={\rm Tr}[(\frac{\imath}{2\pi} g_{\alpha \beta} F_\alpha g_{\alpha \beta}^{-1} )^k]= {\rm Tr}[(\frac{\imath}{2\pi} F_\alpha )^k]= \tau_k(A) \,.
\end{equation}
The first and second Chern classes, respectively $c_1(A)$ and $c_2(A)$, are hence defined as:
\begin{equation} \label{chern-classes}
c_1(A)=\tau_1(A)
\qquad
c_2(A)=\frac{1}{2}[\tau_1(A)^2 - \tau_2(A) ] \,.
\end{equation}

The Atiyah-Singer Index Theorem (i.e. Thm. \ref{atiyah-singer-1}) with coefficients in a line bundle can be generalized in order to account for the index of Dirac operators of the type:
\begin{equation} \label{Dirac-E}
D_A^+: \Gamma(W_+ \otimes E) \rightarrow \Gamma(W_- \otimes E)  \,,
\end{equation}
with coefficients in a general complex vector bundle $E$. This generalization requires the introduction of the Chern character and of $\hat{A}$-polynomial in the Pontryagin classes.

\begin{theorem} \label{atiyah-singer-2}
{\bf Atiyah-Singer Index Theorem (over $4$-manifolds).} Given $D_A$ a Dirac operator with coefficients in a virtual vector bundle $E$ over a compact oriented $4$-dimensional manifold $\mathcal{M}_4$, then:
\begin{eqnarray}
\Index(D_A^+) &=& \hat{A}(TM)~ \ch(E)[\mathcal{M}_4] \nonumber \\
&= & \int_{\mathcal{M}_4} \left( -\frac{1}{24} (\dim E)\, p_1(TM) +\frac{1}{2} (c_1(E))^2 -c_2(E) \right)\,.
\end{eqnarray}
\end{theorem}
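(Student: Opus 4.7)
The plan is to reduce Theorem \ref{atiyah-singer-2} to the general Atiyah-Singer index theorem for twisted Dirac operators and then perform the characteristic-class computation that extracts the degree-four component of $\hat{A}(TM)\ch(E)$.

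First, I would observe that the twisted Dirac operator $D_A^+\colon \Gamma(W_+\otimes E)\to \Gamma(W_-\otimes E)$ is a first-order elliptic differential operator between smooth sections of complex vector bundles over the compact oriented manifold $\mathcal{M}_4$. Its principal symbol is Clifford multiplication tensored with the identity on $E$, so in K-theory it represents the class built from the underlying spin$^c$ structure and the coefficient bundle. The general Atiyah-Singer theorem then asserts that for Dirac-type operators twisted by a Hermitian vector bundle $E$ with connection $A$, the analytic index is computed by the topological formula
\begin{equation*}
\Index(D_A^+)=\int_{\mathcal{M}_4}\hat{A}(TM)\,\ch(E),
\end{equation*}
which is the ``Dirac form'' of the index theorem and directly generalizes Theorem \ref{atiyah-singer-1} (the case $E=L$). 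This is the only deep input; I would cite it from \cite{atiyah:68a,atiyah:68b,lawson:89} rather than re-derive it.

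Second, I would expand both factors and keep only the degree-$4$ component, which alone survives evaluation on $[\mathcal{M}_4]$. For the real tangent bundle one has
\begin{equation*}
\hat{A}(TM)=1-\tfrac{1}{24}\,p_1(TM)+(\text{higher order}),
\end{equation*}
while for a rank-$r$ complex bundle $E$ with $r=\dim E$ the Chern character reads
\begin{equation*}
\ch(E)=r+c_1(E)+\tfrac{1}{2}\bigl(c_1(E)^2-2c_2(E)\bigr)+(\text{higher order}).
\end{equation*}
Multiplying and selecting the degree-$4$ piece gives
\begin{equation*}
\bigl[\hat{A}(TM)\,\ch(E)\bigr]_{4}=-\tfrac{1}{24}(\dim E)\,p_1(TM)+\tfrac{1}{2}\,c_1(E)^2-c_2(E),
\end{equation*}
whose integral over $\mathcal{M}_4$ reproduces exactly the right-hand side of the claimed identity.

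As a consistency check I would specialize to $E=L$ a line bundle, where $c_2(L)=0$ and $\dim E=1$; the formula collapses to $\Index(D_A^+)=-\tfrac{1}{24}\int_{\mathcal{M}_4}p_1(TM)+\tfrac{1}{2}\int_{\mathcal{M}_4}c_1(L)^2$, and Hirzebruch's signature theorem in dimension four, $\tau(\mathcal{M}_4)=\tfrac{1}{3}\int_{\mathcal{M}_4}p_1(TM)$, turns this into $-\tfrac{1}{8}\tau(\mathcal{M}_4)+\tfrac{1}{2}\int_{\mathcal{M}_4}c_1(L)^2$, matching Theorem \ref{atiyah-singer-1}. The main obstacle is really the first step: invoking the Dirac form of the Atiyah-Singer theorem, whose proof requires either K-theoretic embedding/excision arguments or heat-kernel and invariance-theory techniques that lie well outside the scope of this paper. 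The subsequent characteristic-class computation is routine bookkeeping in the cohomology ring $H^{\bullet}(\mathcal{M}_4;\mathbb{Q})$.
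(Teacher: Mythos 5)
Your proposal is correct and is essentially the route the paper takes: the paper states this theorem as a recalled classical result (citing the Atiyah--Singer literature and \cite{lawson:89}) without proof, relegating the definitions of $\ch(E)$ and $\hat{A}(TM)$ to Appendix~3, so the only genuine content is exactly what you supply --- invoking the Dirac form of the index theorem and extracting the degree-$4$ component of $\hat{A}(TM)\,\ch(E)$. Your explicit expansion and the consistency check against Theorem~\ref{atiyah-singer-1} via the signature theorem are both correct and in fact slightly more complete than what the paper records.
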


Further details on these constructions are provided in Appendix 3.\\

Having summarized the role of the Dirac operator in determining the topological data of a manifold, we can briefly comment how to extend this formalism to TQNNs. In order to establish a link between the tools provided by the Atiyah-Singer Index Theorems \ref{atiyah-singer-1} and \ref{atiyah-singer-2} and the determination of the topological data of the boundary manifolds $\mathscr{B}$, of which the $1$-complexes of TQNNs represent the dual one-skeleton graphs, we consider a result reported in \cite{morales-tecotl:94,morales-tecotl:95} within the loop quantization of the Einstein-Weyl theory of massless 2-components spinors coupled to gravity. At the Hamiltonian level, the theory can be summarized, respectively, by the Gau\ss  ~constraint $G_{AB}$, the vector diffeomorphism constraint $V_a$ and the time reparametrization (scalar) constraint $C$, namely:
\begin{eqnarray}
G_{AB} \!\!&=& \!\! -  \imath \sqrt{2} \, (D_A)_a \sigma^{a}_{\ AB} + \pi_{(A} \psi_{B)}\,, \\
V_a \!\!&=&\!\!  \imath \sqrt{2} \, \sigma^{b\,AB}\, \, F_{ab BA} - \pi_A  (D_A)_a \psi^A \,, \\
C=C_{\rm gr} + C_{\rm Weyl} \!\!&=&\!\! \sigma^{a\,AB}\, \sigma^{b\  \  C}_{\ B} \, F_{ab CA} + \imath \sqrt{2} \sigma^{a \ \ B}_{ \ A}  \pi_B\, (D_A)_a \psi^A
\,,
\end{eqnarray}
with $A_a^{AB}$ and $\sigma^a_{AB}$ canonically conjugated gravitational Ashtekar $\SU(2)$-symmetric variables, $A=1,2$ spinorial indices and $a=1,2,3$ space indices, and $\psi^A$ and $\pi^A$ respectively spinor field and its canonically conjugated variable. Considering open loop states, i.e. holonomies $H_{\alpha A}^{\ B}[A]$, with indices saturated by $\psi^A$ and $\pi^A$, namely:
\begin{equation}
X[A,\alpha] = \psi^A(\alpha_i) \, H_{\alpha A}^{\ B}[A] \, \psi_B(\alpha_f)\,,
\qquad
Y[A,\alpha] = \psi^A(\alpha_i)\,  H_{\alpha A}^{\ B}[A]\, \psi_B(\alpha_f)\,,
\end{equation}
the action of the gravitational scalar constraint $C_{\rm gr}$ on $X[A,\alpha]$ and $Y[A,\alpha]$ is equivalent to the action of the whole scalar constraint $C=C_{\rm gr} + C_{\rm Weyl}$ on $X[A,\alpha]$ and $Y[A,\alpha]$. In other words, only taking into account the action of the gravitational scalar constraint $C_{\rm gr}$, cast in terms of the densitized frame field $\sigma^a_{AB}$ and the field strength of the connection $A_a^{AB}$, generates the action of the $C_{\rm gr}$ and $C_{\rm Weyl}$, the latter one expressed in terms of the conjugated spinor variables and the Dirac operator $D_A$.

Given that quantum states on the boundary manifolds $\mathscr{B}$ can be expressed either within the spin-network basis or within the equivalent loop basis \cite{rovelli:95}, the Dirac operator $D_A$ is automatically introduced in the TQNN set up, provided that the dynamics is regulated by a $BF$-like extension of TQFTs to QFT  that reproduces the constraints of the Einstein-Weyl theory, along the lines of section \ref{gauge-inv}.

According to the procedure outlined in \cite{morales-tecotl:94,morales-tecotl:95}, the imposition on a loop state $\Phi_\alpha[A]$ of the gravitational scalar constraint $C_{\rm gr}$, which in terms of the smearing function $N$ casts:
\begin{eqnarray} \label{c-grav}
\widehat{C}_{\rm gr}[N] \Phi_\alpha[A] &=& \int d^3x N(x)~ {\rm Tr}\left( F_{ab} \frac{\delta}{\delta A_a}  \frac{\delta}{\delta A_b}  \right)\, {\rm Tr} [ P e^{\oint_\alpha A } ] \nonumber\\
&=& \int dt \int ds N(\alpha(s)) \delta^{3} (\alpha(s), \alpha(t)) \dot{\alpha}^a(s) \frac{\delta}{\delta \alpha^a(s)} \Phi_\alpha[A]\,,
\end{eqnarray}
turns out to be the same action of the total constraint $C=C_{\rm gr} + C_{\rm Weyl}$, encoding the Dirac operator $D_A$, on open holonomies with saturated ends:
\begin{eqnarray} \label{c-tot}
& \phantom{a}& \int dt \int ds N(\alpha(s)) \delta^{3} (\alpha(s), \alpha(t)) \dot{\alpha}^a(s) \frac{\delta}{\delta \alpha^a(s)}  \left( \psi^A(\alpha_i) \, H_{\alpha A}^{\ B} \, \psi_B(\alpha_f) \right) \nonumber\\
&=& \int d^3x N(x) \left[ {\rm Tr}\left( F_{ab} \frac{\delta}{\delta A_a}  \frac{\delta}{\delta A_b}  \right)
+ 2\, \dot{\alpha}^a (D_A)_a \psi_A(x) \frac{\delta}{\delta \psi_A(x)}
\right] \, \Phi_\alpha[A]  \nonumber\\
&=& \int d^3x N(x) \left[ {\rm Tr}\left( F_{ab} \frac{\delta}{\delta A_a}  \frac{\delta}{\delta A_b}  \right)
+ 2 \frac{\delta}{\delta A_a}  \frac{\delta}{\delta \psi_A(x)} (D_A)_a \psi_A(x) \right] \, \Phi_\alpha[A]  \nonumber\\
&=&  (\widehat{C}_{\rm grav}[N] + \widehat{C}_{\rm Weyl}[N]) \, \Phi_\alpha[A]= \widehat{C}[N] \, \Phi_\alpha[A] \,,
\end{eqnarray}
where in both \eqref{c-grav} and \eqref{c-tot} the canonically conjugated variables $\sigma^{a}_{AB}$ and $\pi^A$ have been implemented as left-invariant derivatives, respectively $ \frac{\delta}{\delta A_a^{AB}}$  and $\frac{\delta}{\delta \psi_A}$.

We may recast the result of Refs.~\cite{morales-tecotl:94,morales-tecotl:95} within the extended $BF$ formalism, which allows us to capture non-topological theories as well, taking into account a $\SL(2,\mathbb{C})$-bundle over a $4$-dimensional Lorentzian manifold $\mathcal{M}_4$. At this purpose, we observe that issues about the inclusion of spinor fields in the $BF$ formalism were previously reported and discussed in Refs.~\cite{jacobson:88,capovilla:91,alexander:14}, accounting for several different physical contexts, including the unification of forces and the explanation of the origin of the weak interaction's chirality \cite{alexander:14}. In the $BF$ formalism, the action for Weyl spinors $\psi_A$ can be written:
\begin{equation} \label{s-weyl}
\mathcal{S}_{\rm Weyl} = \int_{\mathcal{M}_4} \langle B \wedge \xi \wedge (D_A) \psi \rangle\,,
\end{equation}
in which the frame field is represented, in terms of the spinorial indices $A,A'=1,2$, as $B^{AB}=e^{A C'}\wedge e^B_{\ \ C'}$, and a spinorial one form $\xi^A$ is introduced. In order to exclude that $\xi^A$ has spin $\frac{3}{2}$, simplicity constraints as in \eqref{simpli} of section \ref{gauge-inv} must be considered, together with an additional constraint term in the action \cite{capovilla:91}, namely:
\begin{equation}
\mathcal{S}_{\frac{1}{2}} = \int_{\mathcal{M}_4} \left( B^{AB} \wedge \xi_A \wedge D \psi_B  + \tau_{ABC} \wedge B^{AB} \wedge \xi^C \right) \,,
\end{equation}
where we have denoted now the Dirac operator as $D$, and with $\tau_{ABC}=\tau_{(ABC)}$ a multiplet of Lagrangian multipliers and a $1$-form symmetric in the spinor indices. Variation with respect to $\tau_{ABC}$ provides the constraint:
\begin{equation}
B^{(AB}\wedge \xi^{C)} =0
\,,
\end{equation}
which combined with the constraint \eqref{simpli} finally enforces the condition:
\begin{equation}
\xi^A=e^{AA'} \pi_{A'}
\,.
\end{equation}

In a background independent formalism, the $BF$ action complemented with \eqref{s-weyl} reads:
\begin{equation}
\mathcal{S}= \mathcal{S}_{\rm BF} + \mathcal{S}_{\rm Weyl}
= \int_{\mathcal{M}_4} \langle B \wedge F + B \wedge \xi \wedge D_A \psi \rangle \,.
\end{equation}

Likewise to that to be specified in section \ref{gauge-inv}, we arrive at the Einstein-Weyl theory, complemented with the Holst topological term for the action of gravity, introducing a multiplet of Lagrangian multipliers $\phi_{AA'\ BB' \ CC' \ DD' }$.

The inclusion of fermions in the TQNN sections of the TQN2C is consistent with holographic dualities between SYK theories on the boundary manifolds $\mathscr{B}$ and the Schwarzian mechanics on one side, and TQFT on the bulk manifolds $\mathcal{M}$, which has been explored with significant depth in the literature \cite{gross:17,stanford:17,mertens:17,turiaci:17,kitaev:18,mertens:18,blommaert:18,gaikwad:20}. Thus it is not surprising, as we will comment in the next section, that both the SYK model and TQNNs can be applied to the specific context of condensed matter physics and information theory \cite{sachdev:93,kitaev:15a,kitaev:15b}.

\section{Applications}\label{app}

Sequential measurements provide the empirical foundation not just of physics, but of all sciences.  As shown above, sequential measurements, without further assumptions, induce TQFTs. Thus TQFTs provide an underlying ``default'' theoretical structure for physics, computer science, and other scientific disciplines. This structure is, significantly, free of background assumptions about either spacetime or the ``ontology'' of systems of interest.  Measurements of position and momentum, in particular, become measurements of a ``spacetime'' sector of the observational Hilbert space $\mathcal{H}_A$ of some observer $A$.  Spacetime is, therefore, in this framework a quantum system by definition, and is observer-relative in the ``personal'' sense advocated by QBists \cite{fuchs:03, fuchs:13, mermin:18}.  This measurement-centric perspective casts a number of well-studied phenomena in a new light; we consider some of these below, deferring others to future work.

\subsection{Gauge invariance and effective field theories} \label{gauge-inv}

It has previously been shown \cite[Thm. 2]{addazi:21} that in any system $U = AB$ for which the interaction $H_{AB}$ can be written as Eq. \eqref{ham}, the bulk interactions $H_A$ and $H_B$ respect gauge invariance. This result is clearly applicable in the present setting, where it can be reformulated as:

\begin{theorem} \label{gauge-thm}
Any effective field theory (EFT) $\mathscr{T}$ on a sector $S$ that is consistent with the TQFT induced by sequential measurements of $S$ is gauge invariant.
\end{theorem}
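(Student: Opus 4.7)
The plan is to reduce the statement to \cite[Thm. 2]{addazi:21}, which asserts that whenever a bipartite interaction Hamiltonian has the form of Eq.~\eqref{ham}, the bulk dynamics $H_A$ and $H_B$ are each gauge invariant. The task is therefore to translate the hypothesis ``consistent with the TQFT induced by sequential measurements of $S$'' into the claim that the interaction coupling $\mathscr{T}$ to its environment across the holographic boundary of $S$ must take exactly this form.

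First I would fix an observer $A$ with QRF $\mathbf{S} = \mathbf{RP}$ for the sector $S$. By Theorem~\ref{thm1}, $\mathbf{S}$ is specified by a CCCD whose $m = \dim(\mathbf{S})$ base-level classifiers implement commuting binary measurements; by Theorem~\ref{thm2}, sequential deployments of $\mathbf{S}$ induce, via the functor $\mathfrak{F}$, a cobordism of $\mathcal{H}_S$ to which the TQFT assigns a propagator $\mathcal{P}_U$ as in Diagram~\eqref{S-TQFT}. The EFT $\mathscr{T}$ is, by hypothesis, consistent with this assignment, hence its coupling to the complement of $S$ within $\mathscr{B}$ is forced to factor through the measurement operators $\{M^S_j\}$ that implement the base-level classifiers of $\mathbf{S}$.

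Second I would verify, term by term, that every ingredient of Eq.~\eqref{ham} restricted to $k = S$ is already supplied by the CCCD/QRF machinery of \S\ref{QRF2CCCD} together with the TQFT construction of \S\ref{CCCD2cobord}: (i) each $M^S_j$ is binary-valued with spectrum in $\{-1,+1\}$, because the base-level classifiers of a CCCD are binary; (ii) the $M^S_j$ mutually commute, because the CCCD represents a QRF and hence, by \cite[Thm.~7.1]{fg:22}, the corresponding diagram commutes, enforcing noncontextuality; (iii) the coefficients $\alpha^S_j \in [0,1]$ with $\sum_j \alpha^S_j = 1$ arise from the Bayesian-coherent probability assignments supported by the CCCD in the probabilistic extension of Appendix~1; and (iv) the Boltzmann prefactor $\beta_S K_B T$, with $\beta_S \geq \ln 2$, comes from applying Landauer's principle to the classical bit-string readout channel that the functor $\mathfrak{F}$ is required to preserve at each boundary slice of the cobordism. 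Having pieced these together, the interaction coupling $S$ to its environment admits the canonical form of Eq.~\eqref{ham} with $k = S$, whereupon the cited \cite[Thm.~2]{addazi:21} applies and identifies $\mathscr{T} \simeq H_S$ as gauge invariant with respect to the symmetry group that acts trivially on the commutative algebra generated by $\{M^S_j\}$.

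The main obstacle, I expect, is step two: pinning down what ``consistent with the TQFT'' means operationally so that all four properties above are genuinely forced rather than tacitly assumed. Items (i)--(iii) follow cleanly from the CCCD specification of a QRF, since noncontextuality and binarity of base-level classifiers are built into Definitions \ref{class-def-1}--\ref{QRF-def}, but item (iv) requires a careful separate argument: one must show that the functoriality of $\mathfrak{F}: \mathbf{CCCD} \rightarrow \mathbf{Cob}$ entails a well-defined irreversible erasure cost at each boundary, so that the Landauer bound $\beta_S \geq \ln 2$ is a consequence of rather than an extra input to the construction. Once that is settled, the appeal to \cite[Thm.~2]{addazi:21} is essentially immediate and the theorem follows.
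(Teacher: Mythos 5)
Your proposal correctly identifies the external ingredient the paper relies on --- \cite[Thm.~2]{addazi:21} --- but the route you take to it diverges from the paper's and contains a gap you yourself flag. The paper's proof is a short reduction: an EFT on $S$ is a map $\mathscr{T}: S_1 \rightarrow S_2$ between sectors (equivalently $\mathscr{T}: \mathcal{H}_{S_1} \rightarrow \mathcal{H}_{S_2}$), and consistency with the induced TQFT means that any failure of gauge invariance of $\mathscr{T}$ could only be traced back to a failure of gauge invariance of the joint evolution operator $\mathcal{P}_U$; that failure is excluded by the cited theorem, whose hypothesis --- that $H_{AB}$ has the form of Eq.~\eqref{ham} --- is a standing assumption of the entire paper, fixed in the Introduction. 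You instead attempt to \emph{re-derive} the hypotheses of the cited theorem from the CCCD/TQFT machinery, checking items (i)--(iv) term by term. Items (i)--(iii) are fine but redundant, since Eq.~\eqref{ham} is assumed rather than proved anywhere in the framework.

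The genuine gap is your item (iv). The prefactor $\beta_k K_B T$ with $\beta_k \geq \ln 2$ is introduced in the Introduction as ``an inverse measure of $k$'s thermodynamic efficiency that depends on the internal dynamics $H_k$'' --- it is an input to the definition of $H_{AB}$, not something the functor $\mathfrak{F}: \mathbf{CCCD} \rightarrow \mathbf{Cob}$ can be expected to produce. Nothing in \S\ref{CCCD2cobord} attaches an erasure cost to a boundary slice of a cobordism; the cobordisms there are built from linear (Fredholm) maps of Hilbert spaces with no thermodynamic data attached. So the Landauer step is neither a consequence of functoriality nor needed: you should instead invoke Eq.~\eqref{ham} as given, observe (as the paper does) that consistency with the TQFT forces $\mathscr{T}$ to be governed by $\mathcal{P}_U$, and let the cited theorem do the rest. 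Your closing identification $\mathscr{T} \simeq H_S$ also does not match the paper's reading of an EFT as a map between sectors; the gauge invariance being transferred is that of the evolution operator, not of a bulk Hamiltonian reconstructed from the classifier algebra.
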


\begin{proof}
An EFT on $S$ is a map $\mathscr{T}: S_1 \rightarrow S_2$, where $S_1$ is the spacetime sector of $S$ and $S_2$ is some other sector of $S$, the outcome values from which are treated as dependent on spacetime.  We can equally well write $\mathscr{T}: \mathcal{H}_{S_1} \rightarrow \mathcal{H}_{S_2}$.  Hence $\mathscr{T}$ can fail to be gauge invariant only if the joint-system evolution operator $\mathcal{P}_U$ fails to be gauge invariant.  This, however, is ruled out by \cite[Thm. 2]{addazi:21}.
\end{proof}

The notion of EFT we have introduced captures any quantum field theory (QFT) applied either to the description of fundamental forces within the standard model of particle physics or to low-energy scale phenomena in state solid physics. Generally speaking, an EFT describes the propagation of either dynamical or collective degrees of freedom. The addition of either dynamical or collective degrees of freedom can be achieved within the framework of TQFTs by imposing extra constraints to the phase space variables of the system. For the sake of completeness in the exposition, we briefly review how to cast classically, within the $BF$ formalism, Yang-Mills theories and the Einstein-Hilbert action for gravity in dimensions higher than three.  In three dimensions, the Einstein-Hilbert action of gravity does not posses dynamical degrees of freedom and directly casts as a $BF$ theory.

In $d$-dimensions (with $d>2$), the density Lagrangian of pure Yang-Mills theories, for any gauge group $G$, is expressed in the first order formalism by \cite{capovilla:91,fucito:97}:
\begin{equation}\label{BF-YM}
\mathcal{L}_{\rm YM}[A,B]=\langle B\wedge F +\frac{g_{\rm YM}^2}{4} B\wedge \star B\rangle,
\end{equation}
where $g_{\rm YM}$ is the bare coupling constant of the Yang-Mills theory under scrutiny and the Hodge dual $\star$ of a $p$-form in $d$-dimensions is defined as $\star=\varepsilon^{i_1 \dots i_p}/(d-p)!$, with $\varepsilon^{i_1 \dots i_p}$ the (densitized) Levi-Civita symbols of the spacetime indices. The equations of motion for the theory defined by \eqref{BF-YM} read:
\begin{equation}\label{EOM-BF-YM}
F=\frac{g_{\rm YM}^2}{2} \star B\,, \qquad d_AB=0\,.
\end{equation}
Using the curvature constraint, or in other words performing the path integral to integrate out the frame field $B$, i.e. calculating the partition function $\mathcal{Z}$ of the theory: 
\begin{equation}
\mathcal{Z}_{\rm YM}[A,B]= \int \mathcal{D}A\, \mathcal{D}B \exp ( \imath \mathcal{L}_{\rm YM}[A,B])
=  \int \mathcal{D}A\, \exp ( \imath \mathcal{L}_{\rm YM}[A])
\,,
\end{equation}
yields the customary action of a Yang-Mills theory in the first order formalism, namely:
\begin{equation}\label{EOM-BF-YM2}
\mathcal{L}_{\rm YM}[A]=\frac{1}{g_{\rm YM}^2} \langle F \wedge \star F \rangle \,.
\end{equation}
This construction clearly encodes any $SU(N)$ Yang-Mills theories. It also captures abelian $U(1)$ gauge theories, and hence electro-magnetism. Thus the whole gauge sector of the standard model can be cast within the $BF$ formalism, which provides a framework for EFTs.

The Einstein-Hilbert theory of gravity can be formulated in the $BF$ formalism as well. In $3$-dimensions, the action of gravity is a $BF$ theory of the form \eqref{BF}, in which the $B$ field is the frame field $e$, the drei-bein in this case, that individuates the ``square root'' of the gravitational field, and the mixed-indices Riemann tensor $R$ is the field strength $F$ of the spin connection $\omega=A$, namely:
\begin{equation}
\mathcal{S}^{3d}_{\rm EH}[e,\omega]=\int_{\mathcal{M}_3} \langle e \wedge R[\omega] \rangle\,,
\end{equation}
where the internal gauge group is either $\SO(2,1)$ or $\SU(1,1)$ for Lorentzian manifolds, and either $\SO(3)$ or $\SU(2)$ for Euclidean manifolds.

In four dimensions, the expression of the Einstein-Hilbert action of gravity in the $BF$ formalism is more subtle, as it requires to ``unfreeze'' the two dynamical degrees of freedom of the graviton. This is attained by means of the imposition of the simplicity constraint, which requires the $2$-form field $B^{IJ}$ --- labelled with an anti-symmetrized pair of indices $IJ$, with $I,J=1,\dots4$, that is in the adjoint representation of $\SO(3,1)$ (in the Lorentzian) or $\SO(4)$ (in the Euclidean) --- to be a bi-vector. Introducing a Lagrangian multiplier $\phi_{IJKL}$, which is symmetric under the exchange of the pairs of anti-symmetric indices $IJ$ and $KL$, one can then write the Plebanski action \cite{plebanski:77} for gravity in $4$-dimensions:
\begin{equation}\label{Pleb}
\mathcal{S}_{\rm Pleb}[B,A,\phi]= \int_{\mathcal{M}_4}  B^{IJ}\wedge F^{IJ} +\phi_{IJKL} \, B^{IJ} \wedge B^{KL}  \,.
\end{equation}
Variation of \eqref{Pleb} with respect to $\phi_{IJKL}$ provides the simplicity constraint:
\begin{equation}\label{simpli}
 B^{(IJ} \wedge B^{KL)}=0\,,
\end{equation}
which has as solutions the reduction of the $B$ field to be a bi-vector of either the form $e^I\wedge e^J$ or $\epsilon^{IJ}_{\ \ KL} e^K\wedge e^L$, with $\epsilon^{IJ}_{\ KL}$ Levi-Civita symbols for the internal indices. Denoting the spin connection $A^{IJ}$ with $\omega^{IJ}$ and recognizing that its field strength $F^{IJ}[A]$ provides the mixed indices Riemann tensor $2$-form $R^{IJ}[\omega]$, the former solution are easily seen to introduce the topological Holst action, one of the topological sectors for gravity, while the latter solution reproduces the Einstein-Hilbert action. Summing the two aforementioned contributions, we recover the Einstein-Hilbert-Holst action:
\begin{equation}\label{BF-EH-4d}
\mathcal{S}_{EHH}[e,\omega]= \frac{1}{\kappa^2} \int_{\mathcal{M}_4}  \epsilon_{IJKL}\, e^I \wedge e^J \wedge R^{KL}[\omega] + \frac{1}{\gamma} \int_{\mathcal{M}_4}
e_I \wedge e_J \wedge R^{IJ}[\omega]  \,,
\end{equation}
with $\gamma$ the Barbero-Immirzi parameter and
$\kappa^2=16 \pi G_N$, having denoted with $G_N$ the Newton constant. 

Notice that the quantization method we have summarized in \S \ref{CCCD-TQNN} for $BF$ theories has been extended to quantum gravity in the ``loop'' approach \cite{rovelli:04}, by casting the theory over the $\SU(2)$-bundle individuated by the use of the Ashtekar variables \cite{ashtekar:86, ashtekar:87} once the time gauge is fixed.



\subsection{Contextuality as a resource}

Quantum contextuality is widely recognized as a resource for quantum computation \cite{howard:14, bermejo:17, frembs:18}.  Sequential measurements of the form of Eq. \eqref{flow-2} and their induced TQFTs of the form of Diagram \eqref{PQ-TQFT} represent context switches in which mutually-noncommuting sets of measurement operators are deployed on a single sector.  Contextuality in this case is induced by noncommutativity of the CCCDs representing non co-deplyable sets of operators, as made precise below.

\subsubsection{Quiver representations and their sections}\label{quiv-sect}

First of all, we recall the basic definitions of a quiver and its representation \cite{derksen:05} as applied in Ref.~\cite{seigal:21}. A \emph{quiver} $Q$ is a directed graph consisting of a pair $(V,E)$, where $V$ denotes a finite set of vertices and $E$ a finite set of edges (arrows), and two maps $s,t: E \lra V$, the source (tail) and target (head), respectively. For a fixed $Q$ and some base field, we tether to each $v \in V$ a finite-dimensional vector space $\mbA_v$, and a linear map to each edge (with suitable domain and codomain). Specifically, a \emph{representation} $\mbA_{\bullet}$ of $Q$ consists of a collection $\{\mbA_v: v \in V\}$ of finite-dimensional vector spaces, and a collection of linear maps $\{\mbA_e: \mbA_{s(e)} \lra \mbA_{t(e)}: e \in E \}$.

As in Ref.~\cite{seigal:21}, on taking a path $p = (e_1, e_2, \ldots, e_k)$ in $Q$ of distinct edges and sources, we associate to each such path $p$ the map $\mbA_p: \mbA_{s(p)} \lra \mbA_{t(p)}$, defined via
\begin{equation}
\mbA_p := \mbA_{e_{k}} \circ \mbA_{e_{k-1}}  \circ \cdots \circ \mbA_{e_{2}} \circ \mbA_{e_{1}} \,.
\end{equation}
The total space of $\mbA_{\bullet}$ is the direct product $\text{Tot}(\mbA_{\bullet}) : = \prod_{v \in V} \mbA_v$.

We next turn to the definition of a \emph{section} as given in \cite{seigal:21}:
\begin{definition}\label{section-def}
Let $\mbA_{\bullet}$ be a representation of $Q = (s,t: E \lra V)$. A section of $\mbA_{\bullet}$ is an element $\gamma = \{\gamma_v \in \mbA_v: v \in V\}$ in  $\text{Tot}(\mbA_{\bullet})$ satisfying the compatibility requirement $\gamma_{t(e)} = \mbA_e(\gamma_{s(e)})$ across each edge $e$ in $E$.
\end{definition}
The set of all sections of $\mbA_{\bullet}$ is a vector subspace of $\text{Tot}(\mbA_{\bullet})$ denoted $\Gamma(Q, \mbA_{\bullet})$. As noted in \cite{seigal:21}, Definition \ref{section-def} commences an analogy between quiver representations and the theory of sheaves and vector bundles as they arise in algebraic geometry.

\subsubsection{Measurement and noncommutativity}\label{measure-1}

In putting this brief account of quiver representations to work, we view a CCCD in Diagram \eqref{cccd-1} as a digraph, following Definition \ref{def-cat-CCCD}, and identify it with its underlying quiver $Q= (V,E)$ as above in a ``basic'' form: specifically, vertices $V$ consist of classifiers, and edges $E$ are infomorphisms between classifiers. Here we do not consider ``self infomorphisms'' implemented by loops at CCCD vertices; hence the quiver description of the CCCD can be considered ``basic'' as depicted in Eq. \eqref{cccd-1} and \eqref{cccd-2}. Next, we can reasonably assume that measurements are modeled by a probability space $\mfP = (\mfO, \mfM, \mfp)$, where $\mfO$ denotes a set of outcomes, $\mfM$ a $\sigma$-algebra of events, and $\mfp$ a probability measure. Then consider the vector space of measurable functions (random variables) $\mfp(f): \mfP \lra \mathbb{R}$ \cite{small:94}. These vector spaces will be the $\mbA_v$ tethered to each vertex $v \in V$ in the CCCD.

We now recall the sequence of measurements in Eq.~\eqref{flow-1} and \eqref{flow-2} together with the collection of maps $\{\mfF(i)\}$ leading to the cobordisms in Diagram \eqref{CCCD-to-Cob} and \eqref{CCCD-to-Cob-2}, respectively. Suppose at some level $k$ in a sequence of measurements $S_{k} \lra S_{k+1}$ is not defined, either by disturbance, ambiguity, or is possibly non-empirical (i.e. fails to commute as in Eq. \eqref{not-CCCD}). Then the map $\mfF(k)$ is not defined, and this leads to either a `singularity' in the cobordism, or the latter is not definable.  In this case Diagrams \eqref{CCCD-to-Cob} and \eqref{CCCD-to-Cob-2} fail to commute. In particular, the associated CCCD fails to commute. Prescribing sections as in Definition \ref{section-def}, the situation can be summarized as follows.
\begin{theorem}\label{section-th}
If the CCCD in Diagram \eqref{cccd-1}, when viewed as a quiver, is noncommutative (and hence not a CCCD), then at least one section of a quiver representation by vector spaces is not definable; in particular, at least one section of a representation by vector spaces of measurable functions is not definable.
\end{theorem}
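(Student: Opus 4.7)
The plan is to use Definition \ref{section-def} together with the definition of noncommutativity to obstruct the existence of sections with arbitrary ``initial'' component. First I would unpack what noncommutativity of the CCCD, viewed as a quiver $Q = (V,E)$, means at the level of a representation $\mbA_\bullet$: it says that there exist two distinct directed paths $p_1, p_2$ in $Q$ sharing a common source $v$ and common target $w$ for which the composite linear maps differ, $\mbA_{p_1} \neq \mbA_{p_2}$ as elements of $\Hom(\mbA_v, \mbA_w)$. Indeed, if every pair of co-terminal paths yielded equal composites, the underlying diagram would commute and would therefore constitute a CCCD in the sense of the earlier sections, contrary to the hypothesis.

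Next I would apply the compatibility requirement $\gamma_{t(e)} = \mbA_e(\gamma_{s(e)})$ iteratively along each of $p_1$ and $p_2$. For any candidate section $\gamma \in \Gamma(Q, \mbA_\bullet)$ this yields
\begin{equation}
\mbA_{p_1}(\gamma_v) \;=\; \gamma_w \;=\; \mbA_{p_2}(\gamma_v),
\end{equation}
so that $\gamma_v \in \ker(\mbA_{p_1} - \mbA_{p_2})$. Since $\mbA_{p_1} - \mbA_{p_2}$ is a nonzero linear map out of $\mbA_v$, its kernel is a proper subspace; hence any $x \in \mbA_v$ outside this kernel determines a putative $v$-component that admits no extension to a global section. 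Equivalently, $\Gamma(Q, \mbA_\bullet)$ is strictly smaller than what the edge-wise compatibility conditions would cut out were they considered only one edge at a time, and in particular at least one section, namely any one whose $v$-component is prescribed to lie outside $\ker(\mbA_{p_1} - \mbA_{p_2})$, is not definable. This secures the first clause of the theorem.

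For the ``in particular'' clause I would specialize $\mbA_v$ to the vector space of real-valued measurable functions on the probability space $\mfP = (\mfO, \mfM, \mfp)$ attached to each classifier at $v$, with each $\mbA_e$ the linear map on random variables induced by the infomorphism $e$ of Definition \ref{class-def-2}. Because the argument of the preceding paragraph uses only linearity of the arrows and the compatibility requirement of Definition \ref{section-def}, it transfers verbatim, giving at least one non-definable section of the measurable-function representation. The main obstacle, in my view, is not the algebraic step itself, which is essentially immediate, but the bookkeeping preceding it: one must verify that the infomorphisms on classifiers actually induce well-defined linear maps on the attached function spaces, and that a failure of commutativity at the classifier level (as in Diagram \eqref{not-CCCD}, where $[\mathbf{P}, \mathbf{Q}] \neq 0$) genuinely survives as an inequality $\mbA_{p_1} \neq \mbA_{p_2}$ in $\Hom(\mbA_v, \mbA_w)$ rather than being washed out by a degenerate choice of representation.
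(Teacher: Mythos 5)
Your route is genuinely different from the paper's. The paper's proof does not work at the level of path-composites at all: it invokes \cite[Th. 7.1]{fg:22} to conclude that when the diagram fails to commute, the core (colimit) classifier $\mbC$ and at least one edge out of it are simply \emph{not defined}, so the quiver representation is missing a vertex space and an edge map, and the compatibility condition of Definition \ref{section-def} cannot even be imposed there. The obstruction thus lives in the diagram itself, before any representation is chosen. You instead assume the full representation $\mbA_\bullet$ exists and locate the obstruction in an inequality $\mbA_{p_1}\neq\mbA_{p_2}$ of co-terminal path-composites, concluding that any section's $v$-component is confined to the proper subspace $\ker(\mbA_{p_1}-\mbA_{p_2})$. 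Your version is more self-contained and more quantitative (it identifies exactly which initial data fail to extend), but it buys this at the cost of the gap you yourself flag: nothing forces a failure of commutativity at the classifier/infomorphism level to survive as an inequality of linear maps in a given representation (a degenerate representation, e.g. all $\mbA_v=0$, satisfies every path equation), whereas the paper's argument is immune to this because its obstruction is representation-independent.

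There is also a mismatch in what is actually concluded. Under your reading the zero element of $\text{Tot}(\mbA_\bullet)$ is always a section, so $\Gamma(Q,\mbA_\bullet)\neq\emptyset$; you prove only that the section space is a proper subspace of what edge-wise compatibility alone would allow. The paper's follow-up remark, that ``$\Gamma(CCCD,\mbA_\bullet)$ admits no global section definable across the entire CCCD,'' is the stronger statement delivered by their route (no section can be defined over a vertex that does not exist). So your argument, even after closing the non-degeneracy gap, establishes a weaker conclusion than the one the paper intends the theorem to carry. To align with the paper you would need to import the content of \cite[Th. 7.1]{fg:22}, namely that noncommutativity destroys the colimit vertex itself, rather than treating the quiver as fixed and fully represented.
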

\begin{proof}
Following the proof of \cite[Th. 7.1]{fg:22}, if the CCCD fails to commute then there is at least one vertex, such as the core $\mbC$ (colimit) classifier, and an edge from it, that are not defined. Hence, the compatibility condition imposed for a section in Definition \ref{section-def}
does not hold in that case.
\end{proof}
This is in essence saying that in the presence of noncommutativity, $\Gamma(CCCD, \mbA_{\bullet})$ admits no ``global'' section of the representation definable across the entire CCCD.
The noncommutativity derived from \cite[Th. 7.1]{fg:22} amounts to an intrinsic (quantum) contextuality imposed on a corresponding set of observables
considered as non-co-deployable.
We note that Theorem \ref{section-th} bears some similarity to the main result of \cite{abramsky:11} proving that the non-existence of a global section of a sheaf of measurable distributions implies Kochen-Specker contextuality.

\subsubsection{Quantitating contextuality by QRF dimension reduction}

The above construction suggests a simple measure of relative contextuality for pairs $(\mathbf{X}, \mathbf{Y})$ of QRFs.  Letting $\mathbf{X}$ and $\mathbf{Y}$ also name the respective CCCDs, hold $\mathbf{X}$ fixed and consider maps $\xi_i: \mathbf{Y} \mapsto \mathbf{Y}_i$, where $\mathbf{Y}_i$ is a CCCD that embeds in $\mathbf{Y}$.  The action of each $\xi_i$ is to remove one or more base-level classifiers, and the associated maps, from $\mathbf{Y}$ to produce $\mathbf{Y}_i$.  Considering all such maps, we can define:

\begin{definition} \label{context-dimension}
The {\em contextuality dimension} $\dim(\mathbf{Y} \vert \mathbf{X})$ of $\mathbf{Y}$ relative to $\mathbf{X}$ is $\text{min}(\dim(\mathbf{Y}) - \dim(\mathbf{Y}_i))$ such that $\mathbf{X}, \mathbf{Y}_i \mapsto \mathbf{XY}_i$ is a CCCD morphism, where the minimum is computed over all maps $\xi_i: \mathbf{Y} \mapsto \mathbf{Y}_i$ such that $\mathbf{Y}_i$ embeds in $\mathbf{Y}$.
\end{definition}
\noindent
If $\mathbf{X}$ and $\mathbf{Y}$ are co-deployable, $\dim(\mathbf{Y} \vert \mathbf{X})$ is clearly zero: no classifiers need to be removed from $\mathbf{Y}$ to construct a CCCD $\mathbf{XY}$.  If $\mathbf{X}$ and $\mathbf{Y}$ are not co-deployable, $\dim(\mathbf{Y} \vert \mathbf{X}) \geq 1$.

It is interesting to compare this with the approach of \cite{abramsky:17} where a ``measurement scenario'' is formalized as an empirical model $e$ that is specified by a probability distribution expressed as a convex combination of a non-contextual model $e^{\text{NC}}$ and a ``no-signalling'' model $e'$, i.e. $e= \lambda e^{\text{NC}} + (1 - \lambda) e'$, with $\lambda \in [0,1]$.
In relationship to a global probability distribution on the outcomes of all measurements, the maximum possible, admissible value of $\lambda$ in such a decomposition of the local model $e$ is called the \emph{non-contextual fraction} $\mathsf{N} \mathsf{C} \mathsf{F}(e)$ of $e$; the \emph{contextual fraction} ($\mathsf{C} \mathsf{F}$) is then $\mathsf{C} \mathsf{F}(e) := 1 - \mathsf{N}\mathsf{C} \mathsf{F}(e)$. Thus, the empirical model $e$ is said to be \emph{contextual} if the corresponding family of probability distributions in \cite{abramsky:17} cannot itself be obtained as the marginals of the global probability distribution on the outcomes to all measurements (cf. \cite{dzha:17b,gudder:19}). In view of Definition \ref{context-dimension}, the present development of ideas suggests a discrete version of the contextual fraction in \cite{abramsky:17}. Here $1-\mathsf{C} \mathsf{F}$ would correspond to the outcomes from $\mathbf{XY}_i$.  $\mathsf{C} \mathsf{F}$ is then the remainder, the outcomes from the classifiers that have to be removed.  It also suggests that in a large-dimension limit, $\dim(\mathbf{X}) +\dim(\mathbf{Y})$ can be normalized towards noncontextuality, consistent with effective classicality in this limit.

\subsubsection{Quiver representations, gauge-networks and noncommutative geometry} \label{quivers-gauge-networks}

The quiver representation enables to connect the framework we developed in the previous sections to noncommutative geometry. This passes through the introduction of a generalized version of spin-network that goes under the name of gauge-networks \cite{marcolli:14}, which intertwine among the notions of spin-network in quantum gravity and quantum neural network, lattice gauge-theory, and almost-commutative geometries, used for the description of particle physics models in noncommutative geometry.

To approach matter-gravity interaction, in noncommutative geometry one considers locally the product among an ordinary $4$-dimensional spacetime manifold and a finite spectral triple, the latter being the generalization to noncommutative geometry of a compact spin manifold. A spectral triple is defined by the data $(\mathcal{A},\mathcal{H},D)$, which comprise the involutive algebra $\mathcal{A}$, encoding bounded operators represented on a Hilbert space $\mathcal{H}$, and a Dirac operator $D$, fulfilling the compatibility condition that commutators with elements in the algebra are bounded.

When both $\mathcal{A}$ and $\mathcal{H}$ are finite dimensional, one recovers the case of metrically zero dimensional noncommutative spaces. Almost-commutative geometries can be then introduced as product spaces of a finite spectral triple and an ordinary manifold, which can also be seen as a spectral triple. The spectral action, which is a natural action functional on these spaces, enables in the asymptotic expansion to recover the action for gravity coupled to matter. In turn the Lagrangian for matter can be determined by selecting certain finite noncommutative spaces --- see e.g. Refs.~\cite{chamseddine:96,chamseddine:07,chamseddine:10}.

As proposed in Ref.~\cite{marcolli:14}, a procedure similar to the one outlined in \S \ref{g-bundle-metric}, while discretizing manifolds and then assigning geometrical data to $1$-complexes, can be extended to almost-commutative geometries, by assigning finite spectral triples to the vertices of graphs and morphisms to their edges/links. This leads to introduce gauge-networks, which can be thought as quanta of noncommutative space, exactly as spin-network have been thought a quanta of $3$-dimensional space manifolds in the loop approach to quantum gravity \cite{rovelli:04}. It has been delved in the literature for some decades that the Yang-Mills action can be recovered in the continuum limit from the Wilson action cast in terms of holonomies \cite{creutz:83}. It can be further demonstrated --- see e.g.  Ref.~\cite{marcolli:14} --- that the Wilson action can be recovered from the spectral action of the Dirac operator acting on gauge-networks, and that this procedure enables to account for additional terms in the action that provide a lattice gauge theory coupled to the Higgs field in the adjoint representation \cite{lang:81,drouffe:84}.  Some further details are provided in Appendix 4.

\subsection{Application to condensed matter physics}
\noindent
We provide an overview of potential phenomenological applications of the TQFT framework, and in general of quiver representations, recently pointed out within several contexts of condensed matter physics. We focus on few notable examples, provided by topological insulators, fracton phases of matter and trapped atoms and ions. The extended framework we have provided here suggests that a novel link between sequential measurements of condensed systems within a QRF formalism and TQFTs can be pushed forward in the future.

\subsubsection{Topological insulators, string-nets and $BF$ theories}
\noindent
Topological phases of electrons with time-reversal symmetry have been studied over more than a decade. These systems have been called topological insulators \cite{hasan:10,moore:10}, and are induced by the spin-orbit coupling, appearing both in $2$-dimensional \cite{kane:05,bernevig:06,koenig:07} and in $3$-dimensional space \cite{fu:07,moore:07,roy:09,hsieh:08}. Topological insulators are characterized by the properties of being insulating in the bulk, while conversely supporting conducting edge or surface states.

Topological phases of matter have been described resorting to TQFTs, in a way that is reminiscent of the understanding of the symmetry-breaking phases through the Landau-Ginzburg mechanism. In a similar way, topological insulators in two and three space-dimensions can be described by TQFTs, in particular by $BF$ theories, allowing to extend the link to QRFs and sequential measurements.

The electromagnetic response, in presence of weak time-reversal-breaking perturbations which gap the surface states the of $3$-dimensional topological insulators, has been studied in \cite{qi:08,essin:09}. This has subsequently allowed to find an effective field theory description for topological insulator phases in \cite{cho:11}, extending the Chern-Simons effective theory already adopted to explain the quantum Hall effect. This Chern-Simons theory for an abelian field $A$ is expressed by the Lagrangian:
\begin{eqnarray}
\mathcal{L}=\frac{k}{4 \pi} \epsilon^{\mu \nu \rho} A_\mu \partial_\nu A_\rho\,,
\end{eqnarray}
where $k$ is an integer number denoting the Chern-Simons Lagrangian density.

For topological insulators in $2$-dimensional space, we can consider a $BF$ theory over a $(2+1)$-dimensional spacetime manifold \cite{cho:11}, which notoriously recasts in terms of pairs of Chern-Simons actions. We have shown previously that a non abelian $BF$ theory over a $\SU(2)$-bundle is equivalent to the Einstein-Hilbert action for gravity in $(2+1)$-dimensions. This latter theory is indeed topological, even in presence of a cosmological constant, which enables to recast the action in terms of the difference of two Chern-Simons actions. When an abelian $BF$ is considered in $(2+1)$-dimensions, this can describe a fractional quantum Hall state, a phase that can consistently be realized from the pairing of integer quantum Hall states. At the observational level, these models also allow the description of both gapless surfaces, with time-reversal symmetry preserved, and situations for which the time-reversal symmetry is broken and surfaces are gapped \cite{cho:11}. Furthermore, $(3+1)$-dimensional $BF$ theories can be invoked as a description of topological insulators that are $3$-dimensional in space and are endowed with fractional statistics of point-like and line-like objects \cite{cho:11}.

The investigation of topological phases of matter in $(2+1)$-dimension has triggered the introduction of microscopic spin models called Levin-Wen models \cite{wen:05}. These are based on string-nets, namely colored graphs, and consist of a family of rigorously solvable lattice spin Hamiltonians that is applicable to a large part of topological phases in $2$-dimensional space. When the colouring is with respect to a unitary fusion category $\mathcal{C}$, topological phases turn out to be described by TQFTs that are based on the Drinfeld center $Z(\mathcal{C})$ .

The extension to three and higher space dimensions has been
also considered in \cite{wen:05}, using unitary symmetric fusion categories. A further generalization of the Levin-Wen models in $(3+1)$-dimensions that is based on unitary braided fusion categories has been discussed in \cite{walker:12}, in light of the possibility to make contact with fractional topological insulators and projective ribbon permutation statistics. Unitary braided fusion categories are non-trivial product of discrete gauge theories endowed with a unitary modular category. Deploying unitary modular categories instantiates a generalization of the $(3+1)$-dimensional $BF$ theories with cosmological-like term in the action.

Constructing a TQFT in $4$-dimensions is quite an arduous task. In $4$-dimensions there are few existing examples of smooth topological invariants, either classical or quantum, among which the TQFT constructed by Witten in \cite{witten:88}, which is a $\mathcal{N} = 2$ supersymmetric Yang-Mills theory reproducing the Donaldson invariant of $4$-manifolds. There exist also, as notable families of $(3+1)$-dimensional TQFTs,  finite gauge group theories \cite{dijkgraaf:90} and $BF$ theories \cite{baez:96}, both giving rise to topological invariants that are determined by classical homotopy invariants, and TQFTs based on unitary braided fusion categories \cite{walker:12}.

TQFTs in $(3+1)$-dimensions that are topological gauge theories constructed over a compact $G$-bundle, with $G$ compact, can be obtained for instance considering the Pontryagin density over $\mathcal{M}_4$:
\begin{eqnarray}
\mathcal{S}[A]= \frac{k}{8 \pi^2} \int_{\mathcal{M}_4} \langle F \wedge F  \rangle\,,
\end{eqnarray}
which recasts as the Chern-Simons action over the boundary $\mathcal{B}_3$ of $\mathcal{M}_4$, namely:
\begin{eqnarray} \label{WCS}
\mathcal{S}[A]= \frac{k}{8 \pi^2} \int_{\mathcal{B}_3} \langle A \wedge dA +\frac{2}{3} A\wedge A\wedge A  \rangle\,,
\end{eqnarray}
often referred to as Witten-Chern-Simons theory.
The cohomology groups and quantization of topological gauge theories with finite gauge groups have been studied in \cite{dijkgraaf:90}, where a link with $2$-dimensional holomorphic orbifold models has been established. In \cite{dijkgraaf:90} a natural map among $3$-dimensional Chern-Simons gauge theories and $2$-dimensional sigma models has been recovered and then generalized to topological spin theories that are defined on $3$-dimensional manifolds with spin structure and related to $2$-dimensional chiral superalgebras.

$BF$ theories in $(3+1)$-dimensions over a $G$-bundle and with cosmological term are expressed by the action:
\begin{eqnarray} \label{TQFT-cosmo}
\mathcal{S}_{\rm BF}= \int_{\mathcal{M}_4} \langle B \wedge F + \Lambda B \wedge B \rangle\,,
\end{eqnarray}
where $\Lambda$ plays a similar role to the cosmological constant when $G=GL(4,\mathbb{C})$. The partition function of this theory turns out to be:
\begin{eqnarray}
\mathcal{Z}_{\rm BF}(\mathcal{M}_4) = \int \mathcal{D}B \mathcal{D}A\,  e^{\imath \mathcal{S}_{\rm BF}} = e^{\imath \beta \sigma(\mathcal{M}_4)}\,,
\end{eqnarray}
with $\beta$ constant and $\sigma(\mathcal{M}_4)$ signature of $\mathcal{M}_4$, previously introduced.

TQFTs in $(3+1)$-dimensions with a cosmological-like term as in \eqref{TQFT-cosmo} are called Crane-Yetter theories \cite{crane:93a,crane:93b}, and can be thought as an `up-graded' version of $\SU(2)$-symmetric Witten-Chern-Simons theories. On the other hand, Crane-Yetter TQFTs, the braid group of which can be extended to account for more general mapping class groups, induce representations of the motion groups.

Modular categories constructed from Chern-Simons theories entails to recover Pontryagin densities $F\wedge F$ in the bulk, while Chern-Simons theories survive on the boundary. The relation in $(3+1)$-dimensions between $BF$ theories and TQFTs based on unitary modular categories can be understood by inspecting the illustrative case of $BF$ theory over the $\SU(2)$-bundle. Setting $\Lambda=\lambda/12$, the partition function recasts as:
\begin{eqnarray}
\mathcal{Z}_{\rm BF}(\mathcal{M}_4) = \int \mathcal{D}B \mathcal{D}A\,  e^{\imath \int_{\mathcal{M}_4} \langle B \wedge F +\frac{ \lambda}{12} B \wedge B \rangle }   = \frac{2\pi}{\sqrt{\lambda}}  \int  \mathcal{D}A\,  e^{\imath \int_{\mathcal{M}_4} \langle F \wedge F  \rangle } \,,
\end{eqnarray}
which compared with the $SU(2)$-symmetric Witten-Chern-Simons theory provides the identification of the Chern-Simons level with the inverse of the cosmological constant, i.e. $k=12 \pi/\lambda$. The limit for which the semi-classical theory is recovered is attained for $\lambda\rightarrow 0$, entailing $k \rightarrow \infty$.

Excitations that include points and defect loops can be considered, together with their statistics. When unitary modular categories are involved, being these algebraic theories of anyons, topological gauge theories turn out to be coupled to anyons \cite{chen:17,walker:12}. Several models for fractional topological insulators based on TQFTs have been proposed in the literature --- see also \cite{qi:10, swingle:10}. Demanding a breakdown of the time-reversal symmetry or the closing of the gaps, deformations to discrete gauge theories can be achieved for the fractional topological insulators, which are gapped phases of matter for fermions. It has been proposed in \cite{walker:12} that the underlying topological orders in the fractional topological insulators after symmetry-breaking corresponds to $(3+1)$-dimensional TQFTs based on a unitary braided fusion category.

\subsubsection{Quiver representations for fractons}
\noindent
Structural analogies can be also pointed out among quiver gauge theories and some lattice models, which enables to investigate fracton phases of matter. Here the starting point for the analysis is the observation that lattice models can be cast in the quiver formalism, incorporating symmetries of subsystem in the quiver structure. As an effective description of phases of condensed matter physics, lattice models can be introduced. These are either mechanical systems, the degrees of freedom of which are lattice points, edges and faces, realizing an example of $0+1$-dimensional QFT, or spatial arrangements of atom degrees of freedom, in which excitations propagate in real space. Within this second perspective, one may deploy quivers endowed with super-symmetric structure, and feature a linear increase of the moduli spaces with the size of the lattice, reproducing the excitations with limited mobility \cite{razamat:21}.

Supersymmetric quiver gauge theories are QFTs in $(d-1)+1$-dimensions, within which gauge fields are associated to vertices and edges are associated to matter fields. Lattice in the quiver theories can be regarded a the discretization of the spatial manifold. Thus the lattice provides within this perspective the $1$-complex topological structure of the quiver. Quivers hence represent stacks of $d$-dimensional coupled layers that have the dimension of the lattice.

There exist lattices that originate fractonic phases of matter \cite{vijay:15,nandkishore:20}, while retaining interesting properties that include subsystem symmetries, excitations of restricted mobility and a large number of vacua, the logarithm of which scales linearly with the size of the lattice. Lattices exhibiting fractonic phases are also relevant in shedding light on the application of the Wilsonian renomalization group flow from lattice UV theories to the continuum structures in the IR \cite{seiberg:21}. Further details are provided in Appendix 5.

\subsubsection{TQFTs and resonant ions}
\noindent
Engineering atom-laser interactions in linear crystals of trapped ions provide potential quantum simulations of complex physical systems, which in turn are described by effective model connected to TQFTs. Models that provide analog simulation of simple lattice gauge theories have been constructed, with a dynamics that can be mapped onto spin-spin interactions in any dimension \cite{davoudi:19}. In particular, possible instantiations include $(1+1)$-dimensional quantum electrodynamics, $(2+1)$-dimensional abelian Chern-Simons theories coupled to fermions, and $(2+1)$-dimensional pure $Z_2$ gauge theories. Scalable analog quantum simulation of Heisenberg spin models can be hence achieved in any number of space dimensions, while reproducing arbitrary interaction strengths.

At the same time, modifying the representation of gauge fields on qubit degrees of freedom has provided an interesting pathway with profound implications for $\SU(3)$ Yang-Mills gauge theory on a lattice of irreducible representations. This is a strategy that is often followed experimentally to access gauge-variant states, which solely retain a physical meaning. Within this framework, in \cite{ciavarella:21} both irreducible representations in a basis of projected global quantum numbers and in a local basis with controlled-plaquette operators supporting efficient time evolution have been considered.

\section{Conclusions and outlooks}\label{con}

We have shown here that sequential measurements of any finite system $S$ defined as the domain of one or more QRFs induce a TQFT on $S$.  Hence any measurable system can be given a background-free physical theory; one that makes no ``ontic'' assumptions about the system beyond measurability by some finite apparatus.

We have demonstrated, moreover, a functional relationship between two heretofore disparate formalisms: that of networks (CCCDs) of Barwise-Seligman classifiers, and that of finite cobordisms. While cobordisms are familiar in physics, such classifier networks have primarily been applied in natural-language semantics (the original application of \cite{barwise:97}), computational semantics and ontologies \cite{kalfoglou:03,schorlemmer:02,schorlemmer:05,fg:19b,dutta:19}, and the context-dependent theory of inference \cite{fg:22,ffgl:22} --- for a range of other examples cast in the isomorphic category of Chu spaces, see e.g. Ref.~\cite{fg:19a}. Hence our results lend credence to Grinbaum's \cite{grinbaum:17} claim that ``physics is about language'', or the QBist position that physics is about inference \cite{fuchs:13,mermin:18}, or more generally computation \cite{fuchs:03}.

Echoing what was said in the Introduction, the graph network (GN) development of \S\ref{CCCD-cat} leading to defining the category $\mathbf{CCCD}$, affords further applications to ANNs and VAEs (cf. \cite{armenta:21} in terms of quiver representations). Such GNs are embraced by the general theory of cell complexes \cite{hatcher:05} to which these apply (see e.g. \cite{hajij:20}), and our representation of sequential measurements and TQFTs opens the door for future work in this and other directions.

The extended framework we have proposed also sheds new light on topological quantum computation, suggesting novel protocols based on sequential measurements described by QRFs, and bridging with TQNNs. The link between quantum simulations of complex physical systems and TQFTs has been delved in the literature, in which the adaptations of spin-networks have been contemplated. Specifically, following the inspiration that a universal representation of quantum structures exists, in which both the quantum texture of spacetime and quantum information fit, it has been argued \cite{marzuoli:02,marzuoli:05} that the dynamics associated to manipulation in quantum information can be expressed resorting to the recoupling theory of $\SU(2)$ representations. It is important to emphasize that is exactly the quantumness many authors have been referring to that allows quantum computation to be more efficient than classical computation. Indeed, clear advantages are related to some peculiar features of the former framework, including the discreetness incorporated in the representations, the natural appearance of entangled states originated by tensor product structures, and generic information encoding patterns.

Models based on $\SU(2)$ recoupling theory can be regarded as the non-Boolean generalization of the quantum circuit model \cite{marzuoli:02,marzuoli:05}. Indeed, the model involves unitary gates cast in terms of $3nj$ coefficients. These latter in turn interconnect inequivalent binary coupling schemes of $n+1$ representations of $\SU(2)$. At the same time, $j$-gates that satisfy certain algebraic identities can be realized by means of $6j$ symbols. The structure itself of the model shares similarity with partition functions emerging in TQFTs characterizing quantum gravity. This spin-network simulator, as introduced in \cite{marzuoli:05}, is a combinatorial TQFT model for computation, which can be extended to the TQNNs, as proposed in \cite{marciano:20}. A $q$-deformation of the spin-network simulator has been investigated as well \cite{kadar:08}. This naturally encodes braiding structures and enables to construct automata that efficiently perform approximate calculations of topological invariants of knots and $3$-dimensional manifolds, while sharing similarities to string-net models. The development of $q$-deformed version of the quantum simulator hinges toward the possibility to store quantum information fault-tolerantly in physical systems that support fractional statistics and whose Hilbert spaces are insensitive to local perturbations. Double Chern-Simons TQFTs have been then envisaged to represent the quantum sum over histories, along the lines of the partition function that define the Turaev-Viro model. This latter, as we have previously commented, encodes the recoupling theory of $\SU_q(2)$, which allows to identify the Hamiltonian of the Levin-Wen string-net model on the boundary with the corresponding amplitude in the Turaev-Viro model \cite{kadar:08}. 

Along this line of thought, a quantum algorithm that efficiently approximates the colored Jones polynomial was provided in \cite{garnerone:07}, within a construction that is based on the complete solution of Chern-Simons TQFTs and bridges the model to the Wess-Zumino-Witten conformal field theory. The colored Jones polynomial is calculated as the expectation value of the evolution of the $q$-deformed spin-network quantum automaton. This construction enables to realize quantum circuits simulating the automaton, hence to compute expectation values. As argued in \cite{kadar:08}, boundary $2$-dimensional lattice models can be studied within the framework of the $q$-deformed quantum spin-network simulator, and their observables can be takes into account as colored graphs that satisfy braiding relations. In \cite{kitaev:02}, $2$-dimensional quantum system that encode anyonic excitations have been deployed in order to realize a quantum computer. Unitary transformations have been used to move the excitations around one another, along the lines of topological scattering \cite{wilczek:90,lo:93,bais:98,bais:02}. A fault-tolerant computation has been then achieved by pairing excitations in pairs and observing the result of fusion. 

Another fascinating potential application of the extended framework we deepened here concerns biological information processing. Although biological information is usually assumed to be classical, it has been recently shown fully-classical models over-estimate the metabolic free-energy requirements of typical cells by orders of magnitude \cite{fields:21}, suggesting that bulk cellular biochemistry employs coherence as a computational resource.  It is, moreover, shown that single-molecule decoherence models do not rule this out.  This framework can be tested by experiments designed to detect Bell-inequality violations originated by perturbations of sister cells that have been recently-separated. The fact that both intra- and intercellular communication seems to require quantum theory suggests, in an analog model perspective, an implementation of sequential measurements by QRFs, which in turn can be recast in terms of a functorial evolution among boundary states in TQFTs. An explicit construction of canonical mammalian cortical neurons as hierarchies of QRFs has recently been proposed \cite{fgl-neurons}.  
Similar directions are pursued in \cite{smolin:20} in the quest for a ``quantum neuron'' with allusions to the Chern-Simons theory as we have considered here, along with the topology of compact 4-manifolds which may engage instantons for the YM functional. Indeed, it would be interesting to realize a formulation
of gauge potential instantons, and e.g. the gradient flow of the Chern-Simons functional in other dimensions, as these are transmigrated to action potentials and to various other topological excitations as they arise in neurophysiology; \cite{ori:22} discusses some related steps in this direction\footnote{We are grateful to Dalton Sakthivadivel for pointing out this latter reference to us.}.

\section*{Acknowledgments}
C.F. and A.M.~wish to acknowledge Filippo Fabrocini, Niels Gresnigt, Matteo Lulli and Emanuele Zappala for inspiring discussions on TQNNs.
A.M.~wishes to acknowledge support by the Shanghai Municipality, through the grant No.~KBH1512299, by Fudan University, through the grant No.~JJH1512105, the Natural Science Foundation of China, through the grant No.~11875113, and by the Department of Physics at Fudan University, through the grant No.~IDH1512092/001.

\section*{Appendices}

\subsection*{Appendix 1}

We can generalize from deterministic to probabilistic QRFs by letting the $j^{th}$ elementary QRF implement:

\begin{equation} \label{pQRF2}
\{0, 1 \}^m \xrightarrow{\{\overrightarrow{g_i} \}} ~p_{ij}~ \xrightarrow{\{\overleftarrow{g_i} \}} \{0, 1 \}^m,
\end{equation}
\noindent
where $\forall i, 0 \leq p_{ij} \leq 1$, $\sum_i p_{ij} = 1$ and the  $\overrightarrow{g_i}$ and the $\overleftarrow{g_i}$ satisfy:
\begin{equation} \label{pQRF3a}
\forall i, \exists ! e_i \in \{0, 1 \}^m, \overrightarrow{g_i} \in \{\overrightarrow{g_i} \}, ~\mathrm{and}~ \overleftarrow{g_i} \in \{\overleftarrow{g_i} \}
 ~\mathrm{such ~that}~ \overrightarrow{g_i} : e_i \mapsto p_{ij} ~\mathrm{and}~ \overleftarrow{g_i}: p_{ij} \mapsto e_i.
\end{equation}
\noindent
Here $e_i$ is a pointer value, represented as an $m$-bit string, that is obtained with probability $p_{ij}$ when the $j^{th}$ elementary QRF acts on an input $m$-bit string.  A nonelementary probabilistic QRF that acts on $m$-bit strings is an ordered sequence of $m$ elementary QRFs acting on $m$-bit strings, subject to the normalization requirement that:
\begin{equation} \label{pQRF4}
(1/m) \sum_{ij} p_{ij} = 1.
\end{equation}
\noindent
Coarse-graining to a QRF comprising $n < m$ elementary QRFs replaces $m$ with $n$ in Eq. \eqref{pQRF4}, while still summing $i$ from $1$ to $m$.

\subsection*{Appendix 2}

As in \cite{barwise:97}, classifiers in $\mathbf{Chan}$ can be appended with the structure of a \emph{local logic} $\mathcal{L}(\mathcal{A})$ relating subsets of tokens and types to each classifier. In basic terms, the local logic entails a \emph{(regular) theory} and the semantic consistency of
information flow is regulated by \emph{logic infomorphisms} between classifiers \cite[\S9.1, \S12.1]{barwise:97}, to which we refer the reader for complete details. These concepts have also been extensively reviewed in \cite{fg:19a,fg:22} (see also \cite{fg:20,fgm:21,ffgl:22}). An important ingredient of a (regular) theory is that of a \emph{sequent}, which is defined as an ``implication'' relation between subsets of types.

We recall this definition:
\begin{definition}\label{prob-1}
Two subsets $M, N \subseteq Typ(\mathcal{A})$ are related by a {\em sequent} $M \models_{\mathcal{A}} N$ if $\forall x \in Tok(\mathcal{A}), x \models_{\mathcal{A}} M \Rightarrow x \models_{\mathcal{A}} N$.
\end{definition}
\noindent
Via the sequent relation, the local logic $\mathcal{L}(\mathcal{A})$ effectively arranges the types of $\mathcal{A}$ into a hierarchy; any classifier can be constructed with its own local logic in this way by adding types as needed.  Considering all classifiers to be extended to local logics, Diagrams \eqref{ccd-1} -- \eqref{cccd-2} (and relevant constructions following) can be considered diagrams of local logics by requiring each of the infomorphisms to be a ``logic infomorphism'' that preserves the sequent structure. Logic infomorphisms are, effectively, embeddings of type hierarchies; e.g diagram \eqref{cccd-2} and can be viewed as an embedding into a ``top-level'' type hierarchy $\mathbf{C^\prime}$ that assigns an overall semantics to its inputs, followed by encodings of this top-level hierarchy into some componential representation. Loosely speaking, this and the other related diagrams may be seen as semantically encoding the `read-write' operations of the QRF.

Boolean operations can be defined on classifiers as the attendant local logic $\mathcal{L}(\mathcal{A})$ \cite[\S7.1]{barwise:97}.  To extend the type hierarchy defined by $\mathcal{L}(\mathcal{A})$ to a hierarchical Bayesian inference, we extend $\mathcal{L}(\mathcal{A})$ to a probabilistic logic by relaxing the sequent relation to require only that if $x \models_{\mathcal{A}} M$, there is some probability $P(N|M)$ that $x \models_{\mathcal{A}} N$ \cite{allwein:04}.  We can then write:
\begin{equation}\label{prob-2}
M \models_{\mathcal{A}}^{P} N =_{def} P( M \vert N)\,.
\end{equation}
Given the hierarchy in question, traversing upwards in the hierarchy imposes multiple conditioning on each `lower-level' probability distribution; traversing downwards sequentially unpacks this conditioning.  In fundamental Bayesian terms, $M$ above can be regarded as a previous event, whether observed or conjectured, and $N$ as a currently observed datum, in which case $P(M)$ becomes the prior, and $P(N)$ the evidence, that together generate a prediction. Given the likelihood $P(N \vert M)$ as the conditional obtained from weakening the sequent via Eq. \eqref{prob-2}, Bayes' theorem specifies this conditional as the posterior:
\begin{equation}\label{prob-3}
P(M \vert N) = \frac{P(N\vert M) P(M)}{P(N)}.
\end{equation}
Seen within our diagrammatic framework, a typical portion of a CCD computing a hierarchical Bayesian inference from a set of (posterior) observations $\mathcal{A}_i$ to an outcome $\mathbf{C}'$, looks like:
\begin{equation}\label{ccd-prob-1}
\begin{gathered}
\xymatrix@C=4pc{&\mathbf{C^\prime} &  \\
\mathcal{A}_1 \ar[ur]^{p_{10}(\cdot \vert \cdot)} \ar[r]_{p_{12}(\cdot \vert \cdot)}& \mathcal{A}_2 \ar[u]_{p_{20}(\cdot \vert \cdot)} \ar[r]_{p_{23}(\cdot \vert \cdot)} & \ldots ~\mathcal{A}_k \ar[ul]_{p_{k0}(\cdot \vert \cdot)}
}
\end{gathered}
\end{equation}
The above outline of ideas is the basis of the context-dependent Bayesian hierarchical structure as it is seen $\mathbf{Chan}$ (see \cite{fg:22,ffgl:22} for details).

\subsection*{Appendix 3}

The Chern character over a vector bundle $E$ is defined, in terms of the characteristic classes $\tau_k(E)$ over $E$, as
\begin{equation} \label{chern-character}
{\rm ch}(E)= {\rm rank} E + \tau_1(E) +\frac{1}{2!}  \tau_2(E) + \dots + \frac{1}{k!} \tau_k(E) +\dots\,,
\end{equation}
or alternatively, in terms of the Chern classes $c_k(E)$ over $E$, as
\begin{equation} \label{chern-character-classes}
{\rm ch}(E)= {\rm dim} E + c_1(E) +\frac{1}{2}  (c_1(E))^2 - c_2(E) +\dots\,,
\end{equation}
where dots denote terms of degree higher than four. Applying these definitions to the quaternionic line bundles $W_+$ and $W_-$, for which the first Chern classes are vanishing, one finds that ${\rm ch}(W_+)=2- c_2(W_+)$ and ${\rm ch}(W_-)=2- c_2(W_-)$. Using the fact that $W_+$ is isomorphic to its dual bundle, we can then write that
\begin{equation} \label{chern-iso}
TM\otimes \mathbb{C}= {\rm Hom}_{\mathbb{C}}(W_+, W_-) \cong W_+ \otimes W_-
\,,
\end{equation}
from which it follows the expression of the Chern class
\begin{equation} \label{chern-class-TMC}
{\rm ch}(TM\otimes \mathbb{C})= {\rm ch}(W_+) {\rm ch}(W_-) = 4 - 2 c_2(W_+) - 2 c_2(W_-) \,.
\end{equation}

Characteristic classes can be defined for quaternionic and real vector bundles. For instance, a quaternionic line bundle $E$ can be looked at as a complex vector bundle of rank $2$. Since the right multiplication by ${\bf j}$ is a conjugate-linear isomorphism from $E$ to itself, we immediately find that the first Chern class of $E$ vanishes, while the second Chern class does not, representing a relevant invariant of $E$.
Note also that a real vector bundle $E$ of rank $k$ can be complexified, giving rise to a complex vector bundle $E \times \mathbb{C}$ of rank $k$. Being the complexification isomorphic to its conjugate, its first Chern class vanishes but not the second class, which allows to define the first Pontryagin class of $E$ as $p_1(E)=-c_2(E\otimes \mathbb{C})$. Furthermore, for the first Pontryagin class it holds that
\begin{equation}
p_1(TM)= - c_2(TM \otimes \mathbb{C}) = - 2 c_2 (W_+) - 2 c_2 (W_-)= -2 c_2(W)\,.
\end{equation}
As a consequence, if $E$ is a $\SU(2)$-bundle, then one finds
$(2-c_2(E)+\dots)^2=4-c_2(E \otimes E)+\dots $, which implies $c_2(E)=\frac{1}{4}c_2(E \otimes E)$.

Finally, the $\hat{A}$-polynomial is defined by the power series
\begin{equation}
\hat{A}(TM)= 1-\frac{1}{24} p_1(TM) + \dots \,,
\end{equation}
dots denoting terms in higher order Pontryagin classes, which are all vanishing in dimension $d\leq 4$.

\subsection*{Appendix 4}

Being more specific about the construction of gauge-networks and their relation to noncommutative geometry we referred to in \S \ref{quivers-gauge-networks}, we introduce a category $\mathcal{C}_0$ that comprises finite spectral triples, the morphisms of which are pairs of an algebra morphism and a unitary operator provided with a compatibility condition. A subcategory then exists that includes finite spectral triples with trivial Dirac operator. We may start introducing, for finite-dimensional algebra representations and finite spectral triples, the category of finite-dimensional algebras, carrying representation on Hilbert spaces.

\begin{definition} \label{category-finite-dimensional-algebras-1}
The category $\mathcal{C}_0$ has as objects the triple  $(\mathcal{A}, \lambda, \mathcal{H})$, with $\mathcal{A}$ denoting a finite-dimensional $C*$-algebra and $\lambda$ denoting $\star$-representation on an inner-product space $\mathcal{H}$.  A morphism in Hom$((\mathcal{A}_1, \mathcal{H}_1),(\mathcal{A}_2, \mathcal{H}_2))$ can be realized by recovering a map $(\phi,L)$, which comprises a unital $\star$-algebra map $\phi:\mathcal{A}_1 \rightarrow \mathcal{A}_2$ and a unitary map $L:\mathcal{M}_1 \rightarrow \mathcal{M}_2$ such that, for each $A\in \mathcal{A}_1$, $L \lambda_1(A) L^*=\lambda_2(\phi(A))$.
\end{definition}

$\mathcal{C}_0$ could have been also introduced as the category of finite spectral triples $(\mathcal{A}, \mathcal{H} , D)$, the Dirac operator of which is vanishing. The extension of the category $\mathcal{C}$, of which $\mathcal{C}_0$ is a subcategory, in order to account for the Dirac operator immediately follows:

\begin{definition} \label{category-finite-dimensional-algebras-2}
The objects of the category $\mathcal{C}$ are the finite spectral triples $(\mathcal{A}, \lambda, \mathcal{H}, D)$, with $\mathcal{A}$ denoting a finite-dimensional $C*$-algebra, $\lambda$ denoting $\star$-representation on an inner-product space $\mathcal{H}$, and $D$ denoting a symmetric linear operator on $\mathcal{H}$.  A morphism in Hom$((\mathcal{A}_1, \mathcal{H}_1, D_1),(\mathcal{A}_2, \mathcal{H}_2, D_2))$ can be realized by recovering a map $(\phi,L)$, which comprises a unital $\star$-algebra map $\phi:\mathcal{A}_1 \rightarrow \mathcal{A}_2$ and a unitary map $L:\mathcal{M}_1 \rightarrow \mathcal{M}_2$ such that, for each $A\in \mathcal{A}_1$, it holds that $L \lambda_1(A) L^*=\lambda_2(\phi(A))$, and $L D_1L^*=D_2$ is also fulfilled.
\end{definition}

If we now introduce oriented graphs $\Gamma$, and regard these lettere as quivers, we may describe gauge theories representing quivers in $\mathcal{C}$.

\begin{definition} \label{quiver-generalized}
A representation $\pi$ of a quiver $\Gamma$ in a category corresponds to an association of objects $\pi_v$ in that category to each vertex $v \in V$ of $\Gamma$, and morphism $\pi_e$ in Hom$(\pi_{s(e)},\pi_{t(e)})$ to each oriented edge $e\in E$ of $\Gamma$.
Two representations $\pi$ and $\pi'$ of a quiver $\Gamma$ that are in the same category are equivalent if $\pi_v=\pi_v'$ for every $v\in V$, and there exists a family of invertible morphisms $\phi_v\in$Hom$(\pi(v),\pi(v))$ labelled by $v$ that are such that $\pi_e=\phi_{t(e)}\circ \pi_e' \circ \phi_{s(e)}^{-1}$.
\end{definition}

If we regard a quiver $\Gamma$ as a category, a representation of $\Gamma$ is provided by a functor $\pi:\Gamma \rightarrow \mathcal{C}$. If a category $\mathcal{C}$ ($\mathcal{C}_0$) is considered, a representation $\pi$ of the quiver $\Gamma$ assigns spectral triples $(\mathcal{A}_v,\mathcal{H}_v,D_v)$ ($D_v=0$ for $\mathcal{C}_0$) to each vertex $v\in V$ of $\Gamma$ and pairs $(\phi, L)\in$Hom$((\mathcal{A}_{s(e)}, \mathcal{H}_{s(e)}, D_{s(e)}),(\mathcal{A}_{t(e)}, \mathcal{H}_{t(e)}, D_{t(e)}))$ to each edge $e\in E$ of $\Gamma$. The space of the functorial quiver representations $\pi:\Gamma  \rightarrow \mathcal{C}$ can be denoted with $\mathcal{X}$. The group $\mathcal{G}$ is the collection of invertible morphisms $(\phi_v,\psi_v)$ for each vertex $v\in V$ of $\Gamma$. The form of the space $\mathcal{X}$ and the quotient $\mathcal{X}/\mathcal{G}$ has been explicitly determined in Ref.~\cite{marcolli:14}.

The Hilbert space of a quantum theory can be constructed on the classical configuration space $\mathcal{X}/\mathcal{G}$, which is a measure space equipped with products and sums of the Haar measures on the unitary groups. This enables to consider $L^2(\mathcal{X})$. The Hilbert space hence introduced is endowed with an action of $\mathcal{G}$, which is induced by the action of $\mathcal{G}$ on $\mathcal{H}$. An explicit description of the space $L^2(\mathcal{X}/\mathcal{G})\simeq L^2(\mathcal{X})^\mathcal{G}$ of functions that are $\mathcal{G}$-invariant on $\mathcal{X}$ can be provided.  The following theorem can be proved \cite{marcolli:14}:

\begin{theorem} \label{gauge-thm2}
For $G$ a compact Lie group, it can be proved that the unitary map $L$ introduces an isomorphism of $G \times G$ representations
\begin{equation}
L^2(G)\simeq \bigotimes_{\rho \in \hat{G}} \rho \otimes \rho^*\,,
\nonumber
\end{equation}
with element $(g_1,g_2)\in G \times G$ acting according to the rules: for $f\in L^2(G)$, it holds that $((g_1,g_2)f)(x)=f(g_1^{-1}x g_2)$; for $y_1\in \rho$ and $y_2\in \rho^*$, it holds that $(g_1,g_2)(y_1 \otimes y_2) = \rho(g_1)y_1 \otimes \rho^*(g_2)y_2$.
\end{theorem}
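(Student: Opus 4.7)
The plan is to recognize the asserted isomorphism as a form of the Peter--Weyl theorem for a compact Lie group $G$. I note at the outset that, since $L^2(G)$ is generally infinite-dimensional while each $\rho \otimes \rho^*$ is finite-dimensional, the symbol $\bigotimes$ in the statement must be read as the Hilbert-space direct sum $\bigoplus_{\rho \in \hat{G}} \rho \otimes \rho^*$; the $G \times G$ action then preserves this decomposition block by block, which is what the asserted equivariance expresses.

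First I would construct the candidate unitary explicitly via matrix coefficients. For each irreducible unitary representation $\rho \in \hat{G}$ on the finite-dimensional Hilbert space $V_\rho$, define
\begin{equation}
\Phi_\rho : V_\rho \otimes V_\rho^{\ast} \;\longrightarrow\; L^2(G), \qquad
(v \otimes \xi)\;\longmapsto\; \bigl(g \mapsto \sqrt{\dim \rho}\,\xi(\rho(g^{-1}) v)\bigr),
\end{equation}
extended linearly, and set $\Phi := \bigoplus_{\rho} \Phi_\rho$. The normalization $\sqrt{\dim \rho}$ is chosen so that the Schur orthogonality relations make $\Phi_\rho$ an isometric embedding.

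Second, I would verify $G \times G$-equivariance by a direct computation on simple tensors. Pulling back $((g_1, g_2) f)(x) = f(g_1^{-1} x g_2)$ through $\Phi_\rho$, one gets $\sqrt{\dim \rho}\,\xi(\rho(g_2^{-1} x^{-1} g_1) v) = \sqrt{\dim \rho}\,(\rho^{\ast}(g_2) \xi)\bigl(\rho(x^{-1})(\rho(g_1) v)\bigr)$, which is precisely $\Phi_\rho(\rho(g_1) v \otimes \rho^{\ast}(g_2) \xi)$; this uses only the homomorphism property of $\rho$ and the definition of the dual representation.

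Third, isometry (and hence injectivity) of $\Phi$ follows from the Schur orthogonality relations: matrix coefficients of inequivalent irreducibles are orthogonal in $L^2(G)$ against normalized Haar measure, and within a fixed $\rho$ one has $\int_G \overline{\rho_{ij}(g)}\,\rho_{kl}(g)\, d\mu(g) = \delta_{ik}\delta_{jl}/\dim\rho$, so the chosen normalization upgrades $\Phi_\rho$ to an isometry onto its image and the images for distinct $\rho$ are mutually orthogonal.

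The main obstacle will be surjectivity, i.e.\ density of matrix coefficients in $L^2(G)$. I would invoke Stone--Weierstrass on the $\ast$-subalgebra $\mathcal{E} \subset C(G)$ spanned by all matrix coefficients: since $G$ is a compact Lie group, it admits a faithful finite-dimensional unitary representation, so $\mathcal{E}$ separates points of $G$; the trivial representation contributes the constants; and $\overline{\rho_{ij}} = \rho^{\ast}_{ij}$ gives closure under complex conjugation. Hence $\mathcal{E}$ is uniformly dense in $C(G)$, and therefore dense in $L^2(G, \mu)$ since $C(G) \subset L^2(G)$ is dense in the Haar $L^2$-norm. Combining the three steps yields that $\Phi$ is a unitary $G \times G$-equivariant isomorphism, establishing the claim. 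A subsidiary verification is that the $(g_1, g_2)$-action written on $L^2(G)$ is genuinely unitary, which is immediate from left/right invariance of the Haar measure on a compact group.
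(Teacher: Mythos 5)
Your proof is correct, but it is worth noting that the paper itself offers no argument for this statement at all: Theorem \ref{gauge-thm2} is simply quoted from the gauge-networks paper of Marcolli and van Suijlekom with the phrase ``it can be proved,'' and the content is nothing other than the Peter--Weyl theorem for the two-sided regular representation of $G \times G$. So you have supplied a self-contained proof where the paper supplies only a citation. Your argument is the standard one and all the steps check out: your reading of $\bigotimes_{\rho}$ as the Hilbert-space direct sum $\bigoplus_{\rho}$ is the right correction of what is at best an abuse of notation in the statement; the equivariance computation $(g_1^{-1}xg_2)^{-1}=g_2^{-1}x^{-1}g_1$ combined with $\rho^*(g_2)\xi=\xi\circ\rho(g_2^{-1})$ is exactly right; and the normalization $\sqrt{\dim\rho}$ together with Schur orthogonality does make each $\Phi_\rho$ an isometry with mutually orthogonal images. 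The only point I would flag is a mild logical dependency in the surjectivity step: the fact that a compact Lie group admits a faithful finite-dimensional unitary representation (so that matrix coefficients separate points) is itself usually derived from the Peter--Weyl theorem in the general compact-group setting. For compact \emph{Lie} groups, which is the hypothesis here, the embedding into some ${\rm U}(n)$ can be established independently, so your Stone--Weierstrass route is legitimate; but you should either cite that embedding theorem explicitly or note that for an arbitrary compact group one would instead argue via the spectral theorem for compact self-adjoint convolution operators. With that caveat made explicit, your proof is complete and, unlike the paper, actually proves the statement.
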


Within the configuration space of quiver representations in the category of finite spectral triples, gauge-networks can be introduced as an orthonormal basis of a corresponding Hilbert space.

\begin{definition} \label{gauge-network}
A gauge-network is provided by the data $\{ \Gamma, (\mathcal{A}_v, \lambda_v, \mathcal{H}_v, \iota_v)_v, (\rho_e, \mathbb{B}_e)_e\}$, with: 1) $\Gamma$ is an oriented graph; 2) $(\mathcal{A}_v, \lambda_v, \mathcal{H}_v)$ is an object in the category $\mathcal{C}_0$ for each $v\in V$ of $\Gamma$; 3) for each $e\in E$ of $\Gamma$, $\rho_e$ is a representation of the group $G_e$; 4) for each $e\in E$ of $\Gamma$, $\mathbb{B}_e$ is a Bratelli diagram for $\star$-algebra maps of the type $\mathcal{A}_{s(e)}\rightarrow \mathcal{A}_{t(e)}$, with subdiagrams $\tilde{\mathbb{B}}$ for $\tilde{\mathcal{A}}_{s(e)}\rightarrow \tilde{\mathcal{A}}_{t(e)}$ and $\mathbb{B}_0$ for ${\mathcal{A}}_{s(e)}\rightarrow {\rm ker}\lambda_{t(e)}$; 5) for each vertex $v\in V$, the intertwiners of the group $\mathcal{G}_v \simeq U(\mathcal{A}_v) >\!\!\!\lhd \,S(\mathcal{A}_v)$ exist such that
$$
\iota_v: \rho_{e_1'} \otimes \cdots  \otimes \rho_{e_k'} \rightarrow \rho_{e_1} \circ \phi_\mathbb{B} \otimes \cdots  \otimes \rho_{e_k}  \circ \phi_\mathbb{B}  \,,
$$
with $e_1', \dots, e_k'$ edges incoming to $v$, and $e_1, \dots, e_k$ edges outgoing from $v$, and the isotropy group $K_{\mathbb{B}_e}=\mathcal{U}({\rm ker} \, \lambda_{t(e)})_{\mathbb{B}_{e0}}$.
\end{definition}

Given Definition \ref{gauge-network}, abelian spin-networks and ${\rm U}(N)$ spin-networks can be introduced as examples --- see e.g. \cite{marcolli:14}.

A notion of evolution can be introduced for gauge-networks, once an Hamiltonian operator $\mathbb{H}$ on $L^2(\mathcal{X})$ is introduced as the sum of invariant Laplacians on homogeneous spaces. Indeed, since the Casimir operators of the Lie groups $\mathcal{U}(\mathcal{A}_{t(e)})$ are $\mathcal{U}(\mathcal{A}_{t(e)})$-biinvariant, on each $L^2$ we can introduce Laplacian operators $C^2_{\mathcal{U}(\mathcal{A}_{t(e)})}$. The following proposition has been then proved in Ref.~\cite{marcolli:14}:

\begin{proposition} \label{evolution}
Once a pair $\{ \mathcal{A}_v, \mathcal{H}_v \}$ has been picked up at each vertex $v\in V$ of $\Gamma$ and Bratelli diagrams $\mathbb{B}_e$ at each edge $e\in E$ of $\Gamma$ have been selected, the tensor product or quadratic Casimirs $C^2_{\mathcal{U}(\mathcal{A}_{t(e)})}$
\begin{equation}
\sum_e \left( 1\!\!1 \otimes \cdots \otimes C^2_{\mathcal{U}(\mathcal{A}_{t(e)})} \otimes \cdots \otimes 1\!\!1 \right)
\,,
\end{equation}
is a self-adjoint operator on the finite tensor product $\bigotimes_e L^2(\mathcal{U}(\mathcal{A}_{t(e)})/\mathcal{U}(\lambda_{t(e)})_{ \mathbb{B}_{e0} })$.

The sum of the operators hence defined is a self-adjoint operator $\mathbb{H}$ on
$$L^2(\mathcal{X})\simeq \bigoplus_{\{ \mathcal{A}_v, \mathcal{H}_v \}} \bigoplus_{\mathbb{B}_e} \bigotimes_e L^2(\mathcal{U}(\mathcal{A}_{t(e)})/ \mathcal{U}(\lambda_{t(e)})_{\mathbb{B}_{e0}} )\,.$$
\end{proposition}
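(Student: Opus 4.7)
The plan is to reduce the proposition to three standard ingredients: (i) essential self-adjointness of a quadratic Casimir on $L^2$ of a compact homogeneous space; (ii) self-adjointness of a finite sum of mutually commuting tensor-factor operators; and (iii) stability of self-adjointness under orthogonal direct sums. Together with the Peter-Weyl/Theorem \ref{gauge-thm2} identifications already available, this decomposes the argument into routine functional-analytic bookkeeping.

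First I would fix, for each edge $e$, the compact Lie group $G_e := \mathcal{U}(\mathcal{A}_{t(e)})$ and its closed isotropy subgroup $K_e := \mathcal{U}(\ker\lambda_{t(e)})_{\mathbb{B}_{e0}}$, so that $G_e/K_e$ is a compact homogeneous space. Invoking the Peter-Weyl decomposition already used in Theorem \ref{gauge-thm2}, one has the orthogonal splitting
\begin{equation}
L^{2}(G_e/K_e) \;\cong\; \bigoplus_{\rho\in\hat{G}_e} \rho \otimes (\rho^{*})^{K_e},
\end{equation}
on each finite-dimensional summand of which $C^{2}_{G_e}$ acts as multiplication by its (real, non-negative) Casimir eigenvalue $c_\rho$, by the bi-invariance of $C^{2}_{G_e}$. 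Consequently $C^{2}_{G_e}$ is manifestly essentially self-adjoint on the algebraic direct sum of isotypic components, with closure a positive self-adjoint operator possessing a discrete real spectrum.

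Second, each edge-wise Casimir embeds in the tensor product $\bigotimes_e L^{2}(G_e/K_e)$ as $1\otimes\cdots\otimes C^{2}_{G_e}\otimes\cdots\otimes 1$; since distinct edges act on distinct tensor factors, these operators mutually commute strongly. Taking the algebraic tensor product of the edge-wise Peter-Weyl cores produces a common dense invariant core on which the finite sum of commuting essentially self-adjoint operators is itself essentially self-adjoint, and whose closure is the desired self-adjoint operator; this yields the first half of the claim. For the second half, the splitting
\begin{equation}
L^{2}(\mathcal{X}) \;\cong\; \bigoplus_{\{\mathcal{A}_v,\mathcal{H}_v\}}\;\bigoplus_{\mathbb{B}_e}\;\bigotimes_e L^{2}(G_e/K_e)
\end{equation}
is an orthogonal direct sum indexed by the discrete choices of finite spectral triples at vertices and Bratteli diagrams at edges. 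As each summand carries the self-adjoint operator just constructed, the orthogonal direct sum $\mathbb{H}$ is self-adjoint on the algebraic sum of the corresponding cores.

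The main obstacle will be in the second step rather than the first: although ``a sum of commuting self-adjoint operators is self-adjoint'' is a textbook fact, one must genuinely verify that the chosen core is simultaneously invariant and essential for every Casimir factor, and that the intertwiner constraints $\iota_v$ at each vertex (Definition \ref{gauge-network}) do not force passage to a proper subspace on which the Casimirs fail to remain essentially self-adjoint. This reduces to checking that the $\iota_v$ commute with the left $G_e$-action on the adjacent edge factors, which is automatic from bi-invariance of $C^{2}_{G_e}$; once made explicit, the argument closes.
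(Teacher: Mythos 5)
The paper does not actually prove Proposition \ref{evolution}: it states the result and attributes the proof entirely to Ref.~\cite{marcolli:14} (``The following proposition has been then proved in Ref.~\cite{marcolli:14}''). So there is no in-paper argument to compare against; your proposal supplies a self-contained verification where the authors give only a citation. On its own merits your outline is essentially sound, and it follows the same Peter--Weyl route used in the cited source: each $C^{2}_{\mathcal{U}(\mathcal{A}_{t(e)})}$ acts as a real scalar on each finite-dimensional isotypic component of $L^{2}(G_e/K_e)$, the edge-wise operators act on distinct tensor factors, and the whole Hilbert space splits into an orthogonal direct sum of finite-dimensional joint isotypic subspaces on which the finite sum acts as a Hermitian (indeed scalar) matrix; an operator that is an orthogonal direct sum of finite-dimensional self-adjoint blocks is essentially self-adjoint on the algebraic span of those blocks, and the further direct sum over the discrete index sets $\{\mathcal{A}_v,\mathcal{H}_v\}$ and $\{\mathbb{B}_e\}$ preserves self-adjointness. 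Two small points to tighten: first, do not lean on the slogan that ``a sum of commuting self-adjoint operators is self-adjoint'' --- that is false in general (commutativity of unbounded operators is delicate); what saves you here is precisely the simultaneous block-diagonalization into finite-dimensional invariant subspaces, which you should make the explicit engine of the argument. Second, your worry about the intertwiners $\iota_v$ is not needed for the statement as written, since $\mathbb{H}$ is defined on all of $L^{2}(\mathcal{X})$ rather than on the $\mathcal{G}$-invariant subspace; bi-invariance of the Casimirs only becomes relevant if one wants $\mathbb{H}$ to descend to $L^{2}(\mathcal{X}/\mathcal{G})$, which is a separate (and easy) remark.
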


Graphs $\Gamma$, here considered as quivers, can be embedded in a Riemannian spin-manifolds $\mathcal{M}$, over which gauge theories are defined. This gives rise to a twisted Dirac operator on $\Gamma$. It has been shown in \cite{marcolli:14} that the corresponding spectral action reduces to the Wilson action of lattice gauge theory.

\subsection*{Appendix 5}

Quiver theories supported on a $2$-dimensional toroidal lattice have been considered in \cite{razamat:21} for a $\mathcal{N}=1$ supersymmetric theory with in $d=4$. The gauge group is $SU(N)^{L_1\times L_2}$, with $L_1$ nodes of the lattice along the vertical axis and $L_2$ nodes of the lattice along the horizontal axis, while the matter content comprises chiral superfields $\mathfrak{U}_p$, $\mathfrak{B}_p$ and $\mathfrak{C}_p$ that transform in the fundamental representation of one of the $L_1\times L_2$ $\SU(N)$, and are associated to three types of edges, one vertical and two diagonals, forming triangular plaquettes $p$ in the lattice. The interaction is expressed in terms of $\mathcal{N}=1$ $\SU(N)$ supersymmetric gauge fields associated to each vertex, and involves fields that end or are originated at that vertex. The superpotential interaction $W$ can be derived summing over the contributions localized at each triangular plaquette,
\begin{equation}
W=\sum_p \lambda_p {\rm Tr} \, \mathfrak{U}_p\, \mathfrak{B}_p\,\mathfrak{C}_p\,,
\end{equation}
with $\lambda_p$ coupling constants. Quiver theories of this type are conformal \cite{seiberg:21} in the UV. Combinations of couplings can be recovered from the conformally invariant theory following the renormalization group flow \cite{vijay:15,nandkishore:20}. Additional global symmetries are also present, with dimension of the symmetry group related to the dimension of the lattice. There are also subsystem symmetries of the lattice that can be realized, and are responsible for the gluing of the lattice into a torus.

Gauge-invariant excitations on the lattice can be considered, which are local in that their operators are expressed in terms of fields localized on a single edge of the lattice. Chiral baryons constructed from the edge of the lattice provide such a set of operators of the type
\begin{eqnarray}
A_{ij}
&=&
\epsilon^{a_1\dots a_N} \, \epsilon_{b_1\dots b_N} \prod_{l=1}^N \, (\mathfrak{U}_{ij})^{b_l}_{a_l} \,,
 \\
B_{ij}
&=&
\epsilon_{a_1\dots a_N} \, \epsilon^{b_1\dots b_N} \prod_{l=1}^N \, (\mathfrak{B}_{ij})^{a_l}_{b_l}  \,,
\\
C_{ij}
&=&
\epsilon_{a_1\dots a_N} \, \epsilon^{b_1\dots b_N} \prod_{l=1}^N \, (\mathfrak{C}_{ij})^{a_l}_{b_l}  \,,
\end{eqnarray}
where gauge indices are contracted and fields and baryons are labelled by their colouring symmetries. It has been shown in Ref.~\cite{razamat:21}, by inspecting their correlation functions, that baryonic excitations are immobile on the lattice, while pairs of $B$ and $C$ baryon excitations (as well as $B$ and $A$ or $C$ and $A$) are in principle movable along the quiver lattice directions, a similar result holding also for triplet of $A$, $B$ and $C$ baryons.

\end{document}